\numberwithin{equation}{section}
\theoremstyle{plain}
\newtheorem{theorem}{Theorem}
\newtheorem{corollary}[theorem]{Corollary}
\newtheorem{definition}[theorem]{Definition}
\newtheorem{lemma}[theorem]{Lemma}
\newtheorem{proposition}[theorem]{Proposition}
\newtheorem{remark}[theorem]{Remark}
\begin{document}

\begin{center}
  \Large \bf Asymptotics of the time-discretized log-normal SABR model:
The implied volatility surface
\end{center}

\author{}
\begin{center}
Dan Pirjol\,
\footnote{School of Business, Stevens Institute of Technology, Hoboken, NJ 07030, United States of America; dpirjol@gmail.com},
Lingjiong Zhu\,
\footnote{Department of Mathematics, Florida State University, 1017 Academic Way, Tallahassee, FL-32306, United States of America; zhu@math.fsu.edu}
\end{center}

\begin{center}
 \today
\end{center}

\begin{abstract}
We propose a novel time discretization for the log-normal SABR model
which is a popular stochastic volatility model that is 
widely used in financial practice. Our time discretization is a variant of the 
Euler-Maruyama scheme. We study its asymptotic
properties in the limit of a large number of time steps under a certain asymptotic regime
which includes the case of finite maturity, small vol-of-vol and large initial volatility
with fixed product of vol-of-vol and initial volatility.
We derive an almost sure limit and a large deviations result for the log-asset 
price in the limit of large number of time steps. We derive an exact
representation of the implied volatility surface for
\textit{arbitrary maturity and strike} in this regime.
Using this representation we obtain analytical expansions 
of the implied volatility for small maturity and extreme strikes,
which reproduce at leading order known asymptotic results for
the continuous time model.
\end{abstract}

\section{Introduction}
\label{sec:1}

The method of the asymptotic expansions has been used in the literature
to study the properties of stochastic volatility models under a wide
variety of strike-maturity regimes. The short maturity limit at fixed strike
for the implied volatility was first derived for the SABR model
in the celebrated work of \cite{SABR}. This model is defined by the 
two-dimensional stochastic differential equation:
\begin{eqnarray}\label{1}
&& dS(t) = \sigma(t) S^\gamma(t) dW(t)\,, \\
&& d\sigma(t) = \omega \sigma(t) dZ(t)\,, \nonumber
\end{eqnarray}
where $W(t)$ and $Z(t)$ are two standard Brownian motions, with $\mathbb{E}[dW(t)dZ(t)]=
\varrho dt$. $\omega>0$ is the volatility of volatility 
parameter and $\gamma \in [0,1]$ is an exponent which controls the backbone of 
the implied volatility \cite{SABR}. The correlation $\varrho$ can take any
values in $\varrho \in [-1,+1]$, although the most important cases for 
applications have $\varrho \leq 0$.

The short maturity asymptotic expansion of \cite{SABR} was extended in
\cite{BBF,HLbook} to a wider class of stochastic volatility models. 
Similar short maturity asymptotic results were obtained for other stochastic 
volatility models popular in financial practice such as the Heston model 
\cite{FJ0,FJ2} and in more general forms \cite{HLbook,Deuschel}.
Asymptotics in the large maturity regime were obtained, both at
fixed strike \cite{Lewis,Lewis2002,Rogers,Forde1} and in the 
joint strike-maturity regime
\cite{FJ1}, in a wide variety of stochastic volatility models. 
The option price asymptotics can be translated into implied volatility 
asymptotics using the transfer results of Gao and Lee \cite{GaoLee}.

The properties of the SABR model are well understood in continuous time.
The martingale properties of the model were studied in \cite{Jourdain,LM}.
Short maturity asymptotics for the asset price distribution, option prices
and implied volatility were first derived at leading order 
by Hagan et al \cite{SABR}.  The expansion was extended to higher order in
\cite{HLbook,Paulot,Lewis2017}. 
The asymptotics was studied using operator expansion methods in \cite{CCMN}.
For a survey of existing results see \cite{SABRbook}. A mean-reverting version 
of the model called $\lambda$-SABR 
was introduced in \cite{HLbook}; the asymptotics of options prices was
recently studied in \cite{GHN}.
%%%%%%%%%%%%%%%%%%%%%%%%%%%%%%%%%%%%%%%%
The simulation and pricing under the SABR model have been studied
extensively. For $0 < \gamma < 1$ and zero correlation $\varrho=0$, 
an exact representation for the conditional distribution of $S_T$ 
for given $(\sigma_T , \int_0^T \sigma(t)^2 dt)$ was given by Islah \cite{Islah}.
This was further simplified by \cite{AntonovWings} who derived a 
one-dimensional integral representation for the option prices. 
Cai, Song and Chen \cite{CaiSongChen} gave an exact 
simulation method for the SABR model for $\gamma=1$ and for 
$\gamma < 1, \varrho=0$, using an inversion of the
Laplace transform for $1/\int_0^T \sigma(t)^2 dt$.

The log-normal ($\gamma=1$) SABR model is an important limiting case. 
This can be regarded as a particular case of Hull-White stochastic
volatility model \cite{HW1987}
\begin{equation}
dS(t) = \sqrt{V(t)} S(t) dW(t)\,,\quad 
dV(t) = \xi V(t) dZ(t) + \eta V(t) dt\,,\quad \xi>0, \eta < \frac14\xi^2\,.
\end{equation}
The process for the instantaneous variance 
$V(t) = \sigma^2(t)$ is equivalent to 
$d\sigma(t) = \frac12 \xi \sigma(t) dZ(t) + (\frac12 \eta - \frac18 \xi^2) 
\sigma(t) dt$,
which reduces to the volatility process in the SABR model when 
$\eta=\frac14\xi^2$.
Option price and implied volatility asymptotics in the Hull-White model
at large strike were studied by Gulisashvili and Stein \cite{GS1,GS2,Gulbook}.
The large maturity asymptotics in the SABR model were studied by Forde and 
Pogudin \cite{Forde1} and by Lewis in the Hull-White model \cite{Lewis2002}.
%%%%%%%%%%%%%%%%%%%%%%%%%%%%%%%%%%%%%%%%
The log-normal SABR model can be mapped to the Brownian motion on the
Poincar\'e space $\mathbb{H}^3$ \cite{HLbook,Lewis2017}. 
The model can be simulated exactly by conditional Monte Carlo methods, 
exploiting the fact that $\log S_T$
is normally distributed, conditional on a realization of 
$(\sigma_T , \int_0^T \sigma(t)^2 dt)$.
While an exact result for this distribution is known from Yor \cite{Yor},
the numerical evaluation of the result is challenging, requiring very 
high precision in intermediate steps \cite{CaiSongChen}.
The paper \cite{CaiSongChen} presents an alternative approach 
involving the inversion of a Laplace transform. 
An integral representation of the pricing
kernel has been presented in \cite{Lewis2017}.

To the best of our knowledge, all the asymptotic results in the literature
are obtained in the continuous time context. In many practical applications
of these models, they are simulated in discrete time, by application of
time discretization schemes such as the Euler-Maruyama scheme.
We study in this paper the asymptotics of the model (\ref{1}) with $\gamma=1$
discretized
in time under an application of the Euler-Maruyama scheme to $\log S(t)$,
and appropriate model parameters rescaling with $n$, the number of time steps.

Consider a grid of time points $\{ t_i \}_{i=1}^n$ with 
uniform time step size $\tau$. Application of the Euler-Maruyama discretization 
of (\ref{1}) to $\log S(t)$ gives the stochastic recursion
\begin{eqnarray}\label{SLogEuler}
%\mbox{(I)} 
&& S_{i+1} = S_i e^{\sigma_i \Delta W_i - \frac12 \sigma_i^2 \tau}\,, \\
&& \sigma_i = \sigma_0 e^{\omega Z_i - \frac12 \omega^2 t_i} \,, \nonumber
\end{eqnarray}
where $\Delta W_{i}:=W(t_{i+1})-W(t_{i})$ and $Z_{i}:=Z(t_{i})$. 
This was called in \cite{PZStochVol} the Log-Euler-log-Euler scheme. 
Its asymptotic properties were studied in \cite{PZStochVol} in the limit 
$n \to \infty$ at fixed $\beta = \frac12\omega^2 \tau n^2, 
\rho = \sigma_0 \sqrt{\tau}$, in the uncorrelated limit $\varrho=0$. 
The main results obtained were an almost sure limit for the asset price
$\lim_{n\to \infty} \frac{1}{n} \log S_n = - \frac12 \rho^2$ a.s. (Proposition
19 \cite{PZStochVol}), a fluctuations result $\lim_{n\to\infty} 
\frac{\log S_n + \frac12\rho^2 n}{\sqrt{n}} = N(0,\rho^2+\frac23
\rho^4 \beta)$ in distribution (Proposition 20 \cite{PZStochVol}), and a closed form
result for the Lyapunov exponents of the asset price moments
$\lambda(\rho,\beta;q) = \lim_{n\to \infty} \frac{1}{n} \log\mathbb{E}[(S_n)^q]$.
%%%%%%%%%%%%%%%%%%%%%%%%%%%%%%%%%%
It was pointed out that the scheme (\ref{SLogEuler}) differs from the 
continuous time model in one notable respect: the asset price is a martingale 
for any correlation $\varrho\in  [-1,1]$. Recall that in continuous time this property holds only 
for non-positive correlation \cite{BCM,Jourdain,LM}.

In this paper we introduce an alternative time discretization which is more
tractable for the non-zero correlation case $\varrho \neq 0$.
In addition we show that this scheme reproduces the martingale properties
of the continuous time model as $n\to \infty$: the asset price is
a martingale provided that $\varrho \leq 0$, see 
Proposition 8. Numerical study shows that it produces a martingale defect for 
$\varrho >0$. 
This scheme is defined in (\ref{scheme2}) and reduces to the 
Log-Euler-log-Euler scheme (\ref{SLogEuler}) in the uncorrelated limit 
$\varrho=0$.

We study in this paper the asymptotics of the new scheme in the limit of a large
number of time steps $n\to \infty$ at fixed 
$\beta = \frac12\omega^2 \tau n^2,\rho = \sigma_0 \sqrt{\tau}$. 
We derive a large deviations property for the log-price of the 
asset $\mathbb{P} (\frac{1}{n}\log S_n \in \cdot)$ in this limit. 
The large deviations result is translated into option prices 
and implied volatility asymptotics. 
The rate function turns out to be independent of the time step $\tau$.
The limit considered includes the regime of 
finite maturity $T = n\tau = O(1)$, small vol of vol  
$\omega^2 T \to 0$ and large initial vol $\sigma_0^2 T \to \infty$ at 
fixed $(\omega^2 T)(\sigma_0^2 T)$. 
We obtain the volatility surface of the model for arbitrary 
maturity $T$ and strike $K$ under this regime in Theorem~\ref{thm:main}:
\begin{equation}\label{1.4}
\sigma_{\rm BS}(x,T)= 
\sigma_{0}\Sigma_{\rm BS}\left( \frac{x}{\sigma_0^2 T};
2(\sigma_0^2 T)(\omega^2 T)\right)\,,\quad x = \log\frac{K}{S_0}\,,
\end{equation}
where the equality above means the ratio LHS/RHS goes to one in the limit 
considered. 
The function $\Sigma_{\rm BS}(y;a)$ is given in explicit form in 
Theorem~\ref{thm:main}. To the best of our knowledge this is the first 
stochastic volatility model for which the entire volatility surface can be 
approximated in closed form in a certain limit of the model parameters.

The limiting result for the volatility surface given by Theorem~\ref{thm:main}
is compared with known asymptotic expansions of the implied volatility in the
log-normal SABR model in the short-maturity \cite{SABR,HLbook,PaulotSABR}, 
large-maturity \cite{Lewis2002,Lewis,Forde1} and
extreme strikes \cite{GSHW} regimes. The asymptotic result of (\ref{1.4}) 
reproduces all these expansions, after taking
the small vol of vol and large initial volatility limit. 
A summary of the known asymptotic results in the continous
time model and their counterparts in the discrete-time model is given in 
Table~\ref{tab:summary}.

\begin{table}
\caption{\label{tab:summary} 
Comparison of the known asymptotic results for the continuous-time model
with the discrete-time asymptotics obtained in this paper, showing the relevant
references and sections where they are discussed. In all cases we reproduce
the known asymptotic results for the continuous-time model.}
\begin{center}
\begin{tabular}{|c|c|c|}
\hline
asymptotics  & continuous time & discrete time  \\
\hline \hline
$T\to 0$ & $O(T^0)$: Hagan et al \cite{SABR}
         & Sec.~6.1 \\
         & $O(T), O(T^2)$: Paulot \cite{Paulot}
         &  \\
\hline
$T\to 0, x\to 0$ & double expansion $O(T^k x^n)$ Lewis \cite{Lewis,Lewis2017}
         & Sec.~6.2 $n=0,k\leq 4$ \\
\hline
extreme strikes & $|x|\to \infty$: Gulisashvili, Stein ($\varrho=0$) \cite{GSHW}
         & Sec.~6.3 \\
\hline 
\end{tabular}
\end{center}
\end{table}

We comment also on the relation of our results to those obtained by 
Forde \cite{FordeLargeTime} and Forde and Kumar \cite{FordeKumar} for the large 
maturity asymptotics in stochastic volatility models. In the simplest setting 
these papers study the class of models defined by
\begin{eqnarray}
&& \frac{dS(t)}{S(t)} = \sigma(Y(t)) dW^1(t)\,, \\
&& dY(t) = - \alpha Y(t) dt + dW^2(t)\,,
\end{eqnarray}
where $W^1(t),W^2(t)$ are standard Brownian motions that may be correlated. The method is based on proving
large deviations for $\mathbb{P}(A(t) \in \cdot)$ as $t\rightarrow\infty$ 
for the time average of the integrated variance
\begin{equation}
A(t) = \frac{1}{t} \int_0^t \sigma^2 (Y(s))ds\,.
\end{equation}
This is done using the Donsker-Varadhan type large deviations \cite{DonskerVaradhan} for the
occupation time of the $Y(t)$ process, and then applying the contraction
principle. The rate function for the large deviation principle of $\mathbb{P}(A(t) \in \cdot)$ 
is denoted in \cite{FordeKumar} as $I_f(a)$. 
Assuming for simplicity zero correlation as in \cite{FordeLargeTime},
the log-asset price
$X(t) = \log S(t)$ is related to $A(t)$ as
\begin{equation}
\frac{1}{t} X(t) = - \frac12 A(t) + \frac{Z}{\sqrt{t}} \sqrt{A(t)}\,,
\end{equation}
with $Z\sim N(0,1)$ following a standard normal distribution. 
This is similar to our Eq.~(\ref{scheme2}) in the zero correlation limit $\varrho=0$.
An application
of the contraction principle gives that $\mathbb{P}(\frac{1}{t} X(t) \in
\cdot)$ satisfies a large deviation principle with rate function
\begin{equation}
I(x) = \inf_{a>0} \left( I_f(a) + \frac{1}{2a} \left(x + \frac12 a\right)^2\right)\,.
\end{equation}

Our approach differs from that of \cite{FordeLargeTime,FordeKumar} in two 
respects. First, the analog of the rate function of the integrated variance 
$I_f(a)$ is obtained exactly as in \cite{PZDiscreteAsian}, 
without requiring the Donsker-Varadhan large deviations results \cite{DonskerVaradhan}. 
This allows a more explicit treatment, and for our particular model, 
we do not rely on numerical methods. Second, we do not 
require the large maturity limit, and our asymptotic regime can include 
arbitrary maturity.

The paper is organized as follows. In Section~\ref{sec:2} we introduce the
new time discretization scheme for the asset price. In Section~\ref{sec:3}
we study the $n\to \infty$ asymptotics for
the volatility process. These results are used in Section~\ref{sec:4}
to study the asymptotics of the asset price process. We derive an almost sure 
limit and large deviations for $\mathbb{P}(\frac{1}{n}\log S_n \in \cdot)$.
These results are used in Section~\ref{sec:5} to obtain option price asymptotics
and the implied volatility asymptotics in the $n\to \infty$ limit.
Section~\ref{sec:6} studies in detail the 
implications of these results in various regimes of small maturity and extreme 
strikes. 
Finally Section~\ref{sec:7} compares the asymptotic result against numerical 
benchmarks.  
An Appendix derives the rate functions for the uncorrelated model in explicit
form, and obtains asymptotics in various regimes of small/large arguments.

%%%%%%%%%%%%%%%%%%%%%%%%%%%%%%%%%%%%%%%%%%%%%%%%%%%%%%%%%%%%%%%%%%%%%%%%%%%%%%%%

\section{Setup of the time discretized model}
\label{sec:2}

\begin{definition}[Modified Log-Euler, Log-Euler scheme]\label{def:1}
Assume the timeline $\{ t_i \}_{i=0}^n$ with uniform time step $t_{i+1}-t_{i}=\tau$, and
denote for simplicity $S(t_i) = S_i, \sigma(t_i)=\sigma_i$.
The discretization scheme of the $\gamma=1$ SABR model \eqref{1} is defined 
recursively by
\begin{equation}\label{scheme2iter}
\log S_{i+1} = \log S_{i} + \varrho_\perp \sigma_i \Delta W_i^\perp - \frac12 \sigma_i^2\tau 
+ \varrho \frac{1}{\omega} (\sigma_i - \sigma_0),
\end{equation}
with $\sigma_i = \sigma_0 e^{\omega Z_i - \frac12 \omega^2 t_i}$ and 
$\Delta W_i^\perp = W^\perp(t_{i+1}) - W^\perp(t_i)$ 
where $W^\perp(t)$ is a Brownian motion independent of $Z(t)$, given by
$W(t) = \varrho Z(t) + \sqrt{1-\varrho^2} W^\perp(t)$.
\end{definition}

It follows from Definition \ref{def:1} that 
the recursion can be written in closed form as
\begin{align}\label{scheme2}
\log S_n &= \log S_0 + \varrho_\perp \sum_{i=0}^{n-1} \sigma_i \Delta W_i^\perp - \frac12 \sum_{i=0}^{n-1}
\sigma_i^2 \tau + \varrho\frac{1}{\omega} (\sigma_n - \sigma_0) \\
&= \log S_0 + \varrho_\perp \sqrt{V_n} Z - \frac12 V_n +  \varrho\frac{1}{\omega} (\sigma_n - \sigma_0)\,, \nonumber
\end{align}
where $V_n =\sum_{i=0}^{n-1}
\sigma_i^2 \tau$ and $Z\sim N(0,1)$ follows a standard normal distribution, 
independent of $\sigma_i$.
The construction of this scheme is motivated by writing the SDE of the continuous time model in
terms of  $X(t) = \log S(t)$ as
\begin{eqnarray}\label{SDE3}
dX(t) &=& \sigma(t) dW(t) - \frac12 \sigma(t)^2 dt 
= \varrho \sigma(t) dZ(t) + \varrho_\perp \sigma(t) dW^\perp(t) - 
\frac12 \sigma(t)^2 dt  \\
&=& \varrho \frac{1}{\omega} d\sigma(t) + \varrho_\perp \sigma(t) dW^\perp(t) - 
\frac12 \sigma(t)^2 dt \,, \nonumber
\end{eqnarray}
where we denoted $\varrho_\perp:= \sqrt{1-\varrho^2}$. 
In this form we see that $X(t)$ can be decomposed as the sum of two independent processes
$X(t) = X^\perp(t) + X^\parallel(t)$ with 
\begin{equation}
dX^\perp(t) = \varrho_\perp \sigma(t) dW(t)^\perp - 
\frac12 \sigma(t)^2 dt\,,\quad X^\perp_0 = \log S_0\,,
\end{equation}
and $X^\parallel(t) = \varrho \frac{1}{\omega} (\sigma(t) - \sigma_0)$. 
The discretization scheme (\ref{scheme2iter}) is obtained by applying Euler discretization to
$X^\perp(t)$ and keeping $X^\parallel(t)$ in closed form.

We will consider the properties of the time-discretized model (\ref{scheme2})
in  the $n \to \infty$ limit at fixed 
\begin{equation}\label{betarho}
\beta:= \frac{1}{2}\omega^{2}\tau n^{2},
\qquad
\rho:=\sigma_{0}\sqrt{\tau}.
\end{equation} 
This limit covers the following
asymptotic regimes of practical interest:
\begin{itemize}

\item Finite maturity, low vol of vol and large initial volatility.
This corresponds to $t_n = n\tau = O(1), \omega = O(n^{-\frac12}), 
\sigma_0 = O(n^\frac12)$. Taking 
$\omega=\tilde \omega n^{-\frac12},
\sigma_0 = \tilde \sigma_0 \sqrt{n}$ and fixed $T=n\tau$, we have
$\beta=\frac12 \tilde\omega^2 T, \rho = \tilde \sigma_0 \sqrt{T}$, 
independent of $n$.

\item Small maturity regime and large initial volatility. This corresponds 
to $\sigma_0 = O(n), \tau =O(n^{-2})$ and $\omega=O(1)$. This gives
$t_n =n\tau=O(n^{-1})$.

\item Large maturity regime and low vol of vol. This corresponds to the
situation when $\sigma_0,\tau$ are fixed, vol of vol $\omega=O(1/n)$,
and the maturity $t_n = n\tau =O(n)$.

\end{itemize}

Under the first two regimes the time step goes to 
zero $\tau\to 0$ in the $n\to \infty$ limit, and thus they are appropriate
for studying the continuous time limit of this model. The first regime is 
studied in more detail in Sections~\ref{sec:5} and \ref{sec:6}. 
The asymptotic predictions under this regime are compared with known asymptotic
results in the continuos time limit. In all cases where an analytical result is 
known for the continuous time case, we recover it in the $n\to \infty$ limit 
of the discrete time model.

\section{Large deviations for the volatility process}
\label{sec:3}

The time discretized asset price under the scheme (\ref{scheme2}) is the sum 
of two terms
\begin{equation}\label{sum12}
\frac{1}{n} \log S_n = \frac{1}{n}\log S_n^\perp + 
\varrho \frac{1}{n\omega} (\sigma_n - \sigma_0) \,.
\end{equation}
Conditional on a path of the volatility process 
$\{\sigma_k\}_{k=1}^n$, the
asset price $S_n$ is log-normally distributed. The distribution of $S_n$
is obtained by folding the log-normal distribution with the distribution
of the volatility process. It will be seen that the large deviations
properties of $\frac{1}{n} \log S_n$ are related to the large deviations
for the joint volatility process $\{ V_i, \sigma_i\}_{i=1}^n$.

%%%%%%%%  Large deviations for the $(V_n,\sigma_n)$ process

We start by reviewing the known results for the $n\to \infty$ asymptotics of 
the volatility process. It was shown in \cite{PZDiscreteAsian} that for 
$n\to \infty$ at fixed $\beta,\rho$, one has the almost sure limit
\begin{equation}\label{Vnas}
\lim_{n\to \infty} \frac{1}{n} V_n = \rho^2 \mbox{ a.s.}
\end{equation}
A similar argument gives the almost sure limit for the volatility process
\begin{equation}\label{signas}
\lim_{n\to \infty} \frac{\sigma_n}{n\omega} =  
v_0 := \frac{\rho}{\sqrt{2\beta}} \mbox{ a.s.}
\end{equation}

The large deviations for $\mathbb{P}(\frac{1}{n} V_n\in
\cdot)$ were obtained in Proposition 6 of \cite{PZDiscreteAsian}.
This result is summarized for convenience in the Appendix~\ref{sec:app}.
The first term in (\ref{sum12}) depends on $\{ \sigma_0,\sigma_1,\cdots , \sigma_{n-1}\}$
through $V_n$, while the second term depends on $\sigma_n$. 
This introduces correlation among the two terms. This requires that we study
the large deviations properties of the joint process $(V_n,\sigma_n)$ as 
$n\to \infty$.

%%%%%%%%%%%%% LDP background %%%%%%%%%%%%%%%%%%%%%%%%%%

We start by introducing some background concepts about large deviations theory.
We refer to Dembo and Zeitouni \cite{Dembo} for a more comprehensive exposition of
large deviations and its applications. A sequence $(P_{n})_{n\in\mathbb{N}}$ 
of probability measures on a topological space $X$ 
is said to satisfy a \textit{large deviation principle} (LDP) with the 
\textit{rate function} $I:X\rightarrow\mathbb{R}\cup\{\infty\}$ if $I$ is 
non-negative, lower semicontinuous and for any measurable set $A$, we have
\begin{equation}
-\inf_{x\in A^{o}}I(x)\leq\liminf_{n\rightarrow\infty}\frac{1}{n}\log P_{n}(A)
\leq\limsup_{n\rightarrow\infty}\frac{1}{n}\log P_{n}(A)\leq-\inf_{x\in\overline{A}}I(x).
\end{equation}
Here, $A^{o}$ is the interior of $A$ and $\overline{A}$ is its closure. 
The rate function $I$ is said to be \textit{good} if for any $m$, the level set
$\{x:I(x)\leq m\}$ is compact.

The \textit{contraction principle} (e.g. Theorem 4.2.1. \cite{Dembo})
plays a key role in our proofs. For the convenience
of the readers, we state the contraction principle as follows.
If $P_{n}$ satisfies a large deviation principle on $X$ with rate 
function $I(x)$ and $F:X\rightarrow Y$ is a continuous map,
then the probability measures $Q_{n}:=P_{n}F^{-1}$ satisfies
a large deviation principle on $Y$ with rate function
$J(y)=\inf_{x: F(x)=y}I(x)$.

%%%%%%%%% end LDP background.   %%%%%%%%%%%%%%%%%%%%%%%%%%%

We start by noting that the last term in (\ref{sum12}) satisfies a LDP. 

\begin{lemma}\label{prop:2}
In the limit $n\to \infty$,
$\mathbb{P}(\frac{1}{n\omega} \sigma_n \in \cdot)$ satisfies a LDP
with rate function
\begin{equation}\label{I2}
\mathcal{I}_2(x) = \frac{1}{4\beta} \log^2 \left(\frac{x}{v_0}\right)\,,
\end{equation}
for $x>0$, with $v_0 = \frac{\rho}{\sqrt{2\beta}}$, and $\mathcal{I}_2(x)=\infty$
otherwise.
\end{lemma}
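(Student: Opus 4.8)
The plan is to reduce the statement to a standard one-dimensional large deviation principle for a Brownian motion evaluated at a deterministic time. Recall that $\sigma_n = \sigma_0 e^{\omega Z_n - \frac12 \omega^2 t_n}$ where $Z_n = Z(t_n)$ and $t_n = n\tau$. Hence
\begin{equation*}
\frac{1}{n\omega}\sigma_n = \frac{\sigma_0}{n\omega} \exp\!\left(\omega Z(t_n) - \tfrac12\omega^2 n\tau\right).
\end{equation*}
First I would rewrite the exponent in terms of the fixed parameters $\beta = \frac12\omega^2\tau n^2$ and $\rho = \sigma_0\sqrt\tau$. Note $\frac{\sigma_0}{n\omega} = \frac{\sigma_0\sqrt\tau}{n\omega\sqrt\tau} = \frac{\rho}{\sqrt{2\beta}} = v_0$, which is exactly the almost sure limit in \eqref{signas}, and $\frac12\omega^2 n\tau = \frac{\beta}{n}$, which vanishes as $n\to\infty$. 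Also, since $\omega Z(t_n)$ is Gaussian with variance $\omega^2 t_n = \omega^2 n\tau = \frac{2\beta}{n}$, we may write $\omega Z(t_n) \stackrel{d}{=} \sqrt{2\beta/n}\,\mathcal{N}$ with $\mathcal{N}\sim N(0,1)$. Therefore
\begin{equation*}
\frac{1}{n\omega}\sigma_n \stackrel{d}{=} v_0\, e^{-\beta/n}\, e^{\sqrt{2\beta/n}\,\mathcal{N}}.
\end{equation*}

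Next I would establish the LDP for the Gaussian part and transfer it through the exponential. Since $\sqrt{2\beta/n}\,\mathcal{N}$ has the same law as $G_n := \frac{1}{n}\cdot\sqrt{2\beta n}\,\mathcal{N}$, and $\sqrt{2\beta n}\,\mathcal{N} \sim N(0, 2\beta n)$, the sequence $\mathbb{P}(G_n \in \cdot)$ satisfies an LDP with speed $n$ and rate function $g(u) = \frac{u^2}{4\beta}$, which follows from the classical Cramér-type computation (or directly from Schilder's theorem / the LDP for scaled Gaussians): $\frac1n\log\mathbb{P}(\sqrt{2\beta n}\,\mathcal{N} \approx nu) \to -\frac{(nu)^2}{2\cdot 2\beta n\cdot n} = -\frac{u^2}{4\beta}$. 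The deterministic shift $-\beta/n$ inside the exponent is negligible at the LDP level (it contributes $-\beta/n^2$ to $G_n$, which does not affect the speed-$n$ rate function, by an exponential equivalence argument). Then the map $u\mapsto v_0 e^{u}$ is continuous from $\mathbb{R}$ to $(0,\infty)$, so the contraction principle gives that $\mathbb{P}(\frac{1}{n\omega}\sigma_n\in\cdot)$ satisfies an LDP with rate function $\mathcal{I}_2(x) = g(\log(x/v_0)) = \frac{1}{4\beta}\log^2(x/v_0)$ for $x>0$, and $+\infty$ for $x\le 0$ since the range of $v_0 e^u$ is $(0,\infty)$.

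The potential obstacles are minor and bookkeeping in nature. One is justifying that the $e^{-\beta/n}$ factor is harmless: I would handle this either by absorbing it into the argument (replace $v_0$ by $v_0 e^{-\beta/n}\to v_0$ and invoke continuity of the rate function plus a standard "LDP is stable under $o(1)$ perturbations" lemma), or by noting the two sequences are exponentially equivalent in the sense of Dembo--Zeitouni (Theorem 4.2.13), so they share the same rate function. The other is confirming the correct speed: here the natural scaling is $\frac1n\log\mathbb{P}$, consistent with the rest of the paper, because the variance of the relevant Gaussian grows like $n$ while we look at deviations of order $n$ after the $\frac1n$ normalization — equivalently the "effective variance per step" is $\frac{2\beta}{n^2}$, and $-\frac{(\text{deviation})^2}{2(\text{variance})}$ with deviation $u$ and variance $\frac{2\beta}{n}$ gives $-\frac{n u^2}{4\beta}$, i.e. speed $n$ and rate $\frac{u^2}{4\beta}$. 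None of this requires anything beyond the contraction principle and the elementary Gaussian LDP, both of which are available. Finally, lower semicontinuity and non-negativity of $\mathcal{I}_2$ are immediate from the explicit formula, and goodness follows since sublevel sets $\{x>0 : \log^2(x/v_0)\le 4\beta m\}$ are closed and bounded away from $0$ and $\infty$, hence compact.
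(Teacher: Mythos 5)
Your proposal is correct and follows essentially the same route as the paper: both reduce the claim to the one-dimensional LDP for the Gaussian exponent $\omega Z_n - \tfrac12\omega^2 t_n$ (the paper via G\"artner--Ellis, you via the equivalent direct Gaussian tail computation) and then transfer it through the continuous map $u\mapsto v_0 e^{u}$ by the contraction principle. Your explicit treatment of the $e^{-\beta/n}$ factor via exponential equivalence is a minor bookkeeping refinement that the paper handles implicitly by including the drift term in the cumulant computation.
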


\begin{proof}
Start with $\frac{1}{n\omega} \sigma_n =\frac{\sigma_0}{n\omega} e^{\omega Z_n-\frac12
\omega^2 t_n}$.
We have $\omega Z_n = \omega \sqrt{\tau n} Z=\frac{\sqrt{2\beta}}{\sqrt{n}}Z$ with $Z\sim N(0,1)$,
and $\frac{1}{2}\omega^{2}t_{n}=\frac{1}{2}\omega^{2}\tau n=\frac{\beta}{n}$.
Since $Z\sim N(0,1)$, it follows that for any $\theta\in\mathbb{R}$,
\begin{equation}
\lim_{n\rightarrow\infty}\frac{1}{n}\log\mathbb{E}\left[e^{\theta n(\omega Z_n-\frac12
\omega^2 t_n)}\right]
=\frac{1}{2}(2\beta)\theta^{2},
\end{equation}
which by G\"{a}rtner-Ellis theorem \cite{Dembo} implies that
$\mathbb{P}(\omega Z_n-\frac12
\omega^2 t_n\in\cdot)$ satisfies a large deviation principle
with rate function $\sup_{\theta\in\mathbb{R}}\{\theta x-\frac{1}{2}(2\beta)\theta^{2}\}
=\frac{1}{4\beta}x^{2}$. 
Note that 
$\frac{\sigma_{0}}{n\omega} =\frac{\rho\sqrt{\tau}n}{n\sqrt{\tau}\sqrt{2\beta}}=\frac{\rho}{\sqrt{2\beta}}$,
and thus $\frac{1}{n\omega}\sigma_{n}=\frac{\rho}{\sqrt{2\beta}}e^{\omega Z_{n}-\frac{1}{2}\omega^{2}t_{n}}$.
By the contraction principle \cite{Dembo},
$\mathbb{P}(\frac{1}{n\omega} \sigma_n \in \cdot)$ satisfies a LDP
with rate function $\frac{1}{4\beta}\log^{2}(\frac{x}{\rho/\sqrt{2\beta}})$. 
\end{proof}

%%%%%%%%%%%.  LDP for (Vn, sigma_n).  %%%%%%%%%%%%%%

Next we prove a LDP for the joint 
distribution of $(V_n,\sigma_{n})$,
for $\mathbb{P}((\frac{1}{n} V_n, \frac{1}{n\omega} \sigma_{n}) \in \cdot )$.
The proof proceeds in close analogy with Proposition 6 in \cite{PZDiscreteAsian}.

\begin{proposition}\label{prop:I}
Consider the $n\to \infty$ limit at fixed $\beta = \frac12 \omega^2 \tau n^2$
and $\rho = \sigma_0 \sqrt{\tau}$.
Define $V_n=\sum_{k=0}^{n-1} \sigma_k^2 \tau$. Then
$\mathbb{P}((\frac{1}{n} V_n, \frac{1}{n\omega} \sigma_{n}) \in \cdot )$
satisfies a LDP with rate function
\begin{equation}\label{I2rf}
\mathcal{I}(x,y) = \inf_{g\in \mathcal{AC}_{x,y}[0,1]}
\frac12 \int_0^1 \left( g'(z) \right)^2 dz\,,
\end{equation}
for $x,y \geq 0$, and $\mathcal{I}(x,y)=\infty$ otherwise.
We denoted
\begin{equation}
\mathcal{AC}_{x,y}[0,1] = \left\{ g \bigg| g\in AC[0,1], g(0)=0,
\int_0^1 e^{2\sqrt{2\beta} g(z)} dz=\frac{x}{\rho^2},
e^{\sqrt{2\beta} g(1)}=y \frac{\sqrt{2\beta}}{\rho} \right\}\,,
\end{equation}
where $AC[0,1]$ denotes the space of absolutely continuous functions from $[0,1]$ to $\mathbb{R}$.
\end{proposition}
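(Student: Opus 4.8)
\medskip\noindent\emph{Proof strategy.}
The plan is to realise both components of the pair as continuous functionals of a single rescaled Brownian motion, apply Schilder's theorem together with the contraction principle, and finally absorb the time-discretization error via an exponential-equivalence argument, running in close analogy with the proof of Proposition~6 in \cite{PZDiscreteAsian}.

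\emph{Step 1 (rescaling).} For each $n$ set $B_n(z):=(\tau n)^{-1/2}Z(\tau n z)$, $z\in[0,1]$; by Brownian scaling this is itself a standard Brownian motion and $Z_i=\sqrt{\tau n}\,B_n(i/n)$. Using $\omega^2\tau n^2=2\beta$ and $\sigma_0^2\tau=\rho^2$ one gets
\[
\sigma_i^2\tau=\rho^2\exp\!\Big(2\sqrt{2\beta}\,n^{-1/2}B_n(i/n)-2\beta i\, n^{-2}\Big),\qquad
\frac{\sigma_n}{n\omega}=\frac{\rho}{\sqrt{2\beta}}\exp\!\Big(\sqrt{2\beta}\,n^{-1/2}B_n(1)-\beta n^{-1}\Big),
\]
so that $\frac1n V_n=\rho^2\cdot\frac1n\sum_{i=0}^{n-1}\exp(2\sqrt{2\beta}\,n^{-1/2}B_n(i/n)-2\beta i\,n^{-2})$. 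The drift corrections $2\beta i/n^2$ and $\beta/n$ are $O(1/n)$ uniformly over $i\le n$ and will be immaterial at the LDP scale.

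\emph{Step 2 (Schilder and contraction).} By Schilder's theorem (e.g.\ Theorem~5.2.3 of \cite{Dembo}), $\mathbb{P}(n^{-1/2}B_n\in\cdot)$ satisfies an LDP, on the space of continuous $g$ with $g(0)=0$ under the sup norm, at speed $n$ with good rate function $g\mapsto\frac12\int_0^1(g'(z))^2dz$ (equal to $+\infty$ for $g\notin AC[0,1]$). Introduce $F(g):=\big(\rho^2\int_0^1 e^{2\sqrt{2\beta}g(z)}dz,\ \tfrac{\rho}{\sqrt{2\beta}}e^{\sqrt{2\beta}g(1)}\big)$, which is continuous for the sup norm (uniform convergence $g_k\to g$ gives $e^{c g_k}\to e^{c g}$ uniformly and $g_k(1)\to g(1)$) and maps onto $(0,\infty)^2$. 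The contraction principle then gives an LDP for $\mathbb{P}(F(n^{-1/2}B_n)\in\cdot)$ with rate function $\inf\{\frac12\int_0^1(g')^2:F(g)=(x,y),\ g(0)=0\}$, which is precisely $\mathcal{I}(x,y)$ with $\mathcal{AC}_{x,y}[0,1]=\{g:F(g)=(x,y),\ g(0)=0\}$, and is $+\infty$ when $(x,y)\notin(0,\infty)^2$ (an infimum over the empty set), in agreement with the stated form.

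\emph{Step 3 (exponential equivalence).} It remains to show that $(\frac1n V_n,\frac{1}{n\omega}\sigma_n)$ and $F(n^{-1/2}B_n)$ are exponentially equivalent, i.e.\ for every $\delta>0$, $\limsup_{n\to\infty}\frac1n\log\mathbb{P}\big(|(\tfrac1n V_n,\tfrac1{n\omega}\sigma_n)-F(n^{-1/2}B_n)|>\delta\big)=-\infty$; by Theorem~4.2.13 of \cite{Dembo} the two families then obey the same LDP, which finishes the proof. The second coordinates differ only by the deterministic factor $e^{-\beta/n}\to1$. For the first coordinate one compares the Riemann sum with the integral of $f_n(z):=\exp(2\sqrt{2\beta}\,n^{-1/2}B_n(z))$; using $|e^a-e^b|\le\max(e^a,e^b)|a-b|$, the gap is at most a constant times $n^{-1/2}\,\|f_n\|_\infty$ times the oscillation of $B_n$ over windows of length $1/n$, plus the $O(1/n)$ drift term. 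One finishes by truncation: the complement of $\{\|n^{-1/2}B_n\|_\infty\le M\}$ has probability $\le 4e^{-M^2 n/2}$, whose exponent tends to $-\infty$ as $M\to\infty$, so it suffices to bound the error on that event, where $\|f_n\|_\infty\le e^{2\sqrt{2\beta}M}$; and $\mathbb{P}\big(\sup_{|s-t|\le1/n}|B_n(s)-B_n(t)|>\varepsilon\sqrt n\big)$ decays super-exponentially in $n$ (of order $e^{-c\varepsilon^2 n^2}$ by the reflection principle over the $\sim n$ subintervals). Combining these estimates yields the claimed $-\infty$ limit. This last step, which must simultaneously control the discretization error and the unboundedness of the exponential functional $f_n$, is the main technical obstacle, and it is carried out exactly as in the proof of Proposition~6 of \cite{PZDiscreteAsian}.
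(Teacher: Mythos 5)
Your proof is correct, and it follows the same overall architecture as the paper's --- a path-level LDP, contraction through the map $g\mapsto\bigl(\rho^2\int_0^1 e^{2\sqrt{2\beta}g(z)}dz,\ \tfrac{\rho}{\sqrt{2\beta}}e^{\sqrt{2\beta}g(1)}\bigr)$, and an exponential-equivalence step for the discretization errors --- but the key input lemma is different. The paper works with the i.i.d.\ increments $U_j=\tau^{-1/2}(Z_j-Z_{j-1})$ and invokes Mogulskii's theorem for the normalized partial-sum process $g_n(x)=\frac1n\sum_{j\le [xn]}U_j$; you instead rescale the Brownian path itself and invoke Schilder's theorem. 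Since the increments are standard Gaussian, $\Lambda^*(x)=\frac12x^2$ and both routes land on the identical variational problem, so the two proofs are interchangeable at the level of the rate function. What Mogulskii buys the paper is that the empirical path is piecewise constant, so $\frac1n\sum_{k}e^{(2\sqrt{2\beta}/n)\sum_{j\le k}U_j}$ equals $\int_0^1 e^{2\sqrt{2\beta}g_n(x)}dx$ \emph{exactly}, and the only remaining approximations are the uniformly $O(1/n)$ drift and the endpoint factor $e^{-\beta/n}$; there is no Riemann-sum error to control. Your Schilder route forces you to also bound the gap between the Riemann sum and the integral of $e^{2\sqrt{2\beta}n^{-1/2}B_n(\cdot)}$, which you do correctly via the mean-value bound, truncation of $\|n^{-1/2}B_n\|_\infty$ at level $M$ (cost $e^{-M^2n/2}$, sent to $-\infty$ by letting $M\to\infty$ after $n$), and the superexponential $e^{-c\varepsilon^2n^2}$ tail of the oscillation of $B_n$ over windows of length $1/n$. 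This is extra work, but it has the virtue of making the exponential-equivalence step fully explicit, whereas the paper handles the analogous negligibility of the drift term rather informally ("we can neglect the term") and states exponential equivalence only for the $\sigma_n$ coordinate.
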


\begin{proof}
Write $\sigma_k$ as
\begin{equation}
\sigma_k = \sigma_0 e^{\omega Z_k - \frac12 \omega^2 t_k}  =
\sigma_0 e^{\omega \sqrt{\tau} \sum_{j=1}^{k-1} U_j - \frac12 \omega^2 k \tau}\,,
\end{equation}
where $U_j := \frac{1}{\sqrt{\tau}} (Z_j - Z_{j-1})$ are i.i.d. $N(0,1)$ random 
variables. 
Using $\omega^2 \tau = \frac{2\beta}{n^2}$, we get 
\begin{equation}
\sigma_k  =
\sigma_0 e^{\frac{\sqrt{2\beta}}{n} \sum_{j=1}^{k-1} U_j - \frac{\beta}{n^2}k}\,.
\end{equation}
Note that $e^{-\frac{\beta}{n}} \leq e^{- \frac{\beta}{n^2}k} \leq 1$
uniformly in $0 \leq k \leq n-1$. Thus we can neglect the term $-\frac{\beta}{n^2}k$
in the exponent in the $n\to \infty$ limit.

By Mogulskii theorem \cite{Dembo}, $\mathbb{P}( \frac{1}{n} \sum_{j=1}^{[\cdot n]} U_j 
\in \cdot ) $ satisfies a LDP on $L^\infty[0,1]$ with good rate function
\begin{equation}
I(g) = \int_0^1 \Lambda(g'(x)) dx\,,
\end{equation}
if $g\in AC[0,1]$ and $g(0)=0$ and $I(g)=+\infty$ otherwise,
where 
\begin{equation}
\Lambda(x) := \sup_{\theta \in \mathbb{R}} \left\{ \theta x - \log
\mathbb{E}[e^{\theta U_1}] \right\} = \frac12 x^2.
\end{equation}

Let $g(x) := \frac{1}{n} \sum_{j=1}^{[xn]} U_j$. Then
\begin{equation}
\int_0^1 e^{2\sqrt{2\beta} g(x) } dx = \sum_{k=0}^{n-1}
\int_{k/n}^{(k+1)/n} e^{2\sqrt{2\beta} g(x) } dx =
\frac{1}{n} \sum_{k=0}^{n-1} e^{\frac{2\sqrt{2\beta}}{n} \sum_{j=1}^k U_j }\,.
\end{equation}

We know that $\mathbb{P}(\frac{1}{n\omega} \sigma_n \in \cdot )$
satisfies a LDP as shown above in Proposition \ref{prop:2} with rate function
$I_2(x)$ given in Equation (\ref{I2}). 
This random  variable is expressed as
\begin{equation}
\frac{1}{n\omega} \sigma_n = \frac{\rho}{\sqrt{2\beta}} 
e^{\frac{\sqrt{2\beta}}{n} \sum_{j=1}^{n-1} U_j - \frac{\beta}{n} },
\end{equation}
which is exponentially equivalent to $\frac{\rho}{\sqrt{2\beta}} 
e^{\sqrt{2\beta} g(1)}$ as $n\rightarrow\infty$.
We can apply the contraction principle \cite{Dembo} to conclude that
$\mathbb{P}( (\frac{1}{n} \sum_{k=0}^{n-1} \sigma_k^2 \tau , 
\frac{1}{n\omega} \sigma_n ) \in \cdot ) $ satisfies a LDP, with 
rate function $\mathcal{I}(x,y)$ given by the constrained
variational problem (\ref{I2rf}).
\end{proof}

%%%%%%%%%%%%%%%%%%%%%%%%%%%%%%%%%%
\subsection{Explicit solution for the rate function $\mathcal{I}(u,v)$}

We present in this Section the solution of the variational problem in Eq.~(\ref{I2rf}) for the rate 
function $\mathcal{I}(u,v)$, which is given by
\begin{equation}
\mathcal{I}(u,v) = \inf_g \frac12 \int_0^1 [g'(t)]^2 dt\,,
\end{equation}
where the infimum is taken over all functions $g\in AC[0,1]$ satisfying
$g(0)=0$ and
\begin{equation}
\int_0^1 e^{2b g(t)} dt = \frac{u}{\rho^2}\,,\quad
e^{b g(1)} = \frac{vb}{\rho}\,,\quad
b := \sqrt{2\beta}\,.
\end{equation}

\begin{proposition}
The rate function can be expressed as
\begin{equation}\label{Iuvscale}
\mathcal{I}(u,v) = 
\frac{1}{16\beta} I(\bar u, \bar v)\,,
\end{equation}
where 
$\bar u := \frac{u}{\rho^2}$,
$\bar v := \frac{v}{v_0}$, and $v_0 := \frac{\rho}{\sqrt{2\beta}}$.
We distinguish three cases:

(i) $\bar u/\bar v>1$. For this case the rate function is
\begin{equation}\label{rf1}
I(\bar u,\bar v) = \beta \left\{ \beta 
- \frac{4\gamma}{\gamma+1} + \frac{4\gamma}{e^\beta + \gamma} \right\}\,,
\end{equation}
where $\beta$ is the solution of the equation
\begin{equation}\label{eqbeta}
\frac{\sinh(\beta/2)}{\beta/2} = \frac{\bar u }{\bar v}\,,
\end{equation}
and $\gamma$ is given by $\gamma = e^{\beta/2} \frac{\bar v e^{\beta/2} - 1}{e^{\beta/2} - \bar v}$.

(ii) $\bar u/\bar v<1$. For this case the rate function is
\begin{equation}
I(\bar u,\bar v) =  4\lambda^2
\left\{ \frac{\bar u}{\cos^2\eta}-1 \right\}\,,
\end{equation}
where $\lambda$ is the solution of the equation
\begin{equation}\label{eqlam}
\frac{\sin \lambda}{\lambda} = \frac{\bar u }{\bar v}\,,
\end{equation}
and $\eta$ is given by
$\tan\eta = \frac{\bar v\cos\lambda - 1}{\bar v\sin\lambda}$.

(iii) $\bar u = \bar v$. For this case the rate function is
\begin{equation}\label{rf3}
I(\bar u,\bar u) = 4 \frac{(\bar u-1)^2}{\bar u}\,.
\end{equation}
\end{proposition}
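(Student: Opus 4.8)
The plan is to solve the constrained variational problem (\ref{I2rf}) explicitly, as a classical isoperimetric problem. First I would normalize by setting $h(t):=2\sqrt{2\beta}\,g(t)$, so that the functional becomes $\tfrac12\int_0^1 (g')^2\,dt=\tfrac{1}{16\beta}\int_0^1 (h')^2\,dt$ while the constraints turn into $h(0)=0$, $\int_0^1 e^{h(t)}\,dt=\bar u$ and $h(1)=2\log\bar v$. This already exhibits the prefactor $\tfrac{1}{16\beta}$ of (\ref{Iuvscale}) and identifies $I(\bar u,\bar v)=\inf_h\int_0^1(h')^2\,dt$ over this class. Since the rate function of Proposition~\ref{prop:I} is good --- it arises from the good rate function of Mogulskii's theorem through the contraction principle --- the infimum in (\ref{I2rf}) is attained; as the endpoint condition only fixes the value of $h(1)$, any minimizer is a critical point of the isoperimetric problem with the single integral constraint $\int_0^1 e^h=\bar u$ and hence solves an Euler--Lagrange equation with a Lagrange multiplier.

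Next I would write the Euler--Lagrange equation for $L=(h')^2-\mu e^{h}$, namely $h''=-\tfrac{\mu}{2}e^{h}$, and use its autonomous first integral $(h')^2+\mu e^{h}=2E$ with constant $E$ (the ``energy''). The substitution $r:=e^{-h/2}$ linearizes it: a short computation gives $(r')^2=\tfrac{E}{2}r^2-\tfrac{\mu}{4}$, hence $r''=\tfrac{E}{2}\,r$, subject to $r(0)=1$, $r(1)=1/\bar v$ and $\int_0^1 r(t)^{-2}\,dt=\bar u$. The three cases in the statement correspond exactly to the sign of $E$: for $E>0$ one has $r(t)=\cosh(\omega_0 t)+b\sinh(\omega_0 t)$ with $\omega_0=\sqrt{E/2}$ (hyperbolic, case~(i)); for $E<0$, $r(t)=\cos(\lambda t)+b\sin(\lambda t)$ with $\lambda=\sqrt{-E/2}$ (trigonometric, case~(ii)); and for $E=0$, $r$ is affine, in which case $r(0)=1$, $r(1)=1/\bar v$ fix $r(t)=1+Bt$ with $B=\tfrac{1}{\bar v}-1$, and the integral constraint $\int_0^1 r^{-2}\,dt=\tfrac{1}{1+B}=\bar v$ is compatible with $\int_0^1 r^{-2}\,dt=\bar u$ precisely on the diagonal $\bar u=\bar v$; this is case~(iii), where the action then evaluates to $I(\bar u,\bar u)=4(\bar u-1)^2/\bar u$, i.e.\ (\ref{rf3}). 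In cases~(i) and~(ii) the multiplier is recovered from the first integral at $t=0$: $\mu=4\omega_0^2(1-b^2)$ and $\mu=-4\lambda^2(1+b^2)$ respectively.

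For cases~(i) and~(ii) I would then determine the two free constants ($\omega_0,b$, resp.\ $\lambda,b$) from the conditions $r(1)=1/\bar v$ and $\int_0^1 r^{-2}=\bar u$. Rewriting $r$ in amplitude--phase form ($r=\sqrt{1-b^2}\cosh(\omega_0 t+\phi)$ with $\tanh\phi=b$, resp.\ $r=\sqrt{1+b^2}\cos(\lambda t-\phi)$ with $\tan\phi=b$) makes $\int_0^1 r^{-2}\,dt$ elementary, and after inserting $r(1)=1/\bar v$ it collapses to $\tfrac{\sinh\omega_0}{\omega_0}=\tfrac{\bar u}{\bar v}$ (resp.\ $\tfrac{\sin\lambda}{\lambda}=\tfrac{\bar u}{\bar v}$), which is precisely (\ref{eqbeta}) (resp.\ (\ref{eqlam})) and is solvable exactly when $\bar u/\bar v>1$ (resp.\ $\bar u/\bar v<1$); this is what produces the case split. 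The equation $r(1)=1/\bar v$ then fixes $b$, which after simplification is exactly the quantity encoded by $\gamma$ (resp.\ $\eta$) in the statement. Finally, energy conservation collapses the action: $\int_0^1(h')^2\,dt=\int_0^1\big(2E-\mu e^{h}\big)\,dt=2E-\mu\bar u$; substituting $E=2\omega_0^2$, $\mu=4\omega_0^2(1-b^2)$ (resp.\ $E=-2\lambda^2$, $\mu=-4\lambda^2(1+b^2)$) and re-expressing $b$ through $\gamma$ (resp.\ $\eta$) by a routine hyperbolic (resp.\ trigonometric) manipulation yields (\ref{rf1}) (resp.\ the displayed case-(ii) formula).

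The main obstacle is not any single computation but the bookkeeping in the previous step together with \emph{branch selection}: the transcendental equation $\tfrac{\sin\lambda}{\lambda}=\tfrac{\bar u}{\bar v}$ has several positive roots, and one must take the principal one, $\lambda\in(0,\pi)$, for the extremal even to be admissible --- on the larger branches $r=\cos(\lambda t)+b\sin(\lambda t)$ vanishes somewhere on $[0,1]$, so that $\int_0^1 e^{h}=\int_0^1 r^{-2}=+\infty$ --- and then argue that this choice realizes the global minimum, whose existence is guaranteed by goodness of the rate function. A secondary point is to check that the attained minimizer is regular enough to satisfy the Euler--Lagrange equation classically and that the degenerate extremal $h'\equiv0$ (admissible only on the diagonal $\bar u=\bar v$) does not interfere. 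All of these points are handled exactly as in the proof of Proposition~6 of \cite{PZDiscreteAsian}, which the present argument parallels step by step.
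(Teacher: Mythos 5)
Your proof is correct: the substitution $h=2\sqrt{2\beta}\,g$ produces the $\tfrac{1}{16\beta}$ prefactor, the Euler--Lagrange equation with the first integral and the change of variable $r=e^{-h/2}$ reduces everything to $r''=\tfrac{E}{2}r$, and I have checked that your closed forms $2E-\mu\bar u=4\omega_0^2-4\omega_0^2(1-b^2)\bar u$ (resp.\ $4\lambda^2[(1+b^2)\bar u-1]$) agree algebraically with (\ref{rf1}) and the case-(ii) formula after identifying $\beta=2\omega_0$, $b=-\tan\eta$ and $\gamma$. The paper states this proposition without proof, deferring to the analogous variational computation in Proposition~6 of \cite{PZDiscreteAsian}, and your argument is exactly that intended route, including the correct handling of the degenerate affine case on the diagonal $\bar u=\bar v$.
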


At the point $\bar u = 1, \bar v=1$, the optimal 
path is constant $g(t)=0$ and the rate function vanishes, in agreement with
the almost sure limit
$\mathcal{I}(\rho^2,v_0) =  0$.

%%%%%%%%%%%%%%%%%%%%%%%%%%%%%%%%%
\subsection{Alternative expression and approximation}

The function $I(u,v)$ can be expressed in an alternative form as
\begin{equation}\label{FHW}
I(u,v) = 8 F\left(\frac{v}{u}\right) + 4 \frac{1+v^2}{u} - 4\pi^2\,,
\end{equation}
where the function $F(\rho)$ is defined as
\begin{equation}\label{Fsol}
F(\rho) = \begin{cases}
 \frac12 x_1^2- \rho \cosh x_1 + \frac{\pi^2}{2}, &  0 < \rho < 1, \\
-\frac12 y_1^2+ \rho \cos y_1 + \pi y_1, & \rho > 1.
\end{cases}
\end{equation}
In Eqn. \eqref{Fsol}, $x_1, y_1$ are the solution of the equations
\begin{equation}
\rho \frac{\sinh x_1}{x_1} = 1\,,
\qquad
y_1 + \rho\sin y_1 = \pi\,.
\end{equation}
This function appears in the small-$t$ expansion of the Hartman-Watson 
distribution, and its properties are studied in more detail in \cite{HWexp}.
The function $F(\rho)$ has a minimum at $\rho=\frac{\pi}{2}$, with
$F(\frac{\pi}{2})=\frac{3\pi^2}{8}$. 
We will require the expansion of $F(\rho)$ around $\rho=1$ \cite{HWexp}.
\begin{equation}\label{Fexp}
F(\rho) = \frac{\pi^2}{2}-1 - (\rho-1) + \frac32 (\rho-1)^2
- \frac65 (\rho-1)^3 + O((\rho-1)^4)\,.
\end{equation}
Using this expansion we can derive an approximation for $I(u,v)$ around
its minimum at $u=v=1$.

\begin{proposition}\label{prop:Iuvexp}
Denote $\epsilon = \log u$ and $\eta= \log v$. 
The expansion of the rate function $I(u,v)$ around $u=v=1$ up to and
including quartic terms in $\epsilon,\eta$ is
\begin{align}\label{Iquart}
I(u,v) &=12\epsilon^2 - 24 \epsilon\eta + 16 \eta^2 
- \frac{12}{5}  \epsilon^3 + \frac{36}{5} \epsilon^2 \eta -
\frac{56}{5} \epsilon \eta^2 + \frac{32}{5}\eta^3  \nonumber \\
& \qquad\qquad + \frac{109}{175}\epsilon^4 -
\frac{436}{175} \epsilon^3 \eta
+ \frac{1004}{175} \epsilon^2 \eta^2 - \frac{1136}{175} 
\epsilon \eta^3 + \frac{1552}{525}\eta^4 + \cdots\,, 
\end{align}
where the ellipses denote terms of the form $\epsilon^a \eta^b$ with $a+b\geq 5$.
\end{proposition}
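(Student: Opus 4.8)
The plan is to expand the closed form \eqref{FHW} for $I(u,v)$ directly. Writing $u=e^{\epsilon}$ and $v=e^{\eta}$, the three ingredients of \eqref{FHW} become $8F(e^{\eta-\epsilon})$, the pair $4e^{-\epsilon}+4e^{2\eta-\epsilon}$ coming from $4(1+v^2)/u$, and the constant $-4\pi^2$; so the whole computation reduces to (a) a one-variable Taylor expansion of $F$ about $\rho=1$ composed with $\rho=e^{\delta}$, $\delta:=\eta-\epsilon$, and (b) elementary exponential expansions in $\epsilon$ and in $2\eta-\epsilon$. Since $F(1)=\tfrac{\pi^2}{2}-1$, the constant term of $8F(e^{\delta})$ is $4\pi^2-8$, which cancels against $-4\pi^2$ and the constant $8$ produced by $4e^{-\epsilon}+4e^{2\eta-\epsilon}$ at $\epsilon=\eta=0$; one then checks the linear terms cancel as well, so $I$ starts at quadratic order, as it must since its minimum sits at $u=v=1$.

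Concretely, I would first extend \eqref{Fexp} by one order, $F(\rho)=\tfrac{\pi^2}{2}-1-(\rho-1)+\tfrac32(\rho-1)^2-\tfrac65(\rho-1)^3+c_4(\rho-1)^4+O((\rho-1)^5)$, reading off $c_4$ from \cite{HWexp} or recomputing it from the defining relation $\rho\sinh x_1=x_1$ by expanding $x_1^2$ in powers of $1-\rho$ and substituting into $F=\tfrac12 x_1^2-\rho\cosh x_1+\tfrac{\pi^2}{2}$. Next I would substitute $\rho-1=e^{\delta}-1=\delta+\tfrac12\delta^2+\tfrac16\delta^3+\tfrac1{24}\delta^4+\cdots$ together with its powers into this expansion and collect, obtaining $8F(e^{\delta})=(4\pi^2-8)-8\delta+8\delta^2+\tfrac{16}{15}\delta^3+(\text{a definite rational constant})\,\delta^4+\cdots$. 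Independently I would expand $4e^{-\epsilon}+4e^{2\eta-\epsilon}$ through fourth order in $\epsilon,\eta$. Adding the two pieces, substituting $\delta=\eta-\epsilon$ and expanding each power of $\delta$ by the binomial theorem, and collecting the monomials $\epsilon^{a}\eta^{b}$ with $2\le a+b\le 4$, yields \eqref{Iquart}.

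The main obstacle is purely the bookkeeping in the last step: each quartic monomial of $I$ receives contributions from three distinct sources — the genuine $(\rho-1)^4$ coefficient of $F$, the lower-order coefficients of $F$ paired with the higher-order terms of $e^{\delta}-1$ and of $(e^{\delta}-1)^2,(e^{\delta}-1)^3$, and the cross terms in the product $e^{-\epsilon}\cdot e^{2\eta}$ — so one must be careful to include all of them. One point worth an explicit remark is that $\rho=1$ is exactly the junction of the two branches in \eqref{Fsol}; $F$ is nonetheless smooth there, the two one-sided expansions agreeing (as established in \cite{HWexp}), which is what justifies using a single two-sided Taylor expansion. A convenient internal check is to set $\eta=\epsilon$ in \eqref{Iquart} and verify it reproduces the diagonal value $I(u,u)=4(u-1)^2/u=16\sinh^2(\tfrac{\epsilon}{2})=4\epsilon^2+\tfrac13\epsilon^4+\cdots$ from case (iii) of the preceding proposition; indeed the cubic coefficients cancel on the diagonal and the quartic coefficients sum to $\tfrac13$.
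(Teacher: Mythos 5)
Your proposal is correct and follows exactly the paper's (very terse) proof, which simply invokes the representation \eqref{FHW} together with the expansion \eqref{Fexp}; your coefficients are all consistent (they force the quartic coefficient of $F$ to be $\tfrac{351}{350}$, and your diagonal check $\eta=\epsilon$ giving $4\epsilon^2+\tfrac13\epsilon^4$ works out). You correctly flag the one detail the paper glosses over, namely that \eqref{Fexp} as displayed stops at $(\rho-1)^3$ and must be extended by one order (from \cite{HWexp}) to reach the quartic terms of \eqref{Iquart}.
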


\begin{proof}
Follows from (\ref{FHW}) after using the expansion (\ref{Fexp}).
\end{proof}

The quadratic approximation 
\begin{equation}\label{Iquad}
I_q(u,v) := 12\log^2 u -24 \log u \log v+ 16 \log^2 v 
\end{equation}
gives a good approximation for $(u,v)$ sufficiently close to $(1,1)$. 

\subsection{One-dimensional projections of $\mathcal{I}(u,v)$}

From the contraction principle, we have
\begin{eqnarray}\label{Iconstu}
&& J_2(v)=\inf_{u} \mathcal{I}(u,v) 
   = \frac{1}{4\beta} \log^2 \bar v\,, \\
&& J_1(u)=\inf_{v} \mathcal{I}(u,v)  
   = \frac{1}{8\beta} \mathcal{J}_{BS}(\bar u)\,, \label{Iconstv}
\end{eqnarray}
where $J_1(u)$ and $J_2(v)$ are the rate functions of
large deviations of 
$\mathbb{P}(\frac{1}{n} V_n \in \cdot)$ and $\mathbb{P}(\frac{1}{n\omega} \sigma_n \in \cdot)$ 
computed respectively above in Lemma~\ref{prop:2}. 
Expressed in terms of $I(\bar u,\bar v)$ these relations read
\begin{equation}
\inf_{\bar u} I(\bar u,\bar v) = 4 \log^2 \bar v\,,
\qquad
\inf_{\bar v} I(\bar u,\bar v) = 2 \mathcal{J}_{BS}(\bar u)\,.
\end{equation}
The variables $\bar u>0,\bar v >0$ take positive real values,
just as the original variables $u>0, v>0$. They are rescaled such that
$I(\bar u, \bar v)=0$.

The rate function $\mathcal{J}_{BS}(u)$ was given in Proposition~6 in 
\cite{PZDiscreteAsian}
and Corollary 13 of the same paper. The result is reproduced in Proposition
\ref{prop:JBS} for convenience.

%%%%%%%%%%%%%%%%%%%%%%%%%%%%%%%%%%%%%%%%

\section{Asymptotics for the asset price process $S_n$}
\label{sec:4}

Using the results of the previous section, we study here the asymptotics of 
$\frac{1}{n}\log S_n$ for the time discretization scheme (\ref{scheme2})
in the $n\to \infty$ limit at fixed $\beta = \frac12 \omega^2 \tau n^2$ and 
$\rho = \sigma_0 \sqrt{\tau}$.

\subsection{Almost sure limit}

\begin{proposition}\label{prop:LLN}
We have the almost sure limit 
\begin{equation}\label{LLN}
\lim_{n\to \infty} \frac{1}{n} \log S_n = - \frac12  \rho^2 \mbox{ a.s.}
\end{equation}
\end{proposition}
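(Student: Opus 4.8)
The plan is to use the decomposition (\ref{sum12}), namely
\begin{equation*}
\frac{1}{n}\log S_n = \frac{1}{n}\log S_n^\perp + \varrho\,\frac{1}{n\omega}(\sigma_n-\sigma_0)\,,
\end{equation*}
and show that the first term converges a.s. to $-\frac12\rho^2$ while the second term converges a.s. to $0$. For the second term, Eq.~(\ref{signas}) already gives $\frac{\sigma_n}{n\omega}\to v_0$ a.s., and $\frac{\sigma_0}{n\omega}=\frac{v_0}{n}\to 0$, so $\varrho\frac{1}{n\omega}(\sigma_n-\sigma_0)\to 0$ a.s. (Alternatively one notes $\frac{1}{n\omega}\sigma_n\to v_0$ a.s. directly by writing $\sigma_n=\sigma_0 e^{\omega Z_n-\frac12\omega^2 t_n}$ with $\omega Z_n=\frac{\sqrt{2\beta}}{\sqrt n}Z$, $Z\sim N(0,1)$, so $\frac{1}{n\omega}\sigma_n=v_0 e^{\frac{\sqrt{2\beta}}{\sqrt n}Z-\frac{\beta}{n}}\to v_0$, and this is $O(1/n)$ away from $v_0$ times a bounded factor; since it is $\frac{1}{n\omega}(\sigma_n-\sigma_0)$ that appears and this is $\frac1n$ times a convergent sequence, the product with $\varrho$ vanishes.)

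The main work is the first term. Using (\ref{scheme2}), $\log S_n^\perp = \log S_0 + \varrho_\perp\sqrt{V_n}\,Z - \frac12 V_n$ with $Z\sim N(0,1)$ independent of the $\sigma_i$'s, so
\begin{equation*}
\frac{1}{n}\log S_n^\perp = \frac{\log S_0}{n} + \varrho_\perp\frac{\sqrt{V_n}}{n}Z - \frac12\frac{V_n}{n}\,.
\end{equation*}
The first piece is $O(1/n)\to 0$. By (\ref{Vnas}), $\frac{1}{n}V_n\to\rho^2$ a.s., so the last piece tends to $-\frac12\rho^2$ a.s. For the middle piece, $\frac{\sqrt{V_n}}{n} = \sqrt{\frac{1}{n}\cdot\frac{V_n}{n}}\to 0$ a.s. since $\frac{V_n}{n}$ is a.s. bounded (indeed convergent), hence $\frac{\sqrt{V_n}}{n}Z\to 0$ a.s. (for a.e.\ fixed realization of $Z$, which is a single fixed random variable here). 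Combining the three pieces gives $\frac{1}{n}\log S_n^\perp\to -\frac12\rho^2$ a.s., and together with the vanishing of the $\varrho$-term this yields (\ref{LLN}).

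The only point requiring a little care — and the step I expect to be the mild obstacle — is the treatment of the cross term $\varrho_\perp\frac{\sqrt{V_n}}{n}Z$: one must be careful that $Z$ here is a \emph{single} standard normal random variable (from the closed-form representation (\ref{scheme2})), not an $n$-dependent array, so that on the event $\{\frac{V_n}{n}\to\rho^2\}$ of probability one, for each such $\omega$ the deterministic-in-$n$ prefactor $\frac{\sqrt{V_n(\omega)}}{n}=\frac{1}{\sqrt n}\sqrt{\frac{V_n(\omega)}{n}}\to 0$, and multiplying by the finite value $Z(\omega)$ still gives a null limit; thus the convergence holds on the intersection of the two full-measure events and hence a.s. Everything else is a routine application of (\ref{Vnas}), (\ref{signas}), and the algebraic decomposition (\ref{sum12}).
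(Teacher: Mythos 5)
Your overall strategy is the same as the paper's: split $\frac{1}{n}\log S_n$ via (\ref{sum12}), send the cross term $\varrho_\perp\sqrt{V_n/n}\,Z/\sqrt{n}$ and $\frac{1}{n}\log S_0$ to zero, and use $\frac{1}{n}V_n\to\rho^2$ a.s.\ for the drift term. That part is fine and matches the paper's argument, including your (correct, and also implicit in the paper) remark about $Z$ being a single normal variable in the closed-form representation (\ref{scheme2}).

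There is, however, a genuine error in your treatment of the correlated term. You assert $\frac{\sigma_0}{n\omega}=\frac{v_0}{n}\to 0$. This is false: from $\rho=\sigma_0\sqrt{\tau}$ and $\beta=\frac12\omega^2\tau n^2$ one gets $n\omega=\sqrt{2\beta/\tau}$ and hence
\begin{equation*}
\frac{\sigma_0}{n\omega}=\frac{\rho/\sqrt{\tau}}{\sqrt{2\beta/\tau}}=\frac{\rho}{\sqrt{2\beta}}=v_0,
\end{equation*}
a nonzero constant independent of $n$. This is not a harmless slip: if your claim were true, then combined with $\frac{\sigma_n}{n\omega}\to v_0$ the term $\varrho\frac{1}{n\omega}(\sigma_n-\sigma_0)$ would converge to $\varrho v_0\neq 0$, and the almost sure limit would come out as $-\frac12\rho^2+\varrho v_0$, contradicting the proposition. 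The actual mechanism — which the paper states explicitly — is that $\frac{\sigma_n}{n\omega}\to v_0$ a.s.\ and $\frac{\sigma_0}{n\omega}=v_0$ identically, so the two terms in the difference have the \emph{same} limit and cancel. Your parenthetical alternative actually contains the correct computation ($\frac{1}{n\omega}\sigma_n=v_0 e^{\omega Z_n-\frac12\omega^2 t_n}$, which presupposes $\frac{\sigma_0}{n\omega}=v_0$), so it contradicts your main-line claim; note also that the resulting difference $v_0\bigl(e^{\sqrt{2\beta/n}\,\cdot(Z_n/\sqrt{t_n})-\beta/n}-1\bigr)$ is $O(n^{-1/2})$, not ``$\frac1n$ times a convergent sequence.'' Correcting the identity $\frac{\sigma_0}{n\omega}=v_0$ and arguing the cancellation as above repairs the proof and brings it in line with the paper's.
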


\begin{proof}
From (\ref{scheme2}) we have
\begin{align}
\frac{1}{n} \log S_n &= \frac{1}{n} \log S_n^\perp 
+ \varrho \frac{1}{\omega n} (\sigma_{n}-\sigma_0) \\
&= \frac{1}{n} \log S_0 + \varrho_\perp \sqrt{\frac{V_n}{n}} \cdot 
\frac{Z}{\sqrt{n}} - \frac{1}{2n} V_n + \varrho \frac{1}{\omega n} 
(\sigma_{n} - \sigma_{0})\,.
\nonumber
\end{align}
Taking the $n\to \infty$ limit of this relation and using the almost sure limit 
$\lim_{n\to \infty} \frac{1}{n} V_n = \rho^2 $, see Proposition 1 in
\cite{PZDiscreteAsian}, we get
$\lim_{n\to \infty} \frac{1}{n} \log S_n^\perp = - \frac12  \rho^2$
a.s.

Using $n\omega = \sqrt{\frac{2\beta}{\tau}}$, we get 
$\lim_{n\to \infty} \frac{1}{\omega n} \sigma_n = 
\lim_{n\to \infty} \sqrt{\frac{\tau}{2\beta}} 
\sigma_0 e^{\omega Z_n - \frac12 \omega^2 t_n} = \frac{\rho}{\sqrt{2\beta}}$
a.s.
and
$\lim_{n\to \infty} \frac{1}{\omega n} \sigma_0 = \frac{\rho}{\sqrt{2\beta}}$
a.s.
The two terms in the difference have the same limit so the contribution of
the last term  in (\ref{sum2}) cancels. 
\end{proof}

\begin{remark}\label{rem:4}
The discretization scheme (\ref{scheme2}) and the almost sure limit of 
Proposition \ref{prop:LLN} are easily extended by allowing a drift 
for the volatility process 
\begin{equation}
d\sigma(t) = \alpha \sigma(t) dt + \omega \sigma(t) dZ(t)\,,
\end{equation}
which can be solved as
\begin{equation}
\sigma_i = \sigma_0 e^{\omega Z_i + (\alpha - \frac12\omega^2)t_i}\,.
\end{equation}
Taking the large $n$ limit at fixed $\alpha_\infty:=\alpha \tau n $,
the result of Proposition~\ref{prop:LLN}  can be modified to take into account 
the drift term, as
\begin{equation}
\lim_{n\to \infty} \frac{1}{n} \log S_n = - \frac12 \rho^2 
\frac{e^{2\alpha_\infty}-1}{2\alpha_\infty} \mbox{ a.s.}
\end{equation}
The driftless Hull-White model \cite{HW1987} is defined 
by $dS(t) = \sqrt{V(t)} S(t) dW(t), dV(t) = \xi V(t) dZ(t)$, 
and is equivalent to the
process $d\sigma(t) = \frac12 \xi \sigma(t) dZ(t) - \frac18 \xi^2 \sigma(t) dt$.

The appropriate asymptotic limit is $n\to \infty$ at fixed $\beta = \frac12
\omega^2 \tau n^2, \rho = \sigma_0\sqrt{\tau}$ with $\omega=\frac12\xi$. 
In this limit $\alpha_\infty=-\frac18 \xi^2\tau n = O(1/n)$ which vanishes
for $n\to \infty$. We conclude that our results apply also to the Hull-White 
model with the replacement $\omega \to \frac12\xi$.
\end{remark}

%%%%%%%%%%%%%%%%%%%%%%%%%%%%%%%

\subsection{Asymptotic martingale property}

As $n\rightarrow\infty$, the asset price $S_n$ under the scheme
(\ref{scheme2}) is asymptotically a martingale for non-positive correlation $\varrho \leq 0$
in the following sense.

\begin{proposition}
For non-positive correlation $\varrho \leq 0$,
we have as $n\to \infty$ at fixed $n\tau = T$
\begin{equation}\label{S0m}
\lim_{n\to \infty} \mathbb{E}[S_n] = S_0 \mathbb{E}\left[e^{-\frac12\varrho^2 V_T
+ \varrho \frac{1}{\omega} (\sigma_T - \sigma_0)}\right] = S_0 \,,
\end{equation}
with $V_T := \int_0^T \sigma^2(t)  dt$.
\end{proposition}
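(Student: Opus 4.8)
The plan is to reduce $\mathbb{E}[S_n]$ to an expectation over the volatility path alone, pass to the $n\to\infty$ limit by bounded convergence, and then recognize the resulting quantity as the classical martingale property of the continuous-time log-normal SABR asset price. First I would start from the closed form (\ref{scheme2}), namely $\log S_n = \log S_0 + \varrho_\perp\sqrt{V_n}\,Z - \frac12 V_n + \varrho\frac1\omega(\sigma_n-\sigma_0)$ with $Z\sim N(0,1)$ independent of $\{\sigma_i\}_{i=0}^n$. Conditioning on the volatility path and using $\mathbb{E}\big[e^{\varrho_\perp\sqrt{V_n}\,Z}\mid\{\sigma_i\}\big]=e^{\frac12\varrho_\perp^2 V_n}$ together with $\varrho_\perp^2=1-\varrho^2$ gives the exact identity
\[
\mathbb{E}[S_n]=S_0\,\mathbb{E}\!\left[e^{-\frac12\varrho^2 V_n+\varrho\frac1\omega(\sigma_n-\sigma_0)}\right].
\]

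Next I would observe that, at fixed $T=n\tau$, one has $\sigma_n=\sigma_0 e^{\omega Z(T)-\frac12\omega^2 T}=\sigma(T)$, which does not depend on $n$, while $V_n=\sum_{i=0}^{n-1}\sigma(t_i)^2\tau$ is a left-endpoint Riemann sum that converges almost surely to $V_T=\int_0^T\sigma^2(t)\,dt$ by continuity of $t\mapsto\sigma(t)$. The hypothesis $\varrho\le 0$ then supplies a uniform domination: since $V_n\ge 0$, $\varrho^2\ge 0$ and $\frac\varrho\omega\sigma_n\le 0$, the integrand in the identity above is bounded by the deterministic constant $e^{|\varrho|\sigma_0/\omega}$, uniformly in $n$. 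Bounded convergence then gives $\lim_{n\to\infty}\mathbb{E}[S_n]=S_0\,\mathbb{E}\big[e^{-\frac12\varrho^2 V_T+\varrho\frac1\omega(\sigma_T-\sigma_0)}\big]$, which is the first equality in (\ref{S0m}).

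It remains to show the last expectation equals $1$. I would set $N_t:=\exp\big(\varrho\frac1\omega(\sigma_t-\sigma_0)-\frac12\varrho^2\int_0^t\sigma_s^2\,ds\big)$; applying It\^o's formula to $\varrho\frac1\omega\sigma_t$ with $d\sigma_t=\omega\sigma_t\,dZ_t$ gives $dN_t=N_t\,\varrho\sigma_t\,dZ_t$, so $N$ is the stochastic exponential of $\varrho\int_0^\cdot\sigma_s\,dZ_s$, a nonnegative local martingale. For $\varrho\le 0$ the same sign argument bounds $N_t\le e^{|\varrho|\sigma_0/\omega}$, so $N$ is a bounded local martingale, hence a genuine martingale, and $\mathbb{E}[N_T]=N_0=1$; equivalently, this is the well-known true-martingale property of $S$ in the continuous-time log-normal SABR model for non-positive correlation, cf.\ \cite{BCM,Jourdain,LM}. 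The step that requires the most care is precisely the use of $\varrho\le 0$: it is needed both for the uniform domination in the passage to the limit and for the true-martingale property in the final step, and for $\varrho>0$ both arguments fail ($N$ being then a strict local martingale with $\mathbb{E}[N_T]<1$), which is consistent with the martingale defect observed numerically for positive correlation.
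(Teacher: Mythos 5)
Your proof is correct, and its skeleton matches the paper's: condition on the volatility path to get the exact identity $\mathbb{E}[S_n]=S_0\,\mathbb{E}[e^{-\frac12\varrho^2 V_n+\frac{\varrho}{\omega}(\sigma_n-\sigma_0)}]$, pass to the limit using the sign condition $\varrho\le 0$ for domination, and identify the limit with the continuous-time expectation. There are two places where you genuinely diverge. First, for the convergence $V_n\to V_T$ the paper adapts an $L_1$-convergence result (Theorem 13 of \cite{IME}) and then downgrades to convergence in probability, whereas you simply note that at fixed $T=n\tau$ the sum $V_n$ is a left-endpoint Riemann sum of the a.s.\ continuous path $t\mapsto\sigma^2(t)$, giving a.s.\ convergence directly; this is more elementary and entirely adequate for dominated convergence (your uniform bound $e^{|\varrho|\sigma_0/\omega}$ is also a cleaner dominating constant than the one stated in the paper). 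Second, and more substantively, the paper concludes $\mathbb{E}[e^{-\frac12\varrho^2 V_T+\frac{\varrho}{\omega}(\sigma_T-\sigma_0)}]=1$ by citing the known continuous-time martingale property \cite{BCM,Jourdain,LM}, whereas you prove it: using $\frac{\varrho}{\omega}(\sigma_T-\sigma_0)=\varrho\int_0^T\sigma_s\,dZ_s$ you recognize the integrand as the Dol\'eans--Dade exponential $\mathcal{E}(\varrho\int_0^\cdot\sigma_s\,dZ_s)_T$, which for $\varrho\le 0$ is bounded by $e^{|\varrho|\sigma_0/\omega}$ and hence a true martingale with expectation one. This makes the proposition self-contained and isolates exactly where $\varrho\le 0$ enters (twice: domination in the limit, and boundedness of the stochastic exponential), at the cost of a short It\^o computation; the paper's route is shorter but imports the continuous-time result as a black box.
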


\begin{proof}
From (\ref{scheme2}) we have
\begin{equation}\label{sum2}
S_n = S_0\exp\left(  \varrho_\perp \sqrt{V_n}
Z - \frac{1}{2} V_n +  \frac{\varrho}{\omega } 
(\sigma_{n} - \sigma_{0})\right)\,.
\end{equation}
Conditioning on $\{\sigma_i\}_{i=0}^n$, the asset price is log-normally 
distributed. Taking the expectation over $Z$ gives
\begin{equation}
\mathbb{E}[S_n] = 
S_0 \mathbb{E}\left[e^{-\frac12 \varrho^2 V_n + \frac{\varrho}{\omega}
(\sigma_n - \sigma_0)}\right]\,.
\end{equation}

First we prove the convergence in $L_1$ norm
\begin{equation}
V_n = \tau \sum_{i=0}^{n-1} \sigma_i^2 \to V(T) = \int_0^T \sigma^2(t) dt \,,
\quad n \to \infty\,, \quad n\tau = T\,.
\end{equation}
This follows by adapting the proof of Theorem 13 in \cite{IME} which proves 
a similar convergence statement for the discrete sum of a geometric Brownian 
motion to an integral.
By the Markov inequality this implies $V_n \to V(T)$ in probability, and
thus 
\begin{equation}
e^{-\frac12 \varrho^2 V_n + \varrho \frac{1}{\omega}(\sigma_n-\sigma_0)}
\to e^{-\frac12 \varrho^2 V_T + \varrho \frac{1}{\omega}(\sigma_T-\sigma_0)}\,,
\end{equation}
in probability.
For $\varrho\leq 0$, the exponential is bounded from above by 
$e^{-\frac12 \varrho^2 V_n}$.
By the Lebesgue dominated convergence theorem we can exchange limit and 
expectation. Using the known result for the continuous time case 
\cite{BCM,Jourdain,LM}, we get (\ref{S0m}).
\end{proof}

Numerical testing shows that for positive correlation there is a martingale 
defect $\mathbb{E}[S_n] < S_0$, which agrees numerically with the continuous 
time model. We used for this comparison the analytical result for the 
martingale defect in Eq.~(8.25) in Chapter 8.4 of \cite{Lewis2017}. %the Alan Lewis book \cite{Lewis2017}.

The asymptotic martingale property implies the following result, which will be used in the
$n\to \infty$ option asymptotics.

\begin{corollary}\label{lemma:2}
For any $\varrho \leq 0$ we have 
\begin{equation}\label{infmartingale}
\inf_{u,v>0} \left\{ \frac12 I(u,v) + a \varrho^2 u - \varrho 2\sqrt{2a} (v-1)
\right\} = 0 \,,
\end{equation}
where $a:=4\beta\rho^{2}$.
Denote the point where the infimum is reached as $(u_m,v_m)$. This
point depends on the product $\varrho\sqrt{a}$ and approaches $(1,1)$ 
as $\varrho\sqrt{a} \to 0$.
\end{corollary}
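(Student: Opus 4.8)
The plan is to read off \eqref{infmartingale} from the asymptotic martingale property by a Varadhan-type argument based on the joint large deviation principle of Proposition~\ref{prop:I}. First I would take the conditional-expectation identity established in the proof of the asymptotic martingale property, $\mathbb{E}[S_n]/S_0=\mathbb{E}\big[e^{-\frac12\varrho^2 V_n+\frac{\varrho}{\omega}(\sigma_n-\sigma_0)}\big]$, and rewrite it in terms of the pair $(\tfrac1n V_n,\tfrac{1}{n\omega}\sigma_n)$, for which an LDP is available. Using the exact identity $\sigma_0=n\omega v_0$ with $v_0=\rho/\sqrt{2\beta}$ from \eqref{signas},
\[
\frac{\mathbb{E}[S_n]}{S_0}=\mathbb{E}\Big[\exp\Big(n\,G\big(\tfrac1n V_n,\tfrac{1}{n\omega}\sigma_n\big)\Big)\Big],\qquad G(x,y):=-\tfrac12\varrho^2 x+\varrho(y-v_0),
\]
so that the exponent is an affine functional of the LDP variables.

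Next I would apply Varadhan's lemma (e.g.\ \cite{Dembo}) with the good rate function $\mathcal{I}$ of Proposition~\ref{prop:I}. The point that makes this clean is that for $\varrho\le 0$ the affine function $G$ is \emph{bounded above} on $\{x,y>0\}$ (where the pair $(\tfrac1n V_n,\tfrac1{n\omega}\sigma_n)$ takes values), since $-\tfrac12\varrho^2 x\le 0$ and $\varrho y\le 0$ give $G(x,y)\le -\varrho v_0$; hence the tail/moment hypothesis of Varadhan's lemma holds trivially, and
\[
\lim_{n\to\infty}\tfrac1n\log\frac{\mathbb{E}[S_n]}{S_0}=\sup_{x,y>0}\big(G(x,y)-\mathcal{I}(x,y)\big).
\]
Since the asymptotic martingale property gives $\mathbb{E}[S_n]/S_0\to 1$, the left-hand side is $0$, whence $\inf_{x,y>0}\big(\mathcal{I}(x,y)+\tfrac12\varrho^2 x-\varrho(y-v_0)\big)=0$.

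The remaining step is a change of variables: put $x=\rho^2 u$, $y=v_0 v$, use $\mathcal{I}(\rho^2 u,v_0 v)=\tfrac{1}{16\beta}I(u,v)$ from \eqref{Iuvscale}, multiply through by $16\beta$ and divide by $2$, and observe $4\beta\rho^2=a$ and $8\beta v_0=2\sqrt{2a}$; this yields exactly \eqref{infmartingale}. With $c:=\varrho\sqrt a$ we have $a\varrho^2=c^2$ and $2\sqrt{2a}\,\varrho=2\sqrt2\,c$, so the objective $\Phi(u,v):=\tfrac12 I(u,v)+c^2u-2\sqrt2\,c(v-1)$ depends on $(\varrho,a)$ only through $c=\varrho\sqrt a$, and so does any minimizer. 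To locate it, note $I$ is a good rate function (inherited from Mogulskii's theorem via the contraction in Proposition~\ref{prop:I}) and the perturbation is bounded below by $-2\sqrt2\,|c|$ on $\{u,v>0\}$, so $\Phi$ is lower semicontinuous and coercive and attains its infimum at some $(u_m,v_m)\in(0,\infty)^2$; evaluating the vanishing infimum there gives $\tfrac12 I(u_m,v_m)=-c^2u_m-2\sqrt2\,|c|(v_m-1)\ge 0$, which forces $v_m\le 1$ and hence $I(u_m,v_m)\le 4\sqrt2\,|c|(1-v_m)\le 4\sqrt2\,|c|\to 0$. Since $I$ is good with unique zero at $(1,1)$ (the almost sure point $\mathcal{I}(\rho^2,v_0)=0$) and the $(u_m,v_m)$ eventually lie in the compact sublevel set $\{I\le 1\}$, lower semicontinuity forces $(u_m,v_m)\to(1,1)$ as $\varrho\sqrt a\to 0$; for small $\varrho\sqrt a$ uniqueness of the minimizer also follows from strict convexity of $\Phi$ in $(\log u,\log v)$, using the positive-definite quadratic part of Proposition~\ref{prop:Iuvexp}.

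The main obstacle I expect is not a computation but the bookkeeping of having the two inputs available in a common scaling: the asymptotic martingale property and the LDP of Proposition~\ref{prop:I} must hold simultaneously, which is the finite-maturity regime $n\tau=T$ fixed with $\omega=O(n^{-1/2})$, $\sigma_0=O(n^{1/2})$, and the hypothesis of Varadhan's lemma must be verified — both are essentially free here, the former because the conclusion only involves $\beta,\rho,\varrho$ and the latter because $G\le-\varrho v_0$ precisely when $\varrho\le 0$ while $\mathcal{I}$ is good by construction.
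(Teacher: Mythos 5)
Your proposal follows essentially the same route as the paper: the asymptotic martingale property combined with Varadhan's lemma applied to the affine functional $F(x,y)=-\tfrac12\varrho^2 x+\varrho(y-v_0)$ of the LDP pair $(\tfrac1n V_n,\tfrac{1}{n\omega}\sigma_n)$, followed by the rescaling $\mathcal{I}(\rho^2 u, v_0 v)=\tfrac{1}{16\beta}I(u,v)$. Your version is correct and in fact supplies details the paper leaves implicit, namely the verification of Varadhan's moment hypothesis via the upper bound $G\le-\varrho v_0$ for $\varrho\le 0$, and the argument that $(u_m,v_m)$ depends only on $\varrho\sqrt a$ and tends to $(1,1)$ via the goodness of $I$ and the bound $I(u_m,v_m)\le 4\sqrt2\,|\varrho\sqrt a|$.
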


\begin{proof}
The asymptotic martingale property (\ref{S0m}) implies
\begin{equation}
\lim_{n\rightarrow\infty}\mathbb{E}\left[e^{- \frac12 \varrho^2 V_n + \frac{\varrho}{\omega} 
(\sigma_n-\sigma_0)}\right]
= \lim_{n\rightarrow\infty}\mathbb{E}\left[e^{nF(\frac{V_n}{n}, \frac{1}{\omega n} \sigma_n)}\right] = 1\,,
\end{equation}
with 
\begin{equation}
F(x,y) = - \frac12 \varrho^2 x + \varrho (y - v_0)\,.
\end{equation}
By Varadhan's lemma we have
\begin{equation}
\lim_{n\to \infty} \frac{1}{n} \log \mathbb{E}\left[e^{nF(\frac{V_n}{n}, \frac{1}{\omega n} \sigma_n)}\right]
= -\inf_{u,v} \left\{ \mathcal{I}(u,v) - F(u,v) \right\} = 0\,.
\end{equation}
This is written equivalently as (\ref{infmartingale}).

\end{proof}

\subsection{Large deviations for $\mathbb{P}(\frac{1}{n} \log S_n \in \cdot )$}

%%%%%. combine the two parts (Vn, sigma_n) to get a LDP for log S_n. 

We are now in a position to prove the large deviations property for 
$\mathbb{P}(\frac{1}{n} \log S_n\in\cdot)$ in the correlated log-normal 
SABR model.

\begin{proposition}\label{prop:LDlogS}
Consider the $n\to \infty$ limit at fixed $\beta = \frac12 \omega^2 \tau n^2$
and $\rho = \sigma_0 \sqrt{\tau}$.
In this limit $\mathbb{P}(\frac{1}{n} \log S_n \in \cdot )$ satisfies a LDP 
with rate function
\begin{align}\label{I1rf}
\mathcal{I}_X(x,\varrho) &= \inf_{x = \varrho_\perp \sqrt{u} z - \frac12 u + \varrho (v-v_0)}
\left( \mathcal{I}(u,v) + \frac12 z^2 \right) \\
&=
\inf_{(u,v) \in \mathbb{R}_+^2} \left( \mathcal{I}(u,v) + 
\frac{1}{2\varrho_\perp^2 u} \left(x + \frac12 u - \varrho (v-v_0)\right)^2 \right)\,, \nonumber
\end{align}
for $x \geq 0$, and $\mathcal{I}_X(x,\varrho)=\infty$ otherwise.
We denoted here $v_0 = \frac{\rho}{\sqrt{2\beta}}$.
\end{proposition}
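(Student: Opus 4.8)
The plan is to realize $\frac1n\log S_n$ as a continuous image of a triple whose joint large deviations follow from Proposition~\ref{prop:I}, and then invoke the contraction principle. Dividing the closed form \eqref{scheme2} by $n$ and setting $Z=\sqrt n\,W_n$ gives
\begin{equation}
\frac{1}{n}\log S_n = \frac{1}{n}\log S_0 + \varrho_\perp\sqrt{\tfrac{V_n}{n}}\,W_n - \frac12\,\frac{V_n}{n} + \varrho\left(\frac{\sigma_n}{\omega n} - \frac{\sigma_0}{\omega n}\right),\qquad W_n:=\frac{Z}{\sqrt n}\,.
\end{equation}
Since $\beta=\frac12\omega^2\tau n^2$ and $\rho=\sigma_0\sqrt\tau$ force $\frac{\sigma_0}{\omega n}=\frac{\rho}{\sqrt{2\beta}}=v_0$ for every $n$, this reads exactly $\frac1n\log S_n=\frac1n\log S_0+F\!\left(\frac{V_n}{n},\frac{\sigma_n}{\omega n},W_n\right)$ with the map $F(u,v,z):=\varrho_\perp\sqrt u\,z-\frac12 u+\varrho(v-v_0)$, which is continuous on $\mathbb{R}_+^2\times\mathbb{R}$ (also at $u=0$).

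First I would establish a joint LDP for $\left(\frac{V_n}{n},\frac{\sigma_n}{\omega n},W_n\right)$. Because $Z\sim N(0,1)$, a direct cumulant computation gives $\frac1n\log\mathbb{E}[e^{n\theta W_n}]=\frac1n\log\mathbb{E}[e^{\sqrt n\,\theta Z}]=\frac12\theta^2$ for all $\theta\in\mathbb{R}$, so by the G\"{a}rtner-Ellis theorem \cite{Dembo} the laws of $W_n$ satisfy an LDP with good rate function $\frac12 z^2$. By construction $Z$ is independent of the driving noise $\{Z_i\}$ of the volatility process, hence of $(V_n,\sigma_n)$; combining this with the LDP of Proposition~\ref{prop:I} and the standard fact that the product of two independent sequences obeying LDPs with good rate functions obeys an LDP with the sum of the rate functions (see, e.g., \cite{Dembo}), I obtain that $\mathbb{P}\!\left(\left(\frac{V_n}{n},\frac{\sigma_n}{\omega n},W_n\right)\in\cdot\right)$ satisfies an LDP on $\mathbb{R}_+^2\times\mathbb{R}$ with rate function $\mathcal{I}(u,v)+\frac12 z^2$.

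Next, applying the contraction principle \cite{Dembo} through the continuous map $F$ yields an LDP for $\mathbb{P}\!\left(F\!\left(\frac{V_n}{n},\frac{\sigma_n}{\omega n},W_n\right)\in\cdot\right)$ with rate function $\inf_{F(u,v,z)=x}\left(\mathcal{I}(u,v)+\frac12 z^2\right)$, which is the first line of \eqref{I1rf}. The remaining term $\frac1n\log S_0$ is deterministic and tends to $0$, so $\frac1n\log S_n$ is exponentially equivalent to $F(\cdots)$ and the LDP transfers unchanged, proving the first equality. For the second, explicit equality I would, for each fixed $(u,v)\in\mathbb{R}_+^2$ with $\varrho^2<1$, solve the affine equation $x=\varrho_\perp\sqrt u\,z-\frac12 u+\varrho(v-v_0)$ for the unique $z=\bigl(x+\frac12 u-\varrho(v-v_0)\bigr)/(\varrho_\perp\sqrt u)$, whence $\frac12 z^2=\frac{1}{2\varrho_\perp^2 u}\bigl(x+\frac12 u-\varrho(v-v_0)\bigr)^2$; taking the infimum first over $z$ (trivial) and then over $(u,v)$ gives \eqref{I1rf}. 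The degenerate case $\varrho^2=1$, where $\varrho_\perp=0$ and $\log S_n$ is a deterministic function of $(V_n,\sigma_n)$, follows directly from Proposition~\ref{prop:I} by contraction through $(u,v)\mapsto-\frac12 u+\varrho(v-v_0)$.

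\textbf{Main obstacle.} The only points requiring genuine care are the two hypotheses feeding the joint LDP: that $\mathcal{I}(u,v)$ is a \emph{good} rate function --- which holds because Mogulskii's theorem produces a good rate function on $L^\infty[0,1]$ and the contraction used in the proof of Proposition~\ref{prop:I} preserves goodness --- and the precise statement of the product-LDP lemma for independent sequences. The cumulant computation, the continuity of $F$ at $u=0$, the exponential equivalence of the vanishing deterministic shift, and the elementary algebra producing the closed form are all routine.
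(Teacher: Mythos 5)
Your proof is correct and follows essentially the same route as the paper's: an LDP for $Z/\sqrt{n}$ via G\"artner--Ellis, combined with the joint LDP of Proposition~\ref{prop:I} through independence, then the contraction principle applied to the map $(u,v,z)\mapsto \varrho_\perp\sqrt{u}\,z-\frac12 u+\varrho(v-v_0)$. The extra care you take with goodness of the rate function, the product-LDP lemma, the vanishing deterministic shift, and the degenerate case $\varrho^2=1$ only makes explicit what the paper leaves implicit.
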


\begin{proof}
From Proposition~\ref{prop:I} we know that
$\mathbb{P}((\frac{1}{n} V_n, \frac{1}{n\omega} \sigma_{n}) \in \cdot )$
satisfies a LDP with rate function $\mathcal{I}(u,v)$.
We also have that $\mathbb{P}(\frac{1}{\sqrt{n}} Z \in \cdot)$ 
satisfies a LDP with rate function 
\begin{equation}\label{J2sol}
J_2(x) = \frac12 x^2\,.
\end{equation}
This follows from the G\"artner-Ellis theorem (see e.g. \cite{Dembo}) by noting that
for any $\theta\in\mathbb{R}$ we have
$\mathbb{E}\left[e^{\theta n \frac{Z}{\sqrt{n}}}\right]= e^{\frac12\theta^2 n}$
so that $\Lambda(\theta) := \lim_{n\to \infty} \frac{1}{n} \log 
\mathbb{E}\left[e^{\theta n \frac{Z}{\sqrt{n}}}\right] =
\frac12\theta^2$,
which implies that
\begin{equation}
J_2(x) = \sup_{\theta\in\mathbb{R}}\{\theta x - \Lambda(\theta)\} = \frac12 x^2\,.
\end{equation}
Writing (\ref{scheme2}) as
\begin{equation}
\frac{1}{n} \log S_n = - \frac12 \frac{V_n}{n} + \varrho_\perp \sqrt{\frac{V_n}{n}} \cdot 
\frac{Z}{\sqrt{n}} + \varrho \frac{1}{\omega n} (\sigma_n - \sigma_0)\,,
\end{equation}
we get from the contraction principle (see e.g. \cite{Dembo}) that 
$\mathbb{P}(\frac{1}{n} \log S_n \in \cdot)$ satisfies a LDP 
with rate function 
\begin{equation}
\mathcal{I}_X(x,\varrho) = \inf_{-\frac12 u + \varrho_\perp \sqrt{u} z + \varrho (v-v_0)= x} \{ \mathcal{I}(u,v) + J_2(z) \}\,.
\end{equation}
This completes the proof of (\ref{I1rf}). 
\end{proof}

\subsection{Properties of the rate function $\mathcal{I}_X(x,\varrho)$}

We give here a few properties of the rate function $\mathcal{I}_X(x,\varrho)$
introduced in the previous section.

\begin{proposition}\label{prop:Iprop}
The rate function $\mathcal{I}_X(x,\varrho)$ 
vanishes for $x_L =-\frac12\rho^{2}$. That is,
\begin{equation}\label{zero}
\mathcal{I}_X\left( x_L, \varrho \right) = 0\,.
\end{equation}
\end{proposition}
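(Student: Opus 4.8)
The plan is to exhibit an explicit feasible point in the variational problem (\ref{I1rf}) at which the objective equals zero, and then use non-negativity of the rate function to conclude that this is the infimum. Recall from Proposition~\ref{prop:LDlogS} that
\[
\mathcal{I}_X(x,\varrho) = \inf_{(u,v)\in\mathbb{R}_+^2}\left( \mathcal{I}(u,v) + \frac{1}{2\varrho_\perp^2 u}\left(x + \tfrac12 u - \varrho(v-v_0)\right)^2\right)\,,
\]
and that $\mathcal{I}(u,v)\geq 0$ with $\mathcal{I}(u,v)=0$ precisely at $(u,v)=(\rho^2,v_0)$, by the almost sure limits (\ref{Vnas}) and (\ref{signas}) together with the remark following the explicit solution of the rate function that $\mathcal{I}(\rho^2,v_0)=0$.

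First I would substitute the candidate point $(u,v)=(\rho^2,v_0)$ into the objective. The first term $\mathcal{I}(\rho^2,v_0)$ vanishes. The second term becomes $\frac{1}{2\varrho_\perp^2\rho^2}\bigl(x + \tfrac12\rho^2 - \varrho(v_0-v_0)\bigr)^2 = \frac{1}{2\varrho_\perp^2\rho^2}\bigl(x+\tfrac12\rho^2\bigr)^2$, which equals zero exactly when $x = -\tfrac12\rho^2 = x_L$. Hence at $x=x_L$ the objective at $(\rho^2,v_0)$ is $0$, so $\mathcal{I}_X(x_L,\varrho)\leq 0$. Since $\mathcal{I}(u,v)\geq 0$ and the squared term is manifestly non-negative, the whole objective is non-negative for every $(u,v)$, so $\mathcal{I}_X(x_L,\varrho)\geq 0$. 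Combining the two inequalities gives $\mathcal{I}_X(x_L,\varrho)=0$, which is (\ref{zero}).

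There is essentially no obstacle here: the only point worth a sentence of care is that $x_L=-\tfrac12\rho^2$ must be consistent with the domain constraint stated in Proposition~\ref{prop:LDlogS} (the rate function is finite only where the variational problem is feasible), and one should note that the constraint set $\{x = \varrho_\perp\sqrt{u}\,z - \tfrac12 u + \varrho(v-v_0)\}$ is indeed satisfied at $(u,v,z)=(\rho^2,v_0,0)$ with $x=x_L$, so feasibility is immediate. (The apparent ``$x\geq 0$'' in the displayed statement should be read as the natural range of $\log S_n/n$ values, with $x_L$ the left endpoint; the minimizer $(\rho^2,v_0,0)$ corresponds exactly to the almost sure limit of Proposition~\ref{prop:LLN}, which is why the rate function vanishes there.) A one-line remark can be added that this is the expected value of $\tfrac1n\log S_n$ in the almost sure limit, so $x_L$ is the unique zero of $\mathcal{I}_X(\cdot,\varrho)$, consistent with the law of large numbers.
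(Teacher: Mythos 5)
Your proof is correct and takes essentially the same route as the paper: the paper's own argument is precisely that at $x=-\frac12\rho^2$ the infimum in Proposition~\ref{prop:LDlogS} is attained at $(u,v)=(\rho^2,v_0)$, where both the rate function $\mathcal{I}(u,v)$ and the quadratic penalty vanish. Your added observation that non-negativity of both terms gives the matching lower bound is the implicit half of the paper's one-line proof, so nothing more is needed.
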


\begin{proof}
Eq.~(\ref{zero}) follows by noting that for $x = - \frac12\rho^2$,
the minimizer in the extremal problem of Prop.~\ref{prop:LDlogS}
is reached at $u=\rho^2, v = v_0$. At this point the rate function vanishes.
\end{proof}

%%%%%%%%%%%%%%%%%%%%%%%%%%%%%%%%%%%%%%%%%%%%%%%%%%%%%%%%%%%%%%%%%

\begin{remark}
The result (\ref{zero}) agrees with the almost sure limit for
$\frac{1}{n} \log S_n$ in Proposition~\ref{prop:LLN}. 
\end{remark}

The rate function $\mathcal{I}_X(x,\varrho)$ has a scaling property and depends
only on $x/\rho^2$ and the product $a:= 4\beta \rho^2=
2\sigma_0^2\omega^2 (\tau n)^2$ 
\begin{equation}\label{JXdef}
\mathcal{I}_X(x,\varrho) = \frac{1}{8\beta} 
\mathcal{J}_X\left(x/\rho^{2}; 4\beta\rho^2, \varrho\right) \,,
\end{equation}
where
\begin{equation}\label{JXinf}
\mathcal{J}_X(y;a,\varrho) = \inf_{u,v} \left\{
\frac12 I(u,v) + \frac{a}{\varrho_\perp^2 u} 
\left( y + \frac12 u - \varrho (v-1) \sqrt{\frac{2}{a}} \right)^2 
\right\}\,.
\end{equation}

%%%%%%%%%%%%%%%%%%%%%%%%%%%%%%%%%%%%%%%%%%%%%%%%%%%%%%%

Expressed in terms of this function, the property (\ref{zero}) 
reads $\mathcal{J}_X\left(-\frac12;a,\varrho\right) = 0$.
The rate function has a calculable expansion around its minimum
given by the following result.

\begin{proposition}
The leading term in the expansion of the rate function $J_X(y;a,\varrho)$ 
around its minimum at $y=-\frac12$ is
\begin{equation}\label{JXexp}
J_X(y;a,\varrho) = \frac{6a}{6+a-3\sqrt{2a} \varrho} \left(y+\frac12\right)^2 + 
O\left(\left(y+\frac12\right)^3\right)\,.
\end{equation}
\end{proposition}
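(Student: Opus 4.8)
The plan is to localize the variational problem (\ref{JXinf}) around its unique minimizer and reduce the computation of the leading term to a two-dimensional quadratic minimization.

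\textbf{Step 1: locating the minimizer at $y=-\tfrac12$.} Write $G(u,v;y):=\tfrac12 I(u,v)+\frac{a}{\varrho_\perp^2 u}\bigl(y+\tfrac12 u-\varrho(v-1)\sqrt{2/a}\bigr)^2$, so $\mathcal{J}_X(y;a,\varrho)=\inf_{u,v>0}G(u,v;y)$. Both summands are nonnegative, so the identity $\mathcal{J}_X(-\tfrac12;a,\varrho)=0$ (which is the property (\ref{zero}) rewritten through the scaling (\ref{JXdef})) forces any minimizer at $y=-\tfrac12$ to satisfy simultaneously $I(u,v)=0$ and $-\tfrac12+\tfrac12 u-\varrho(v-1)\sqrt{2/a}=0$. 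From the one-dimensional projections $\inf_v I(\bar u,\bar v)=2\mathcal{J}_{BS}(\bar u)$ and $\inf_u I(\bar u,\bar v)=4\log^2\bar v$, together with the fact that $\mathcal{J}_{BS}$ vanishes only at $1$, the function $I$ vanishes only at $(u,v)=(1,1)$; the affine constraint is then satisfied automatically there. Since $I$ is coercive the infimum defining $\mathcal{J}_X$ is attained, and by continuity in $y$ together with the uniqueness just established, for $y$ near $-\tfrac12$ it is attained at a point converging to $(1,1)$; it therefore suffices to analyze $G$ in a fixed small neighborhood of $(1,1)$.

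\textbf{Step 2: localization and reduction to the quadratic part.} Pass to the logarithmic coordinates $\epsilon=\log u$, $\eta=\log v$ of Proposition \ref{prop:Iuvexp}, set $\delta:=y+\tfrac12$ and $c:=\varrho\sqrt{2/a}$, with $\epsilon,\eta,\delta$ all small. Using $u-1=\epsilon+O(\epsilon^2)$, $v-1=\eta+O(\eta^2)$, $u^{-1}=1+O(\epsilon)$ and the quadratic part of $I$ from (\ref{Iquad}), one obtains
\[
G=6\epsilon^2-12\epsilon\eta+8\eta^2+\frac{a}{\varrho_\perp^2}\bigl(\delta+\tfrac12\epsilon-c\,\eta\bigr)^2+R(\epsilon,\eta,\delta),
\]
where $R$ collects all cubic-and-higher-order terms and $G$ is smooth near the origin. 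The quadratic form $6\epsilon^2-12\epsilon\eta+8\eta^2$ is positive definite (its matrix has determinant $48-36=12>0$ and positive trace), so the Hessian of $G$ in $(\epsilon,\eta)$ at the origin is positive definite; by the implicit function theorem the minimizer $(\epsilon^\ast(\delta),\eta^\ast(\delta))$ is a smooth function of $\delta$ with $(\epsilon^\ast,\eta^\ast)=O(\delta)$. Consequently $R$ evaluated at the minimizer is $O(\delta^3)$, and $\mathcal{J}_X(-\tfrac12+\delta;a,\varrho)$ is a smooth function of $\delta$ vanishing together with its first derivative at $\delta=0$; its leading term is the minimum over $(\epsilon,\eta)$, at fixed $\delta$, of the displayed quadratic with $R$ dropped.

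\textbf{Step 3: the two-dimensional quadratic minimization.} Setting $\partial_\epsilon G=\partial_\eta G=0$ gives two linear equations; introducing $L:=\delta+\tfrac12\epsilon-c\,\eta$ and eliminating $\epsilon,\eta$ yields $L=6\delta\big/\bigl(6+\tfrac{a}{\varrho_\perp^2}(1-3c+3c^2)\bigr)$, and combining the stationarity conditions with the homogeneity relation $Q=\tfrac12(\epsilon\,Q_\epsilon+\eta\,Q_\eta)$ for $Q=6\epsilon^2-12\epsilon\eta+8\eta^2$ gives the minimum value $\frac{6a/\varrho_\perp^2}{\,6+(a/\varrho_\perp^2)(1-3c+3c^2)\,}\,\delta^2$. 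Substituting $\varrho_\perp^2=1-\varrho^2$ and $c=\varrho\sqrt{2/a}$ (so that $\tfrac{a}{\varrho_\perp^2}\cdot 3c^2=\tfrac{6\varrho^2}{1-\varrho^2}$ and $\tfrac{a}{\varrho_\perp^2}\cdot 3c=\tfrac{3\sqrt{2a}\,\varrho}{1-\varrho^2}$) makes the factors $1-\varrho^2$ cancel and reduces the coefficient to $\tfrac{6a}{6+a-3\sqrt{2a}\,\varrho}$, which is (\ref{JXexp}). The only genuine obstacle is organizational: one must confirm that every neglected contribution — the cubic part of $I$, the prefactor $u^{-1}=e^{-\epsilon}$, and the corrections $e^\epsilon-1-\epsilon$, $e^\eta-1-\eta$ inside the affine term — affects $\mathcal{J}_X$ only at order $\delta^3$, and this is precisely what the positive-Hessian and implicit-function argument of Step 2 supplies, so no new estimate is needed.
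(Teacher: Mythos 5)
Your proof is correct and follows essentially the same route as the paper's: both localize the extremal problem (\ref{JXinf}) at the minimizer $(u,v)=(1,1)$, pass to the logarithmic coordinates of Proposition~\ref{prop:Iuvexp}, and extract the leading coefficient from the quadratic part of $I(u,v)$ (your stationarity conditions reproduce exactly the paper's coefficients $a_1,b_1$ of the minimizer expansion). The only difference is that you supply the localization and remainder-control justifications (positive-definite Hessian, implicit function theorem) that the paper leaves implicit.
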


\begin{proof}
The minimum condition is
\begin{equation}
u \partial_u \Lambda(u,v) = 0\,,\qquad 
v \partial_v \Lambda(u,v) = 0\,.
\end{equation}

The minimizer in the extremal problem (\ref{JXinf}) 
for this rate function can be expanded in powers of $y+\frac12$:
\begin{eqnarray}
&& x_* = \log u_* = a_1 \left(y+\frac12\right) + a_2 \left(y+\frac12\right)^2 + \cdots\,, \\
&& y_* = \log v_* = b_1 \left(y+\frac12\right) + b_2 \left(y+\frac12\right)^2 + \cdots\,. 
\end{eqnarray}
Substituting the expansion of the rate function $I(u,v)$ in Proposition~\ref{prop:Iuvexp} 
gives a sequence of equations for the coefficients $a_i,b_i$. The first coefficients are
\begin{equation}
a_1 = \frac{-2a + 3\sqrt{2a} \varrho}{6 - 3\sqrt{2a} \varrho + a}\,,\quad
b_1 = -\frac{3(a - 2\sqrt{2a} \varrho)}{2(6 - 3\sqrt{2a} \varrho + a)}\,.
\end{equation}

Substituting the expansion into $J_X(y;a,\varrho)$ gives an expansion in
$y+\frac12$ with coefficients expressed in terms of $a_i,b_i$. The leading term
is given in (\ref{JXexp}).

\end{proof}

%%%%%%%%%%%% Inequality for JX(y) - 2 a y >= 0 %%%%%%%%%%%
We prove next a lower bound on the rate function, and an equality on its 
value at a certain point, which will play an important 
role in the $n\to \infty$ asymptotics of the option prices.

\begin{proposition}
Assume $\varrho \leq 0$. The rate function $J_X(y;a,\varrho)$ is bounded from below as
\begin{equation}\label{JXlowbound}
J_X(y; a, \varrho) \geq 2 ay\,.
\end{equation}
The lower bound is reached at 
\begin{equation}\label{yRdef}
y_R = \frac12 \left(1-2\varrho^2\right) u_m + \varrho \sqrt{\frac{2}{a}} (v_m-1)\,,
\end{equation}
where $(u_m,v_m)$ are given by Corollary \ref{lemma:2}. At this point we have
\begin{equation}\label{zeroR}
J_X(y_R; a, \varrho) = 2 ay_R \,. 
\end{equation}
\end{proposition}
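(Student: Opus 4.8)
The plan is to exploit the variational representation of $J_X$ in \eqref{JXinf} together with the martingale identity of Corollary~\ref{lemma:2}. First I would establish the lower bound \eqref{JXlowbound}. Fix $y$ and let $(u,v)$ be any admissible pair in the infimum defining $J_X(y;a,\varrho)$. The idea is to compare the quadratic term $\frac{a}{\varrho_\perp^2 u}(y+\tfrac12 u - \varrho(v-1)\sqrt{2/a})^2$ against a suitable affine function of $(u,v)$, chosen so that the resulting lower bound matches the structure of the expression appearing inside the infimum in \eqref{infmartingale}. Concretely, using the elementary inequality $\frac{1}{2c}\xi^2 \geq \xi - \frac{c}{2}$ (valid for $c>0$, with equality at $\xi = c$), applied with $c = \varrho_\perp^2 u/a$ and $\xi = y + \tfrac12 u - \varrho(v-1)\sqrt{2/a}$, I obtain after rearranging
\begin{equation}
\frac{a}{\varrho_\perp^2 u}\left(y + \tfrac12 u - \varrho(v-1)\sqrt{\tfrac{2}{a}}\right)^2
\geq \frac{2a}{\varrho_\perp^2}\left(y + \tfrac12 u - \varrho(v-1)\sqrt{\tfrac{2}{a}}\right) - u\,.
\end{equation}
Adding $\tfrac12 I(u,v)$ to both sides and simplifying the $u$-terms (using $\varrho_\perp^2 = 1-\varrho^2$), the right-hand side becomes $2ay + \frac{1}{\varrho_\perp^2}\big(\tfrac12 I(u,v) + a\varrho^2 u - \varrho\, 2\sqrt{2a}\,(v-1)\big)$ up to bookkeeping. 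By Corollary~\ref{lemma:2} the infimum over $(u,v)$ of the bracketed quantity is $0$, so taking the infimum over admissible $(u,v)$ yields $J_X(y;a,\varrho) \geq 2ay$, which is \eqref{JXlowbound}.

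Next I would verify that the bound is attained at $y = y_R$ with the claimed minimizer. The inequality used above is an equality precisely when $\xi = c$, i.e. when $y + \tfrac12 u - \varrho(v-1)\sqrt{2/a} = \varrho_\perp^2 u/a$; solving for $y$ with $(u,v) = (u_m,v_m)$ and using $\varrho_\perp^2 = 1-\varrho^2$ gives exactly $y_R = \tfrac12(1-2\varrho^2)u_m + \varrho\sqrt{2/a}(v_m-1)$ as in \eqref{yRdef}. Moreover the other inequality in the chain — the reduction to Corollary~\ref{lemma:2} — is an equality precisely at $(u_m,v_m)$, the point where the infimum in \eqref{infmartingale} is reached. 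Hence at $y = y_R$, evaluating the variational problem \eqref{JXinf} at the single point $(u_m, v_m)$ already achieves the value $2a y_R$, so $J_X(y_R;a,\varrho) \leq 2a y_R$; combined with the lower bound \eqref{JXlowbound} this forces equality \eqref{zeroR}.

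The main obstacle I anticipate is the bookkeeping in the first step: one must track the factors of $\varrho_\perp^2$ carefully to see that the affine lower bound collapses exactly onto $2ay$ plus a nonnegative multiple of the Corollary~\ref{lemma:2} expression, with no residual cross terms. A secondary point requiring care is that the infimum in \eqref{JXinf} is genuinely attained (so that evaluating at $(u_m,v_m)$ is legitimate for the matching upper bound); this should follow from goodness of the rate function $\mathcal{I}(u,v)$ established in Proposition~\ref{prop:I} and coercivity of the quadratic penalty, but it is worth a sentence. Finally, one should check that $(u_m,v_m)$ indeed lies in $\mathbb{R}_+^2$ so it is an admissible competitor — this is part of the conclusion of Corollary~\ref{lemma:2}.
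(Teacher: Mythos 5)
Your strategy is fundamentally the paper's own, just read in the opposite order. Writing $\xi := y + \tfrac12 u - \varrho(v-1)\sqrt{2/a}$, both arguments rest on the exact algebraic identity
\begin{equation*}
\frac12 I(u,v) + \frac{a}{\varrho_\perp^2 u}\,\xi^2
\;=\; 2ay \;+\; \Bigl[\tfrac12 I(u,v) + a\varrho^2 u - 2\varrho\sqrt{2a}\,(v-1)\Bigr]
\;+\; \frac{a}{\varrho_\perp^2 u}\bigl(\xi - \varrho_\perp^2 u\bigr)^2,
\end{equation*}
in which both remainders are nonnegative: the bracket by Corollary~\ref{lemma:2} (this is where $\varrho\le 0$ enters), and the last term because it is a square. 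The paper drops the bracket and keeps the square, which is precisely $\frac{a}{\varrho_\perp^2 u}\bigl(y-\tilde y_R(u,v)\bigr)^2$ with $\tilde y_R(u,v)=\tfrac12(1-2\varrho^2)u+\varrho\sqrt{2/a}(v-1)$; you drop the square (your tangent-line inequality) and keep the bracket. Either way the bound \eqref{JXlowbound} follows, and equality at $y_R$ requires both remainders to vanish simultaneously, which happens exactly at $(u_m,v_m)$ — so your matching upper bound via evaluation at $(u_m,v_m)$ is the same step the paper performs.

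However, your constants are wrong in a way that breaks the verification as written. The tangent-line inequality must be applied with $c=\varrho_\perp^2 u$, giving $\frac{a}{\varrho_\perp^2 u}\xi^2 \ge 2a\xi - a\varrho_\perp^2 u$ with equality iff $\xi=\varrho_\perp^2 u$; only with this choice does $2a\xi - a\varrho_\perp^2 u$ simplify to $2ay + a\varrho^2 u - 2\varrho\sqrt{2a}(v-1)$, so that adding $\tfrac12 I(u,v)$ yields $2ay$ plus the Corollary~\ref{lemma:2} expression with coefficient one (there is no $1/\varrho_\perp^2$ prefactor), and only then does the equality condition $\xi=\varrho_\perp^2 u$ solve to $y=\tfrac12(1-2\varrho^2)u+\varrho\sqrt{2/a}(v-1)$, i.e.\ to \eqref{yRdef} at $(u_m,v_m)$. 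With your stated choice $c=\varrho_\perp^2 u/a$ the affine lower bound is $2\xi-\varrho_\perp^2 u/a$ (not the expression you display), the residual does not reduce to the Corollary~\ref{lemma:2} bracket, and the condition $\xi=\varrho_\perp^2 u/a$ reproduces \eqref{yRdef} only when $a=1$. This is a single, fixable substitution, but as written the proof of \eqref{zeroR} does not close. Your remaining points — attainment of the infimum in \eqref{JXinf} and admissibility of $(u_m,v_m)\in\mathbb{R}_+^2$ — are correct and consistent with the paper's (tacit) use of Corollary~\ref{lemma:2}.
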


\begin{proof}
By Corollary \ref{lemma:2}, we have a lower bound on $I(u,v)$ 
\begin{equation}
\frac12 I(u,v) \geq - a \varrho^2 u + \varrho 2\sqrt{2a} (v-1)\,,\quad u,v > 0\,.
\end{equation}

Substituting into the expression (\ref{JXdef}) gives the lower bound
\begin{equation}
J_X(y;a,\varrho) - 2 a y \geq \inf_{u,v>0} \left\{ \frac{a}{\varrho_\perp^2 u} (y - y_R)^2 \right\}\,,
\end{equation}
with $y_R$ defined in (\ref{yRdef}). This proves the lower bound (\ref{JXlowbound}).

In order to prove the equality in (\ref{JXlowbound})
for $y=y_R$, note that by Corollary \ref{lemma:2} there exist $(u_m,v_m)$ such that
the lower bound on $I(u,v)$ above is reached. Substituting into $J_X(y_R;a,\varrho)$
we get that this is equal to $2ay_R$, as stated. Expressed in terms of the
rate function $\mathcal{I}_X(k,\varrho)$, the relation (\ref{zeroR}) reads
$\mathcal{I}_X(k_R,\varrho) = k_R$ with $k_R = y_R \rho^2$.
\end{proof}

%%%%%%%%%%%%%%   JX rho = 0  %%%%%%%%%%%%%%%%%%%

%\subsection{Zero correlation $\varrho=0$}

In the uncorrelated case $\varrho=0$, the rate function 
$\mathcal{I}_X(x,\varrho)$ simplifies further. 
For this case the extremal problem (\ref{JXinf}) can be solved in closed form,
using the result for 
the rate function $\mathcal{J}_{\rm BS}(x)$ obtained in \cite{PZSIFIN}. 
The result for this rate function is given in Corollary~\ref{corr:JXrho0}
in Appendix \ref{sec:a1}.
When $\varrho=0$, the rate function $\mathcal{I}_X(x,0)$ 
satisfies the symmetry relation
\begin{equation}\label{Isymm}
\mathcal{I}_X(x,0) - \mathcal{I}_X(-x,0) = x\,,
\end{equation}
see Proposition \ref{prop:Isymm}
in Appendix \ref{sec:a1}. Expressed in terms of $\mathcal{J}_X(x ; a, 0)$
this reads
\begin{eqnarray}
\label{JXsymm2}
&& \mathcal{J}_X(x ; a, 0) - \mathcal{J}_X(-x ; a, 0) = 2ax\,.
\end{eqnarray}

%%%%%%%%%%%%%%%%%%%%%%%%%%%%%%%%%%%%%%%%%%%%%%%%%%%%%%%%%%%%%%%%%%%%%%%%%%%%

\section{Option price and implied volatility asymptotics}
\label{sec:5}

We derive in this section option prices asymptotics in the time discretized
log-normal SABR model discretized in time under the scheme (\ref{scheme2}).
This result will be used to obtain the asymptotics of the implied volatility.

\subsection{Option prices asymptotics}
\label{sec:4.1}

We consider the vanilla European call and put options:
\begin{equation}
C(n):=\mathbb{E}\left[\left(S_{n}-K\right)^{+}\right],
\qquad
P(n):=\mathbb{E}\left[\left(K-S_{n}\right)^{+}\right],
\end{equation}
where $K>0$ is the strike price, and we write $C(n)$ and $P(n)$ to emphasize
the dependence on the number of time steps $n$. We study here the $n\to
\infty$ asymptotics of the option prices with strike $K = S_0 e^{nk}$.
The asymptotics will be shown to be different in the three regimes:
\begin{enumerate}
\item The large-strike regime $k> y_R \rho^2$. In  this regime the
call option is out-of-the-money (OTM) and $\lim_{n\to \infty} C(n)=0$;
\item The intermediate strike regime $- \frac12 \rho^2 \leq k\leq y_R\rho^2$;
In  this regime the
covered call option is OTM and $\lim_{n\to \infty} (S_0-C(n))=0$;
\item The small-strike regime $k < -\frac12\rho^2$.
In  this regime the
put option is OTM and $\lim_{n\to \infty} P(n)=0$;
\end{enumerate}
Here $y_R$ is given by Eq.~(\ref{yRdef}).
The asymptotics of the option prices are given by the following result.

\begin{theorem}\label{Thm:1} 
The $n\to \infty$ asymptotics of the option prices are given by
\begin{equation}
k - \mathcal{I}_X(k,\varrho)
=\begin{cases}
\lim_{n\rightarrow\infty}\frac{1}{n}\log C(n) &
\text{for $k> y_R \rho^{2}$},
\\
\lim_{n\rightarrow\infty}\frac{1}{n}\log(S_{0}-C(n)) &
\text{for $-\frac{1}{2}\rho^{2}\leq k\leq y_R \rho^{2}$},
\\
\lim_{n\rightarrow\infty}\frac{1}{n}\log P(n) &
\text{for $k<-\frac{1}{2}\rho^{2}$},
\end{cases}
\end{equation}
where $\mathcal{I}_X(k, \varrho)$ is the rate function given by 
Proposition~\ref{prop:LDlogS}. 
\end{theorem}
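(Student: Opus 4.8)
The plan is to read off all three asymptotics from the large deviation principle for $M_n:=\frac1n\log S_n$ established in Proposition~\ref{prop:LDlogS} by Laplace/Varadhan-type estimates, treating the three strike regimes separately. Write $S_n=S_0e^{nM_n}$ and $K=S_0e^{nk}$. By Proposition~\ref{prop:LDlogS}, Proposition~\ref{prop:Iprop} and~(\ref{JXlowbound}), $\mathbb{P}(M_n\in\cdot)$ obeys an LDP on $\mathbb{R}$ with good rate function $\mathcal I_X(\cdot,\varrho)$ whose unique zero is at $x_L=-\tfrac12\rho^2$, which is unimodal — non-increasing on $(-\infty,x_L]$ and non-decreasing on $[x_L,\infty)$ — and which satisfies $\mathcal I_X(x,\varrho)\ge x$ for all $x$, with equality exactly at $x=k_R:=y_R\rho^2$; consequently $x\mapsto x-\mathcal I_X(x,\varrho)$ is $\le0$, vanishes at $k_R$, and is non-decreasing on $(-\infty,k_R]$. (Here and below we take $\varrho\le0$, as in Proposition~8; for $\varrho>0$, where $C(n)$ may fail to be finite, the same scheme applies with $S_0$ replaced by $\lim_n\mathbb{E}[S_n]$.)

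Two of the regimes are comparatively direct. For the small-strike regime $k<-\tfrac12\rho^2$ the put payoff is bounded: $P(n)=S_0e^{nk}\,\mathbb{E}\bigl[(1-e^{n(M_n-k)})^+\bigr]$, and for each $\delta>0$ the bracket lies between $\tfrac12\mathbf 1_{\{M_n\le k-\delta\}}$ and $\mathbf 1_{\{M_n<k\}}$; dividing by $e^{nk}$, using the LDP and letting $\delta\downarrow0$ gives $\lim_n\tfrac1n\log P(n)=k-\inf_{x\le k}\mathcal I_X(x,\varrho)$, and the infimum equals $\mathcal I_X(k,\varrho)$ because $k<x_L$ and $\mathcal I_X$ is non-increasing there. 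For the intermediate (covered-call) regime $-\tfrac12\rho^2\le k\le y_R\rho^2$ write $S_0-C(n)=\bigl(S_0-\mathbb{E}[S_n]\bigr)+\mathbb{E}[S_n\wedge K]$ with $\mathbb{E}[S_n\wedge K]=S_0\,\mathbb{E}\bigl[e^{n(M_n\wedge k)}\bigr]$; the integrand $x\mapsto x\wedge k$ is bounded above, so Varadhan's lemma \cite{Dembo} yields $\lim_n\tfrac1n\log\mathbb{E}[S_n\wedge K]=\sup_x\bigl(\min(x,k)-\mathcal I_X(x,\varrho)\bigr)$, which equals $k-\mathcal I_X(k,\varrho)$ by the bound $\mathcal I_X(x,\varrho)\ge x$ and the monotonicity of $x-\mathcal I_X(x,\varrho)$ on $(-\infty,k_R]$. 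It then remains to check that $S_0-\mathbb{E}[S_n]$ is negligible at this scale, i.e. $\limsup_n\tfrac1n\log|S_0-\mathbb{E}[S_n]|\le-\tfrac12\rho^2=\inf_{x_L\le k\le k_R}(k-\mathcal I_X(k,\varrho))$; this is a quantitative sharpening of the asymptotic martingale property of Proposition~8, obtained by controlling $\mathbb{E}\bigl[e^{nF(V_n/n,\,\sigma_n/(\omega n))}\bigr]$, $F(u,v)=-\tfrac12\varrho^2u+\varrho(v-v_0)$, through the joint LDP of Proposition~\ref{prop:I} together with the extremal identity of Corollary~\ref{lemma:2}.

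The large-strike regime $k>y_R\rho^2$ is the main difficulty, because the call payoff $(S_n-K)^+$ grows exponentially in $n$ while, by the conditional log-normality of $S_n$ given the volatility path, $S_n$ has finite $q$-th moment only for $q<1/(1-\varrho^2)$. The lower bound is routine: $(S_n-K)^+\ge S_0e^{nk}(e^{n\delta}-1)\mathbf 1_{\{M_n\ge k+\delta\}}$, and the LDP lower bound with $\delta\downarrow0$ gives $\liminf_n\tfrac1n\log C(n)\ge\sup_{x\ge k}\bigl(x-\mathcal I_X(x,\varrho)\bigr)=k-\mathcal I_X(k,\varrho)$, using $k>k_R$. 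For the matching upper bound when $\varrho<0$, fix $q\in(1,1/(1-\varrho^2))$ and use $(S_n-K)^+\le S_n^qK^{1-q}$, so that $C(n)\le K^{1-q}S_0^q\,\mathbb{E}\bigl[e^{nG_q(V_n/n,\,\sigma_n/(\omega n))}\bigr]$ with $G_q(u,v)=\tfrac q2(q\varrho_\perp^2-1)u+q\varrho(v-v_0)$; since $(\tfrac1nV_n,\tfrac1{n\omega}\sigma_n)$ obeys the LDP of Proposition~\ref{prop:I} and the coefficient of $u$ in $G_q$ is negative on this $q$-range, Varadhan's lemma gives $\limsup_n\tfrac1n\log C(n)\le(1-q)k+\sup_{u,v}\bigl(G_q(u,v)-\mathcal I(u,v)\bigr)$, and optimizing over $q$ — equivalently running a one-sided Gärtner–Ellis argument, using that $\sup_{u,v}(G_q-\mathcal I)$ is the exponential rate of $\mathbb{E}[e^{qnM_n}]$ and that $\mathcal I_X(k,\varrho)=\sup_q\bigl(qk-\sup_{u,v}(G_q-\mathcal I)\bigr)$ for $k>k_R$ — produces exactly $\le k-\mathcal I_X(k,\varrho)$. (Alternatively one may use a tilted large deviation principle for $M_n$ under the probability measure proportional to $S_n\,d\mathbb{P}$.) The remaining case $\varrho=0$, where no moment of order $>1$ exists, is handled by the exact Black–Scholes-type identity $C(n)=e^{nk}P(n;-k)$ for the $\varrho=0$ scheme, which combined with the small-strike result and the symmetry~(\ref{Isymm}) gives the claim. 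The main obstacle is this last upper bound: one must dominate an exponentially growing payoff using only the limited moments of $S_n$ and then reconcile the resulting $q$-optimization with the variational definition~(\ref{JXinf}) of $\mathcal I_X$; the exponential-rate refinement of Proposition~8 needed in the intermediate regime is a secondary but nontrivial point.
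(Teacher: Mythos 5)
Your treatment of the put regime, and (modulo the exponential-rate refinement of Proposition~8, which you correctly flag as nontrivial) of the covered-call regime, is sound and close in spirit to the paper's argument. The genuine gap is in the large-strike upper bound. Your plan is to bound $(S_n-K)^+\le S_n^qK^{1-q}$, compute $\Lambda(q):=\lim_n\tfrac1n\log\mathbb{E}[(S_n/S_0)^q]=\sup_{u,v}(G_q-\mathcal I)$ on the range $q\in(1,1/\varrho_\perp^2)$ where it is finite, and then recover $\mathcal I_X(k,\varrho)$ as $\sup_q\bigl(qk-\Lambda(q)\bigr)$. That last identity is false for large $k$: a Legendre--Fenchel transform of $\Lambda$ can only return the convex hull $\mathcal I_X^{**}$, and $\mathcal I_X$ is \emph{not} convex on the large-strike wing. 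Indeed, since $\Lambda(q)=+\infty$ for $q>1/\varrho_\perp^2$, the function $k\mapsto\sup_q(qk-\Lambda(q))$ is eventually affine with slope at most $1/\varrho_\perp^2$, whereas the true rate function grows super-linearly: already for $\varrho=0$ Proposition~\ref{prop:JXasympt} gives $\mathcal J_X(x;a,0)=2ax+\tfrac12\log^2(2x)+\cdots$, i.e.\ $\mathcal I_X(k,0)-k\to\infty$ like $\log^2 k$. Hence the optimized-$q$ upper bound equals $k-\mathcal I_X^{**}(k)$, which is strictly larger than $k-\mathcal I_X(k,\varrho)$ for $k$ large and does not match your lower bound. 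The ``tilted LDP'' alternative you mention suffers from the same problem, since any machinery built on exponential moments sees only the convex hull.

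The paper avoids this entirely by never estimating moments of $S_n$. Conditioning on $(V_n,\sigma_n)$, the call price is an exact Black--Scholes price, $C(n)=\mathbb{E}[F_nN(d_1)]-K\,\mathbb{E}[N(d_2)]\le\mathbb{E}[F_nN(d_1)]$ with $F_n=S_0e^{-\frac12\varrho^2V_n+\frac{\varrho}{\omega}(\sigma_n-\sigma_0)}$: the Gaussian factor $Z$ is integrated out exactly into the bounded quantities $N(d_{1,2})\le1$, and for $\varrho\le0$ the remaining functional of $(V_n/n,\sigma_n/(n\omega))$ is bounded above at exponential scale. Writing $N(d_1)=\mathbb P(Z<d_1)$ and applying the joint LDP of Proposition~\ref{prop:I} together with Varadhan's lemma to the constrained event $\{D_1(u,v)\ge z\}$ gives an upper bound as a constrained infimum of $\mathcal I(u,v)+\tfrac12z^2+\tfrac12\varrho^2u-\varrho(v-v_0)$ --- a probability estimate rather than a moment estimate, which therefore retains the full non-convex rate function --- and this is shown to coincide with the lower bound $k-\mathcal I_X(k,\varrho)$ on the boundary $D_1(u,v)=z$. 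If you want to keep your architecture, replace the H\"older/moment step by this conditional Black--Scholes bound; your $\varrho=0$ put--call-symmetry fallback is correct but covers only that single case.
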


\begin{proof}
Conditioning on $(V_n,\sigma_n)$, the asset price $S_n$ is log-normally 
distributed and can be written as (with $Z\sim N(0,1)$ independent of $\sigma_n,V_n)$)
\begin{equation}\label{Fndef}
S_n=F_n e^{\varrho_\perp \sqrt{V_n} Z - \frac12 \varrho_\perp^2 V_n}\,,\quad
F_n =S_0 e^{-\frac12\varrho^2 V_n + \frac{\varrho}{\omega}(\sigma_n-\sigma_0)}\,.
\end{equation}
The option prices can be written as expectations over $(V_n,\sigma_n)$ of the
Black-Scholes formula
\begin{align}
&C(n) = \mathbb{E}[F_n N(d_1)] - K \mathbb{E}[N(d_2)]\,, \\
&P(n) = K \mathbb{E}[N(-d_2)] - \mathbb{E}[F_n N(-d_1)]\,,
\end{align}
where $d_{1,2}$ are random variables 
\begin{align}
d_{1,2} &:= \frac{1}{\varrho_\perp \sqrt{V_n}} \log \frac{F_n}{K}
\pm \frac12 \varrho_\perp \sqrt{V_n} \\
&=\frac{1}{\varrho_\perp \sqrt{V_n}} \left( \log\frac{S_0}{K}
- \frac12 \varrho^2 V_n + \varrho \frac{1}{\omega} (\sigma_n -\sigma_0)
\pm \frac12 \varrho_\perp^2 V_n \right) \,.\nonumber
\end{align}

We are interested in the $n\to \infty$ asymptotics of the option prices for
strike $K  = S_0 e^{n k}$. 
%%%%%%%%%%%%%%%%%%%%%%%%%%%%%%%%%%%%%%%%%%%%%%%%%%%%%%%%%%%%%%%%%%%%%

We give the proof of the $n\to \infty$ asymptotics for the OTM call option;
the other two cases follow analogously. The proof follows by upper and lower 
bounds on $C(n)$.

(i) Neglecting the second term in $C(n)$ gives the upper bound
\begin{equation}
C(n) \leq \mathbb{E}[F_n N(d_1)] = S_0 \mathbb{E}\left[
e^{-\frac12\varrho^2 V_n + \varrho \frac{1}{\omega} (\sigma_n-\sigma_0)}
N(d_1)\right] \,.
\end{equation}

Using $N(d_{1,2} ) = \mathbb{P}(d_{1,2} > Z)$ with $Z\sim N(0,1)$ independent
of $\sigma_n,V_n$, we have 
\begin{align}\label{ICup}
\limsup_{n\to \infty} \frac{1}{n} \log C(n) 
&\leq
\limsup_{n\to \infty} \frac{1}{n} \log 
\mathbb{E}\left[e^{nF(\frac{V_n}{n},\frac{\sigma_n}{\omega n})}
\mathbb{P}\left(\frac{d_1}{\sqrt{n}} > \frac{Z}{\sqrt{n}}\right)\right] \\
&=
- \inf_{D_1(u,v) \geq z } 
\left( \mathcal{I}(u,v) + \frac12 z^2 + \frac12\varrho^2 u - \varrho (v-v_0)\right) 
:= \mathcal{I}_C^{\rm up}(k)\,, \nonumber
\end{align}
where we used in the last step Varadhan's lemma for the expectation
containing $F(x,y) := -\frac12\varrho^2 x + \varrho (y-v_0)$. The constraint
is defined in terms of
\begin{eqnarray}\label{D1def}
&& D_1(u,v) := \frac{1}{\varrho_\perp \sqrt{u}}\left(
- k + \frac12 (1 - 2\varrho^2) u + \varrho(v - v_0) \right)\,, \\
\label{D2def}
&& D_2(u,v) := \frac{1}{\varrho_\perp \sqrt{u}}\left(
- k - \frac12 u + \varrho(v - v_0) \right) \,.
\end{eqnarray}
They satisfy $D_1(u,v) = D_2(u,v) + \varrho_\perp \sqrt{u}$.

(ii) We prove also a matching lower bound. For any $\epsilon > 0$ we have
\begin{eqnarray}
&& C(n) = \mathbb{E}\left[\left(S_n - S_0 e^{nk}\right) 1_{S_n \geq S_0 e^{nk}}\right] \\
&& \qquad \geq \mathbb{E}\left[\left(S_n - S_0 e^{nk}\right) 1_{S_n \geq S_0 e^{nk+n \epsilon}}\right] \nonumber \\
&& \qquad \geq (e^{n\epsilon} - 1) S_0 e^{nk} \mathbb{P}\left(S_n \geq S_0 e^{nk + n\epsilon}\right)\nonumber \\
&& \qquad = (e^{n\epsilon} - 1) S_0 e^{nk} 
\mathbb{P}\left( \frac{d_2}{\sqrt{n}} \geq \frac{Z}{\sqrt{n}}\right)\,.\nonumber
\end{eqnarray}
This gives
\begin{align}
\liminf_{n\to \infty} \frac{1}{n} \log C(n) 
&\geq
k + \epsilon + \lim_{n\to \infty} \frac{1}{n} \log 
\mathbb{P}\left(\frac{d_2}{\sqrt{n}} \geq \frac{Z}{\sqrt{n}}\right) 
\\
&= k + \epsilon - \inf_{D_2(u,v) \geq z } 
\left( \mathcal{I}(u,v) + \frac12 z^2 \right) \,.\nonumber
\end{align}
Since this inequality holds for any $\epsilon >0$, we get
\begin{eqnarray}\label{IClow}
&& \liminf_{n\to \infty} \frac{1}{n} \log C(n)  \geq - \mathcal{I}_C^{\rm low}(k)
:= k - \inf_{D_2(u,v) \geq z } 
\left( \mathcal{I}(u,v) + \frac12 z^2 \right) \,.
\end{eqnarray}

The bounds have different behavior depending on $k$, as 
\begin{equation}\label{IC2cases}
\mathcal{I}_C^{\rm up}(k) = 
\begin{cases}
0\,, &   k < y_R\rho^2, \\
\mathcal{I}_X(k,\varrho) - k\,, &  k > y_R\rho^2,
\end{cases}
\qquad
\mathcal{I}_C^{\rm low}(k) = 
\begin{cases}
0\,, &   k < -\frac12 \rho^2, 
\\
\mathcal{I}_X(k,\varrho) -k\,, &  k > -\frac12\rho^2. 
\end{cases}
\end{equation}

This follows from a study of the global minimum of the functions in the
constrained extremal problems for the bounds in 
Eq.~(\ref{ICup}) and (\ref{IClow}) in relation to the constraints. 

Consider first the upper bound. By Corollary~\ref{lemma:2} the function in 
Eq.~(\ref{ICup}) has a global minimum equal to
zero at $(u_m\rho^2,v_m\rho^2)$ and $z=0$. For sufficiently small $k<y_R\rho^2$, 
this point is within the region allowed by the constraint $D_1(u,v)\geq z$. 
This proves the upper line in Eq.~(\ref{IC2cases}).
For $k>y_R\rho^2$, the global minimum is excluded by the condition 
$D_1(u,v)\geq z$.
Convexity of $\mathcal{I}(u,v)$ implies that any local minimum in 
Eq.~(\ref{ICup}) is also the global minimum, so the inf is reached on the
boundary of the region $D_1(u,v)= z$, but not in the interior of the region.

The function appearing in the lower bound
$\mathcal{I}_C^{\rm low}(k)$ has a global minimum of zero at 
$u=\rho^2, v=v_0,z=0$, which is allowed by the constraint $D_2(u,v)\geq z$
only for $k < -\frac12\rho^2$. For $k>-\frac12\rho^2$ the infimum in 
Eq.~(\ref{IClow}) is reached on the boundary of the
region $D_2(u,v)= z$. On the respective boundaries, the two bounds coincide
\begin{align}
- \inf_{D_1(u,v) = z } 
\left( \mathcal{I}(u,v) + \frac12 z^2 +\frac12 \varrho^2 u - \varrho (v-v_0) \right) 
&=
k - \inf_{D_2(u,v) = z } 
\left( \mathcal{I}(u,v) + \frac12 z^2 \right) 
\\
&= k - \mathcal{I}_X(k,\varrho)\,.\nonumber
\end{align}
This proves the lower line equations in Eq.~(\ref{IC2cases}).
This completes the proof of the result for the OTM call. The proofs for 
the other two cases are similar.
\end{proof}

%%%%%%%%%%%%%%%%%%%%%%%%%%%%%%%%%%%%%%%%%%%%%%%%%%%%%
\subsection{Implied volatility}
\label{sec:4.2}

Using the option prices asymptotics of Theorem~\ref{Thm:1} one can obtain
the asymptotics of the implied volatility in the
log-normal SABR model under the discretization scheme (\ref{scheme2}).

\begin{theorem}\label{thm:main}
Consider the SABR model with correlation $\varrho \leq 0$ 
discretized in time with $n$ points under the scheme (\ref{scheme2}).
In the limit $n \to \infty$ at fixed $\rho^2 = \sigma_0^2 \tau,
\beta = \frac12 \omega^2 \tau n^2$, the implied volatility for 
maturity $T := t_n$ and log-strike $x := \log(K/S_0)$ is given by
\begin{equation}\label{SigBSsol}
\sigma_{\rm BS}(x,T ) = \sigma_0 
\Sigma_{\rm BS}\left( \frac{x}{\sigma_0^2 T};a \right)\,,
\end{equation}
where the equality in \eqref{SigBSsol} means the LHS/RHS goes to one in the 
limit, and
\begin{equation}\label{SigBSdef}
\Sigma_{\rm BS}(y;a) 
:= 
\begin{cases}
\left|\sqrt{\frac{1}{a} \mathcal{J}_X(y;a,\varrho) - 2y} - 
      \sqrt{\frac{1}{a} \mathcal{J}_X(y;a,\varrho)}\right|
&\text{for $y > y_R$ and $y<-\frac12$}, 
\\
\sqrt{\frac{1}{a} \mathcal{J}_X(y;a,\varrho) - 2y} + 
\sqrt{\frac{1}{a} \mathcal{J}_X(y;a,\varrho)}
&\text{for $-\frac12 \leq y \leq y_R$},
\end{cases}
\end{equation}
where $\mathcal{J}_X(y;a,\varrho)$ is the rate function defined in (\ref{JXdef}) 
and $a := 2(\sigma_0^2 T)(\omega^2 T)$. 
\end{theorem}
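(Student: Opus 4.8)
The plan is to invert the option-price large-deviations asymptotics of Theorem~\ref{Thm:1} against the corresponding asymptotics for the Black--Scholes model, and match rates. First I would recall that if a call option with strike $K = S_0 e^{nk}$ is priced in a Black--Scholes model with volatility $\sigma_{\rm BS}$ over maturity $T = n\tau$, then $\frac{1}{n}\log C_{\rm BS}(n)$ (or $\frac{1}{n}\log(S_0 - C_{\rm BS}(n))$, or $\frac{1}{n}\log P_{\rm BS}(n)$ in the appropriate strike regime) converges as $n\to\infty$ to an explicit function of $k$, $\sigma_{\rm BS}$ and $\tau$. Concretely, for a log-normal variable $\log S_n = \log S_0 - \tfrac12\sigma_{\rm BS}^2 n\tau + \sigma_{\rm BS}\sqrt{n\tau}\, Z$ the rate function for $\frac1n\log S_n$ is $\frac{1}{2\sigma_{\rm BS}^2\tau}\big(x + \tfrac12\sigma_{\rm BS}^2\tau\big)^2$, and the OTM-call rate is $\mathcal{I}_{\rm BS}(k) - k$ where $\mathcal{I}_{\rm BS}(k) = \frac{1}{2\sigma_{\rm BS}^2\tau}(k + \tfrac12\sigma_{\rm BS}^2\tau)^2$; I would derive the analogous statements for the covered-call and put regimes, noting that the BS rate function has its own symmetry $\mathcal{I}_{\rm BS}(k) - \mathcal{I}_{\rm BS}(-k) = k$ so the three regimes are delimited by $k = -\tfrac12\sigma_{\rm BS}^2\tau$ and $k = +\tfrac12\sigma_{\rm BS}^2\tau$ (the latter being the image of the former under put-call-type symmetry).

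Next I would equate the model rate from Theorem~\ref{Thm:1} with the Black--Scholes rate. Writing $k = y\rho^2$ with $\rho^2 = \sigma_0^2\tau$, and using the scaling $\mathcal{I}_X(k,\varrho) = \frac{1}{8\beta}\mathcal{J}_X(y; a,\varrho)$ with $a = 4\beta\rho^2 = 2(\sigma_0^2 T)(\omega^2 T)$, the equation $\mathcal{I}_X(k,\varrho) - k = \mathcal{I}_{\rm BS}(k) - k$ becomes a quadratic equation for $\sigma_{\rm BS}^2\tau$. Setting $\Sigma_{\rm BS} := \sigma_{\rm BS}/\sigma_0$ so that $\sigma_{\rm BS}^2\tau = \Sigma_{\rm BS}^2\rho^2$, the matching condition $\frac{1}{2\Sigma_{\rm BS}^2\rho^2\tau}\cdot\rho^2(y + \tfrac12\Sigma_{\rm BS}^2)^2 = \frac{1}{8\beta}\mathcal{J}_X(y;a,\varrho)$ rearranges to $\frac{1}{\Sigma_{\rm BS}^2}(y+\tfrac12\Sigma_{\rm BS}^2)^2 = \frac{2\tau\beta}{1}\cdot\frac{1}{\beta}\cdot\frac{\mathcal{J}_X}{a}\cdot\tfrac{1}{?}$; carefully tracking the constants (using $a = 4\beta\rho^2$ and $\tau n = T$) collapses the right side to $\frac{1}{a}\mathcal{J}_X(y;a,\varrho)$. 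Thus $\Sigma_{\rm BS}^2 - 2\big(\tfrac{1}{a}\mathcal{J}_X - 2y\big)^{1/2}\cdot\text{(sign)}\cdot\Sigma_{\rm BS} \cdot\ldots$; completing the square in $\Sigma_{\rm BS}$ gives $\Sigma_{\rm BS} = \big|\sqrt{\tfrac1a\mathcal{J}_X - 2y} \mp \sqrt{\tfrac1a\mathcal{J}_X}\big|$, with the sign of the two square roots (i.e.\ sum vs.\ difference) dictated by which of the three strike regimes one is in: in the wing regimes $y > y_R$ or $y < -\tfrac12$ one takes the difference (with absolute value to handle both wings uniformly), and in the central regime $-\tfrac12 \le y \le y_R$ one takes the sum. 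This is exactly \eqref{SigBSdef}.

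To make the regime bookkeeping precise I would check the boundary values: at $y = -\tfrac12$ one has $\mathcal{J}_X(-\tfrac12;a,\varrho) = 0$ (the vanishing-rate point of Proposition~\ref{prop:Iprop}, restated as $\mathcal{J}_X(-\tfrac12;a,\varrho)=0$), so both formulas reduce to $\sqrt{-2\cdot(-\tfrac12)} = 1$, i.e.\ $\sigma_{\rm BS} = \sigma_0$ there, consistent with the at-the-money-forward behavior; at $y = y_R$ one uses $\mathcal{J}_X(y_R;a,\varrho) = 2ay_R$ (the equality \eqref{zeroR}), so $\tfrac1a\mathcal{J}_X - 2y = 0$ and again the sum and difference formulas agree. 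This continuity across regime boundaries both sanity-checks the formula and shows the piecewise definition glues into a continuous function of $y$. I would also verify that the discriminant $\tfrac1a\mathcal{J}_X(y;a,\varrho) - 2y$ is nonnegative on the relevant ranges, which follows from the lower bound \eqref{JXlowbound} $\mathcal{J}_X \ge 2ay$ in the wing $y > y_R$ and from $\mathcal{J}_X \ge 0$ together with $y < 0$ in the left wing and central region; this guarantees the square roots are real.

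The main obstacle I anticipate is the careful matching of the Black--Scholes large-deviations rate in all three strike regimes, keeping track of the regime boundaries ($-\tfrac12\sigma_{\rm BS}^2\tau$ versus $y_R\rho^2$) and verifying that the self-consistent choice of branch in the quadratic is the one that actually makes the implied BS option price asymptotically match the model option price in each regime, rather than just formally solving the rate equation. In particular one must confirm that the asymptotic relation $\sigma_{\rm BS}(x,T)/\big(\sigma_0\Sigma_{\rm BS}(x/(\sigma_0^2 T);a)\big)\to 1$ really follows from the equality of rates --- this requires an argument (standard but needing care) that the exponential rate determines the implied volatility up to subexponential corrections, i.e.\ that two sequences of option prices with the same exponential decay rate produce implied volatilities whose ratio tends to one. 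This is where a transfer-type result (in the spirit of Gao--Lee \cite{GaoLee}, or a direct monotonicity-and-continuity argument in $\sigma_{\rm BS}$) must be invoked. The remaining algebra --- completing the square and identifying the constants --- is routine once the constants $a = 2(\sigma_0^2 T)(\omega^2 T)$ and the scaling $\mathcal{I}_X = \frac{1}{8\beta}\mathcal{J}_X$ are in hand.
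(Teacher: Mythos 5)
Your proposal is correct and follows essentially the same route as the paper: the paper invokes the Gao--Lee transfer result (regimes $(+)$ and $(-)$) to convert the option-price rates of Theorem~\ref{Thm:1} into the implied-volatility limit (\ref{impsol}), and then rewrites it in the scaled form (\ref{SigBSdef}), which is exactly your rate-matching quadratic $\mathcal{I}_{\rm BS}(k)=\mathcal{I}_X(k,\varrho)$ solved for $\sigma_{\rm BS}^2\tau$ with the branch chosen per strike regime. You additionally spell out the algebra and the boundary checks at $y=-\tfrac12$ and $y=y_R$ that the paper leaves implicit, and you correctly flag that the rigorous step is the transfer lemma itself, which both you and the paper delegate to Gao--Lee.
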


\begin{proof}
The result is a standard transfer relation of the option price asymptotics to
implied volatility. Similar results are obtained in Corollary 2.14 of the Forde and
Jacquier paper \cite{FJ1} for the Heston model. A more general treatment 
of these transfer results is given in Gao and Lee \cite{GaoLee}. The
different cases of the option price asymptotics in regions (1) and (3) of
Theorem~\ref{Thm:1} correspond to the regime (+) in Section 4.1 of \cite{GaoLee},
and the region (2) corresponds to the regime (-). 

The $n\to \infty$ limit of the implied volatility for maturity $t_n = n\tau$
and log-strike limit $x = \log(K/S_0) = nk$ with constant $k$ is
\begin{align}\label{impsol}
&\lim_{n\to \infty} \sigma_{\rm BS}^2(x,t_n)\tau 
\\
&=
\begin{cases}
2\left( 2 \mathcal{I}_X(k, \varrho) - k) - 
4\sqrt{\mathcal{I}_X(k, \varrho) (\mathcal{I}_X(k, \varrho) - k ) } \right) &
\text{for $k > k_R$ and $k< k_L$} \\
2\left( 2 \mathcal{I}_X(k, \varrho) - k) +
4\sqrt{\mathcal{I}_X(k, \varrho) (\mathcal{I}_X(k, \varrho) - k) } \right) & 
\text{for $k_L \leq k \leq k_R$} 
\end{cases}
\nonumber
\\
&= 
\begin{cases}
2 \left( \sqrt{ \mathcal{I}_X(k,\varrho)} - \sqrt{ \mathcal{I}_X(k, \varrho) - k} \right)^2
& \text{for $k > k_R$ and $k < k_L$} \\
2 \left( \sqrt{ \mathcal{I}_X(k, \varrho) } + \sqrt{ \mathcal{I}_X(k, \varrho) -k } \right)^2
& \text{for $k_L \leq k \leq k_R$}
\end{cases}\,,
\nonumber
\end{align}
where $\mathcal{I}_X(k, \varrho)$ is the rate function given by 
Proposition~\ref{prop:LDlogS}
and $k_L= -\frac12\rho^2\,, k_R = y_R \rho^2$.

We note that although the result was derived in discrete time, 
the asymptotic implied volatility does not depend on the
time step $\tau$, but depends only on the product $T = \tau n = t_n$. The
result (\ref{impsol}) can be written equivalently as (\ref{SigBSsol}). 
\end{proof}

We note that the asymptotic implied volatility of Theorem~\ref{thm:main}
has a scaling property, as it depends only on the two variables 
$\frac{x}{\sigma_0^2 T}$ and $a = 2(\sigma_0^2 T)(\omega^2 T)$.

\begin{figure}[b!]
\begin{center}
\includegraphics[height=50mm]{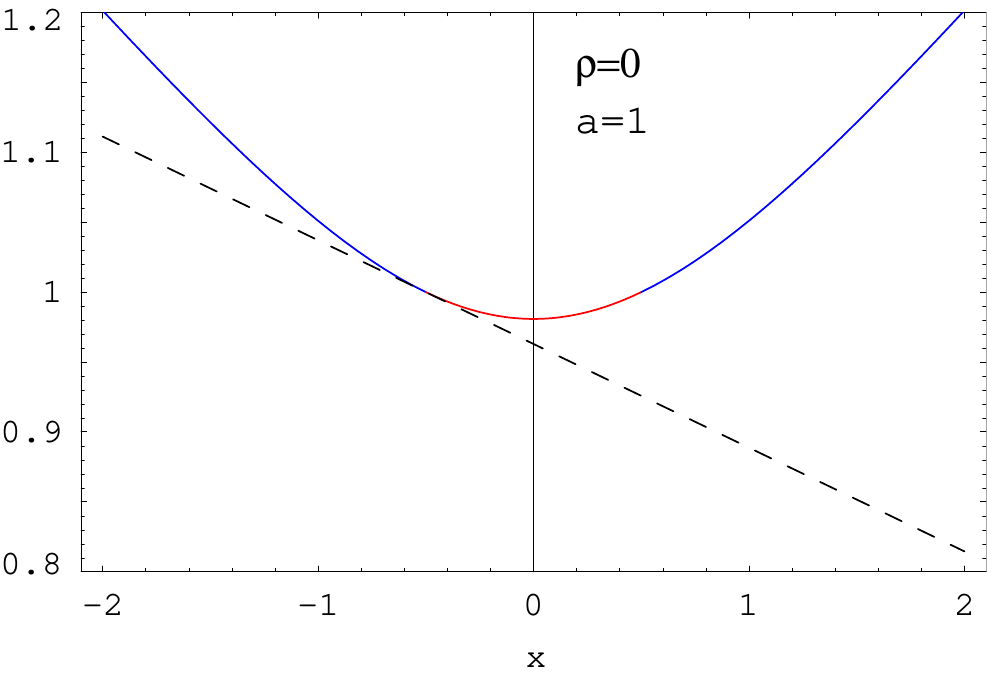}
\includegraphics[height=50mm]{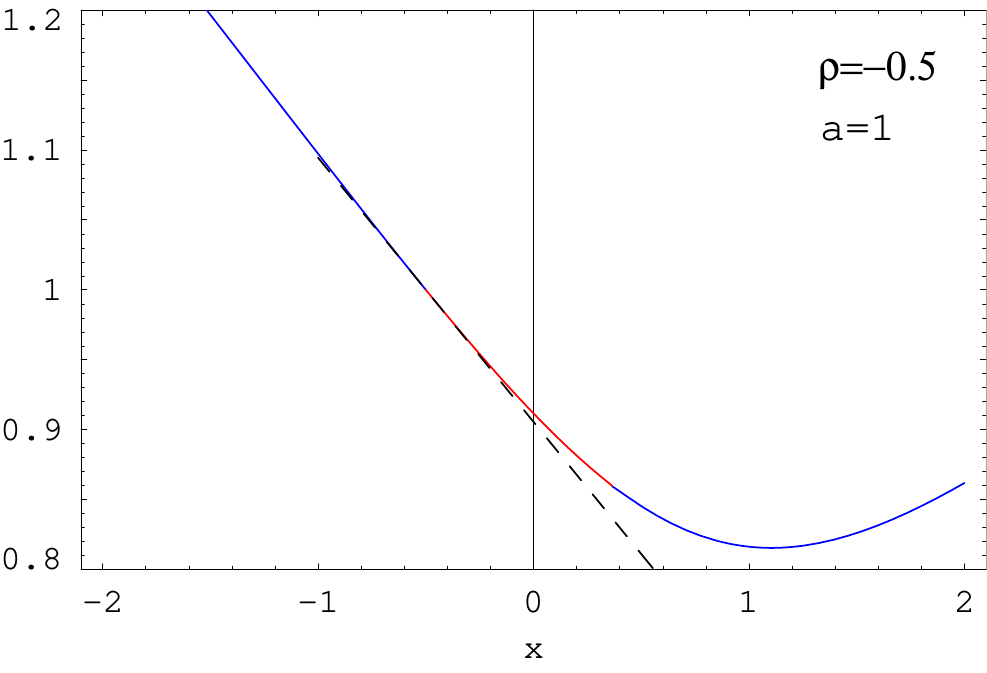}
\end{center}
\caption{
Plot of the asymptotic implied volatility $\Sigma_{\rm BS}(x;a,\varrho)$ 
in the log-normal discrete-time SABR model for parameters $a=1.0$ and
$\varrho = 0.0$ (left panel) and $\varrho = -0.5$ (right panel). The
two branches in Theorem~\ref{thm:main} are shown as the blue and red curves.
The dashed line is the linear approximation (\ref{SigBSlin}).}
\label{Fig:1}
\end{figure}

\begin{remark}
The result of Theorem~\ref{thm:main} reveals the existence of three
regions of log-strike separated by $x_L=-\frac12\sigma_0^2 T$
and $x_R =  y_R \sigma_0^2 T$.
At the switch points we have 
\begin{equation}
\sigma_{\rm BS}\left( x_L, T\right) = \sigma_0 \,,\quad
\sigma_{\rm BS}\left( x_R, T\right) = \sqrt{2y_R} \sigma_0\,.
\end{equation}
This is illustrated in Figure~\ref{Fig:1} which shows the implied volatility
function $\Sigma_{\rm BS}(y;a)$ for $a=1.0$ and correlation
$\varrho = 0$ (left) and $\varrho=-0.5$ (right). The three regions of
Theorem~\ref{thm:main} are shown with different colors.
This is different from the SABR formula (\ref{Hagan}) which does not 
distinguish between these regions. 
\end{remark}

\begin{remark}
The result of Theorem~\ref{thm:main}
is similar to the large maturity asymptotics for the Heston model
derived by Forde and Jacquier \cite{FJ1}. However we note
also a difference. In their result for the Heston model, the asymptotic 
implied volatility does not depend on $\sigma_0$, the initial condition for 
the volatility. This is because their
rate function does not depend on $\sigma_0$. On the other hand, with
our scaling $\sigma_0$ appears through the scaling variable $\rho$, 
which introduces dependence on $\sigma_0$ in the asymptotic implied
volatility.
\end{remark}

\begin{remark}\label{remark:symm}
For zero correlation $\varrho=0$, the implied volatility given by 
Theorem~\ref{thm:main} is symmetric in log-strike (see the left panel
in Fig.~\ref{Fig:1} for an illustration)
\begin{equation}
\sigma_{\rm BS}(-x,T) = \sigma_{\rm BS}(x,T)\,.
\end{equation}
This follows from the symmetry relation (\ref{Isymm}) for the rate function 
in the zero correlation limit $\mathcal{I}_X(k,0) - \mathcal{I}_X(-k,0) = k$.
This agrees with the well-known result of \cite{Touzi} that the implied 
volatility in an uncorrelated stochastic volatility model is a symmetric 
function of log-strike.
\end{remark}

The leading quadratic term in the expansion of the rate function around
its minimum at $y=-\frac12$ gives a linear
approximation for $\Sigma_{BS}(y;a)$ around $y=-\frac12$
\begin{equation}\label{SigBSlin}
\Sigma_{BS}(y;a) = 1 - (1-\sqrt{c}) \left(y+\frac12\right)
+ O\left(\left(y+\frac12\right)^2\right)\,, 
\end{equation}
with $c = \frac{1}{1 + \frac16 a - \sqrt{\frac{a}{2}}\varrho}$.
This linear approximation is shown as the dashed line in Fig.~\ref{Fig:1}.

%%%%%%%%%%%%%%%%%%%%%%%%%%%%%%%%%%%%%%%%%%%%%%%%%%%%%%%%%%%%%%%%%%%%%
\section{Limiting cases}
\label{sec:6}

We study in this section the limits of the asymptotic implied volatility
of Theorem~\ref{thm:main} in several regimes of short maturity $T\to 0$
and extreme strikes $|x|\to \infty$ at fixed maturity. 
The results are compared with existing results in the literature.

\subsection{Short maturity limit}
\label{sec:6.1}

Consider the $T\to 0$ limit of the implied volatility. 
This limit is obtained by taking $a\to 0$ at fixed $\sqrt{a} y = \sqrt2 \zeta$,
with $\zeta= \frac{\omega}{\sigma_0} x$. Assuming that the limit exists,
define
\begin{equation}\label{Jdef}
J(\zeta;\varrho) := \lim_{a\to 0, \sqrt{a} y =\sqrt2 \zeta} 
\mathcal{J}_X(y;a,\varrho)\,.
\end{equation}

Let us take this limit in the rate function $\mathcal{J}_X(y;a,\varrho)$, 
expressed as the extremal problem (\ref{JXinf}). This gives
\begin{equation}\label{Jzinf}
J(\zeta;\varrho) = 
\inf_{u,v>0} \left\{ \frac12 I(u,v) + 
\frac{2}{\varrho_\perp^2 u} \left( \zeta  
- \varrho (v-1) \right)^2 \right\} \,.
\end{equation}

For $\zeta=0$ the minimizer is $u_*=1,v_*=1$; at this point 
the rate function vanishes $J(0;\varrho) = 0$.

\begin{proposition}
The solution of the extremal problem (\ref{Jzinf}) for the rate
function $J(\zeta;\varrho)$ has the expansion around $\zeta=0$
\begin{equation}\label{Jzexp}
J(\zeta;\varrho) =
2\zeta^2 - 2 \varrho \zeta^3 + \left(-\frac23 + \frac52 \varrho^2\right) 
\zeta^4 + O(\zeta^5) \,.
\end{equation}

In the uncorrelated case $\varrho=0$ the extremal
problem (\ref{Jzinf}) can be solved exactly 
\begin{equation}\label{Jzero}
J(\zeta;0) = 
2\log^2 \left( \sqrt{\zeta^2+1} + |\zeta|\right)\,.
\end{equation}
\end{proposition}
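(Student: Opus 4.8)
The plan is to work throughout from the extremal representation \eqref{Jzinf}, which is taken as the definition of $J(\zeta;\varrho)$. For the small-$\zeta$ expansion \eqref{Jzexp} I would first pass to the coordinates $\epsilon=\log u$, $\eta=\log v$, so the objective in \eqref{Jzinf} becomes
\[
\Phi(\epsilon,\eta;\zeta)=\tfrac12 I(e^\epsilon,e^\eta)+\frac{2}{\varrho_\perp^2}e^{-\epsilon}\bigl(\zeta-\varrho(e^\eta-1)\bigr)^2 .
\]
At $\zeta=0$ one has $\Phi\geq 0$ with equality only at $(\epsilon,\eta)=(0,0)$ (since $I\geq 0$ vanishes exactly at $(1,1)$ and the remaining term vanishes iff $\eta=0$), and since $\tfrac12 I$ is a good rate function the infimum is attained near the origin for small $\zeta$. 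Using the quadratic part $\tfrac12 I_q=6\epsilon^2-12\epsilon\eta+8\eta^2$ from Proposition~\ref{prop:Iuvexp}, a short calculation shows the Hessian of $\Phi$ at $(0,0;0)$ is positive definite, so the minimizer $(\epsilon_*(\zeta),\eta_*(\zeta))$ is a smooth branch through the origin. I would then posit $\epsilon_*=\sum_{k\geq1}a_k\zeta^k$, $\eta_*=\sum_{k\geq1}b_k\zeta^k$, insert them into the stationarity equations $\partial_\epsilon\Phi=\partial_\eta\Phi=0$ together with the quartic expansion \eqref{Iquart} of $I(u,v)$, and solve order by order; the leading order gives $a_1=b_1=\varrho$ and $J(\zeta;\varrho)=2\zeta^2+O(\zeta^3)$, and carrying the expansion through quartic order and substituting back into $\Phi$ yields \eqref{Jzexp}. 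This last step is a mechanical, computer-algebra-friendly computation.

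For the closed form \eqref{Jzero}, observe that when $\varrho=0$ the factor $(\zeta-\varrho(v-1))^2$ collapses to $\zeta^2$, so $v$ enters only through $\tfrac12 I(u,v)$; minimizing over $v$ first and using $\inf_{v>0}\tfrac12 I(u,v)=\mathcal{J}_{BS}(u)$ reduces \eqref{Jzinf} to the one-dimensional problem $J(\zeta;0)=\inf_{u>0}\{\mathcal{J}_{BS}(u)+2\zeta^2/u\}$. The cleanest way to evaluate this, I think, is to unfold $\mathcal{J}_{BS}$ back to the Mogulskii variational problem behind Proposition~\ref{prop:I}, namely $\mathcal{J}_{BS}(c)=\tfrac12\inf\{\int_0^1\dot h^2\,dt:\ h(0)=0,\ \int_0^1 e^h\,dt=c\}$, which merges the two infima into
\[
J(\zeta;0)=\inf\Bigl\{\tfrac12\!\int_0^1\!\dot h(t)^2\,dt+\frac{2\zeta^2}{\int_0^1 e^{h(t)}\,dt}\ :\ h(0)=0\Bigr\}.
\]
The Euler--Lagrange equation is the Liouville equation $\ddot h=-\kappa e^h$ with $\kappa=2\zeta^2\bigl(\int_0^1 e^h\bigr)^{-2}>0$, together with $h(0)=0$ and the natural boundary condition $\dot h(1)=0$; its solution is $e^{h(t)}=\tfrac{2c_1^2}{\kappa}\operatorname{sech}^2\!\bigl(c_1(t-1)\bigr)$. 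Imposing $h(0)=0$, $\dot h(1)=0$ and the self-consistency relation $\kappa\bigl(\int_0^1 e^h\bigr)^2=2\zeta^2$ forces $\sinh c_1=|\zeta|$, and evaluating the functional at this $h$ gives $J(\zeta;0)=2c_1^2=2\operatorname{arcsinh}^2\zeta=2\log^2\bigl(\sqrt{\zeta^2+1}+|\zeta|\bigr)$. Alternatively one may substitute the explicit piecewise form of $\mathcal{J}_{BS}$ from Proposition~\ref{prop:JBS} directly into the one-dimensional optimization and solve the stationarity equation $\mathcal{J}_{BS}'(u_*)=2\zeta^2/u_*^2$.

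The step I expect to require the most care is the second one: the functional $h\mapsto\tfrac12\int\dot h^2+2\zeta^2/\int e^h$ is not jointly convex in $h$ (because of the $1/\int e^h$ term), so one must confirm that the Liouville critical point is actually the global minimizer — either via coercivity plus uniqueness of the critical point, or by noting that at the optimum $\ddot h<0$, so $h$ is concave, which rules out the oscillatory $(\sec^2)$ branch of the Liouville equation and pins down $c_1\geq 0$. For the first part the only obstacle is bookkeeping of the cubic and quartic terms, which is routine. As a consistency check, expanding the closed form gives $2\log^2(\sqrt{\zeta^2+1}+|\zeta|)=2\operatorname{arcsinh}^2\zeta=2\zeta^2-\tfrac23\zeta^4+O(\zeta^6)$, which matches \eqref{Jzexp} at $\varrho=0$.
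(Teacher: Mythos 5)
Your proposal is correct, and the two halves relate to the paper differently. For the expansion \eqref{Jzexp} you do essentially what the paper does: write the stationarity conditions for \eqref{Jzinf} in the log-coordinates $\epsilon=\log u_*$, $\eta=\log v_*$, expand the minimizers as power series in $\zeta$ using the quartic expansion \eqref{Iquart} of $I(u,v)$, and substitute back; your added remarks on positive-definiteness of the Hessian and smoothness of the minimizing branch are a mild strengthening of what the paper leaves implicit. For the closed form \eqref{Jzero} you take a genuinely different route. The paper obtains it by passing to the $a\to0$, $ay^2=2\zeta^2$ limit in the already-derived closed form of Corollary~\ref{corr:JXrho0}: the equation for $\xi$ degenerates to $\sinh^2(\xi/2)=\zeta^2$ and the value collapses to $\tfrac12\xi^2$. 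You instead reduce to $\inf_u\{\mathcal{J}_{\rm BS}(u)+2\zeta^2/u\}$ and then unfold $\mathcal{J}_{\rm BS}$ back to the Mogulskii path functional, solving the Liouville equation $\ddot h=-\kappa e^h$ with the natural boundary condition $\dot h(1)=0$; your computation ($\sinh c_1=|\zeta|$, value $2c_1^2$) checks out and agrees with the paper. What your route buys is a self-contained derivation that makes transparent why only the hyperbolic branch appears: the effective multiplier $\kappa=2\zeta^2(\int_0^1 e^h)^{-2}$ is automatically positive, so the $\sec^2$ family is not even a solution of the relevant ODE (it solves $\ddot h=+\kappa e^h$) — this is a cleaner exclusion than the concavity argument you sketch. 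The cost is the global-optimality issue you correctly flag; the paper sidesteps it entirely because Corollary~\ref{corr:JXrho0} (equivalently your fallback of substituting the explicit piecewise $\mathcal{J}_{\rm BS}$ of Proposition~\ref{prop:JBS} into the one-dimensional problem \eqref{1dimJX}) reduces everything to a scalar stationarity equation with a unique admissible root. Either way the result $2\log^2(\sqrt{\zeta^2+1}+|\zeta|)=2\operatorname{arcsinh}^2|\zeta|$ is confirmed, and your Taylor consistency check against \eqref{Jzexp} at $\varrho=0$ is a nice touch.
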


\begin{proof}
The infimum condition in (\ref{Jzinf}) can be expressed as the
vanishing of the partial derivatives of the function of $(u,v)$ on the
right hand side. This gives two equations for the extremal point
$(u_*,v_*)$
\begin{align}\label{dIuv}
& u \partial_u I(u,v) = \frac{4}{\varrho^2_\perp u}
\left(\zeta -\varrho (v-1)\right)^2\,,\\
& v \partial_v I(u,v) = \frac{8\varrho}{\varrho^2_\perp u}
v \left(\zeta -\varrho (v-1)\right) \,.
\nonumber
\end{align}
Introducing $\epsilon = \log u_*, \eta = \log v_*$, the
minimizers can be expanded in $\zeta$ as
\begin{equation}
\epsilon = a_1 \zeta + a_2 \zeta^2 + O(\zeta^3)\,,\quad
\eta = b_1 \zeta + b_2 \zeta^2 + O(\zeta^3) \,.
\end{equation}
Using the expansion for the rate function $I(u,v)$, 
the two equations in (\ref{dIuv}) can be expanded also in $\zeta$. Requiring
the equality of the terms of each order in $\zeta$ gives successive
equations for $a_i,b_j$ which can be solved recursively. The first two
coefficients are
\begin{equation}
a_1=b_1=\varrho\,,\quad 
a_2 = \frac23 - \varrho^2 \,,\quad
b_2 = \frac12(1-2\varrho^2) \,.
\end{equation}
The coefficients $a_{1,2},b_{1,2}$ are sufficient to 
determine the expansion of the rate function $J(\zeta;\varrho)$
to order $O(\zeta^2)$, with the result quoted in (\ref{Jzexp}).

We give next the proof of (\ref{Jzero}) for the uncorrelated case.
As $a\to 0$ at fixed $ay^2$
we have $|y|\to \infty$, such that we use the $y>y_R$ branch
of the function $\Sigma_{\rm BS}(y;a)$ in Theorem~\ref{thm:main}. 
The rate function $\mathcal{J}_X(y;a,0)$ is given by Corollary~\ref{corr:JXrho0}.
The equation for $\xi$ in this result becomes in this limit
\begin{equation}
\sinh^2 \left(\frac{\xi}{2}\right) = \frac12 ay^2 = \zeta^2\,,
\end{equation}
which determines $\xi$ up to a sign as
\begin{equation}
\xi = \pm \log\left(1 + 2\zeta^2 + 2\sqrt{\zeta^2(1+\zeta^2)}\right) = 
\pm 2 \log\left(\sqrt{\zeta^2+1} + \sqrt{\zeta^2}\right)\,.
\end{equation}
The rate function is
\begin{equation}
\lim_{a\to 0, ay^2 = 2\zeta^2} \mathcal{J}_X(y;a,0) = \frac12 \xi^2 
=\xi \tanh(\xi/2) + 2\zeta^2 \frac{\xi}{\sinh \xi} = 
2\log^2\left(\sqrt{\zeta^2+1} + \sqrt{\zeta^2}\right)\,.
\end{equation}
\end{proof}

\begin{remark}
The expansion (\ref{Jzexp}) agrees with the first three terms in the
Taylor expansion of the function 
\begin{eqnarray}\label{Jguess} 
\tilde J(\zeta,\varrho):= 
2\log^2 
\frac{\sqrt{1 + 2\varrho \zeta + \zeta^2} + \zeta + \varrho}{1+\varrho}\,.
\end{eqnarray}
\end{remark}

Numerical testing shows that the rate function $J(\zeta;\varrho)$ is
reproduced to very good precision by this function; however we could not
prove their equality analytically, except for the uncorrelated case $\varrho=0$,
when (\ref{Jguess}) reduces to (\ref{Jzero}).

The asymptotic implied volatility in the small-maturity limit can be expressed
in terms of the rate function $J(\zeta;\varrho)$ given by the limit (\ref{Jdef}).

\begin{proposition}
Assume that the limit (\ref{Jdef}) exists and is given by the rate function
$J(\zeta;\varrho)$. 
Then the implied volatility in the $T\to 0$ limit of the SABR model with 
correlation
$\varrho$ and $\omega\to 0,\sigma_0 \to \infty$ at fixed $\sigma_0\omega$ is
\begin{equation}\label{limSigBS}
\lim_{\sigma_0 \to \infty} \frac{1}{\sigma_0}
\sigma_{\rm BS}(x,T) = \frac{\sqrt{2} \zeta}{\sqrt{J(\zeta;\varrho)}},
\qquad\text{with $\zeta = \frac{\omega}{\sigma_0} x$}.
\end{equation}
\end{proposition}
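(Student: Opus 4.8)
The statement to be proved relates the small‑maturity limit of the implied volatility to the limiting rate function $J(\zeta;\varrho)$ defined in \eqref{Jdef}. The natural strategy is to take the $a\to 0$ limit directly in the exact formula \eqref{SigBSsol}--\eqref{SigBSdef} of Theorem~\ref{thm:main}, tracking the scaling $\sqrt a\, y = \sqrt2\,\zeta$ with $\zeta = \frac{\omega}{\sigma_0}x$, which is the correct regime: fixing $T$ and $\sigma_0\omega$ while $\omega\to 0,\sigma_0\to\infty$ forces $a = 2(\sigma_0^2T)(\omega^2 T)\to 0$ and $y = x/(\sigma_0^2 T) = \zeta\sqrt{2/a}\to\infty$, so we are in the $y>y_R$ (or $y<-\tfrac12$) branch of $\Sigma_{\rm BS}(y;a)$ for all small enough $a$.

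First I would record that on this branch
\[
\Sigma_{\rm BS}(y;a) = \left|\sqrt{\tfrac{1}{a}\mathcal J_X(y;a,\varrho) - 2y} - \sqrt{\tfrac1a \mathcal J_X(y;a,\varrho)}\right|,
\]
and I would rationalize this difference of square roots by multiplying and dividing by the conjugate, obtaining
\[
\Sigma_{\rm BS}(y;a) = \frac{2|y|}{\sqrt{\tfrac1a\mathcal J_X(y;a,\varrho)-2y}+\sqrt{\tfrac1a\mathcal J_X(y;a,\varrho)}}.
\]
Next I substitute $y = \zeta\sqrt{2/a}$ (so $x = \sigma_0^2 T\, y$ and $\sigma_{\rm BS}(x,T)=\sigma_0\,\Sigma_{\rm BS}(y;a)$) and use the defining limit \eqref{Jdef}: as $a\to 0$ along this scaling, $\mathcal J_X(y;a,\varrho)\to J(\zeta;\varrho)$. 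Then $\tfrac1a\mathcal J_X(y;a,\varrho)\to\tfrac1a J(\zeta;\varrho)$ and $2y = 2\zeta\sqrt{2/a}$, both of which blow up like $1/a$ or $1/\sqrt a$, so I factor $1/\sqrt a$ out of each square root in the denominator. After dividing numerator and denominator by $1/\sqrt a$, the denominator becomes $\sqrt{J(\zeta;\varrho) - 2\sqrt{2a}\,\zeta} + \sqrt{J(\zeta;\varrho)} \to 2\sqrt{J(\zeta;\varrho)}$, while the numerator $\frac{2|y|}{1/\sqrt a} = 2\sqrt2\,|\zeta|$ stays fixed. Hence $\Sigma_{\rm BS}(y;a)\to \frac{\sqrt 2\,|\zeta|}{\sqrt{J(\zeta;\varrho)}}$, and multiplying by $\sigma_0$ and dividing by $\sigma_0$ gives \eqref{limSigBS} (the absolute value is consistent with $J$ being even in $\zeta$, as follows from \eqref{Jzexp}).

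The main obstacle is justifying that the limit of the composite quantity equals the composition of the limits — i.e.\ that one may replace $\mathcal J_X(y;a,\varrho)$ by $J(\zeta;\varrho)$ inside $\Sigma_{\rm BS}$ uniformly enough to pass to the limit in the rationalized expression. Since the proposition only \emph{assumes} the limit \eqref{Jdef} exists, the cleanest route is to treat $\mathcal J_X(y;a,\varrho) = J(\zeta;\varrho) + o(1)$ as $a\to0$ at fixed $\zeta$, note that the function $t\mapsto \frac{2\sqrt2|\zeta|}{\sqrt{t - 2\sqrt{2a}\zeta}+\sqrt t}$ is continuous in $t$ near the positive value $J(\zeta;\varrho)$ (for $\zeta\neq 0$; the case $\zeta=0$ gives $\Sigma_{\rm BS}=0/0$ and is handled separately using $J(0;\varrho)=0$ together with the expansion \eqref{Jzexp} giving $\Sigma_{\rm BS}\to1$), and conclude by continuity. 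A secondary small point to spell out is the bookkeeping of the two branches $y>y_R$ versus $y<-\tfrac12$: both yield the same rationalized formula with $|y|$ in the numerator, so the one‑sided limits agree and \eqref{limSigBS} holds with $|\zeta|$ on either side of $\zeta=0$.
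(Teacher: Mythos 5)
Your proposal is correct and follows essentially the same route as the paper: both select the $y>y_R$ (resp.\ $y<-\tfrac12$) branch of $\Sigma_{\rm BS}(y;a)$ from Theorem~\ref{thm:main}, substitute $\sqrt{a}\,y=\sqrt2\,\zeta$, and pass to the limit $a\to 0$ using the assumed convergence $\mathcal{J}_X(y;a,\varrho)\to J(\zeta;\varrho)$; the only cosmetic difference is that you rationalize the difference of square roots while the paper Taylor-expands $\sqrt{1-2ya/\mathcal{J}_X}$, which yield the same leading term $\sqrt{a}\,y/\sqrt{\mathcal{J}_X}$. Your extra remarks on continuity in the rate-function argument and on the degenerate point $\zeta=0$ are sound refinements of the same argument rather than a different approach.
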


\begin{proof}
Start with the result for $\Sigma_{\rm BS}(y;a)$ 
from Theorem \ref{Thm:1}
\begin{equation}
\Sigma_{\rm BS}(y;a) = \left| \sqrt{\frac{1}{a} \mathcal{J}_X(y;a,\varrho) -2y } -
\sqrt{\frac{1}{a} \mathcal{J}_X(y;a,\varrho) } \right| \,,\quad
y > y_R \mbox{ and } y<-\frac12,
\end{equation}
We choose the $y > y_R, y<-\frac12 $ branch since as $a\to 0$, the product
$\sqrt{a} y$ can be constant only if $y\to \infty$.
Expanding this result for $a\to 0$ we get 
\begin{equation*}
\Sigma_{\rm BS}(y;a)= 
\sqrt{\frac{1}{a} \mathcal{J}_X(y;a,\varrho) }
\left( 1 - \sqrt{1 - \frac{2y a }{\mathcal{J}_X(y;a,\varrho)}}  \right) 
= \frac{\sqrt{a} y}{\sqrt{ \mathcal{J}_X(y;a,\varrho)}} + O(a^{3/2} y^2) \nonumber \,.
\end{equation*}
Taking the $a\to 0$ limit at fixed $\sqrt{a} y = \sqrt2 \zeta$ gives
\begin{equation}
\lim_{a\to 0} \Sigma_{\rm BS}(y;a) =
\frac{\sqrt2 \zeta}{\sqrt{J(\zeta;\varrho)}}\,,
\end{equation}
where $J(\zeta;\varrho)$ is given by the limit (\ref{Jdef}).

\end{proof}

Substituting the expansion (\ref{Jzexp}) into (\ref{limSigBS}) reproduces the 
first three terms in the expansion of the celebrated analytical formula for
the implied volatility in the SABR model in the
short maturity asymptotic limit \cite{SABR}
\begin{equation}\label{Hagan}
\sigma_{\rm BS}(x,T) = \sigma_0 \frac{\zeta}{D(\zeta;\varrho)} 
\left(1 + \left( \frac14 \varrho \omega_0 \sigma_0 + \frac{1}{24} (2-3\varrho^2)
\omega^2 \right) T 
+ O(T^2)\right) \,.
\end{equation}
with
\begin{equation}
D(\zeta;\varrho) := 
\log \frac{\sqrt{1+2\zeta\varrho + \zeta^2} + \zeta + \varrho}{1+\varrho}\,,
\end{equation}
and the $O(T)$ terms holds only at the at-the-money (ATM) point $x=0$ 
\cite{Paulot}. Assuming $J(\zeta;\varrho) = \tilde J(\zeta;\varrho)$ reproduces 
the first factor in (\ref{Hagan}).

The result (\ref{Hagan}) is the
leading order term in a short maturity expansion, and the next two
terms in this expansion have been subsequently derived by 
Henry-Labordere \cite{HLbook} and Paulot \cite{PaulotSABR}. 
The ATM limit of this result is $\sigma_{\rm BS}(0,0) = \sigma_0$. 

In Figure~\ref{Fig:AL} we compare the asymptotic result (colored curves) 
with the SABR formula (\ref{Hagan}) (dashed black curve), for the model 
parameters $\sigma_0=0.2,\omega=1,\varrho=-0.75$, for several maturities
$T=0.25 - 5.0$. For sufficiently small maturity $T$, corresponding to 
small values of the $a=2(\omega^2 T)(\sigma_0^2 T)$ parameter,
the asymptotic result agrees very well with the short maturity limit (\ref{Hagan}).

%%%%%%%%%%%%%%%%%%%%%%%%%%%%%%%%%%%%%%%%%%%%%%%%%%%%%%%%%%%

\subsection{Short maturity expansion for the ATM implied volatility}
\label{sec:6.2}

We study here the $T\to 0$ expansion of the asymptotic implied volatility 
at the ATM point.

\begin{proposition}\label{prop:JX0}
(i) The first few terms in the expansion of the ATM asymptotic implied volatility 
$\sigma_{\rm BS}(0,T)$ in powers of $a =2\sigma_0^2\omega^2 T^2$ are
\begin{align}\label{SigATMT2}
\frac{1}{\sigma_0}\sigma_{\rm BS}(0,T) = \Sigma_{\rm BS}(0;a) 
&= 1 + \frac{\varrho}{4\sqrt2} \sqrt{a} +
\left( - \frac{1}{48} + \frac{1}{16} \varrho^2 \right) a + O(a^{3/2})\\
&= 1 +\frac14\varrho \omega\sigma_0 T
- \frac{1}{24} (1-3\varrho^2) \sigma_0^2\omega^2 T^2
+ O(T^3)\,.\nonumber
\end{align}

(ii) In the uncorrelated limit $\varrho=0$ the expansion contains only integer
powers of $a$
\begin{equation}
\Sigma_{\rm BS}(0;a) = 1 - \frac{1}{48} a + \frac{43}{23040} a^2 -
\frac{1907}{7741440} a^3 - \frac{51083}{7431782400} a^4 + O(a^5)\,.
\end{equation}
which means that the implied volatility contains only even powers of $T$
\begin{equation}\label{JX0aTaylor2}
\frac{\sigma_{\rm BS}(0,T)}{\sigma_0} = 1 - \frac{1}{24} \sigma_0^2\omega^2 T^2
+ \frac{43}{5760} \sigma_0^4\omega^4 T^4 + O(T^6)\,.
\end{equation}
\end{proposition}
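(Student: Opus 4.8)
The plan is to specialize the implied-volatility representation of Theorem~\ref{thm:main} to the at-the-money point $y=0$. First I would check that for all sufficiently small $a$ the point $y=0$ lies strictly inside the middle branch of \eqref{SigBSdef}, i.e. $-\frac12<0<y_R$. This follows from Corollary~\ref{lemma:2}: for small $a$ the minimizer $(u_m,v_m)$ in \eqref{infmartingale} is close to $(1,1)$, and solving its stationarity equations to leading order gives $\log v_m=\frac{\varrho}{\sqrt2}\sqrt a+O(a)$, hence $\varrho\sqrt{2/a}\,(v_m-1)\to\varrho^2$ and $y_R=\frac12(1-2\varrho^2)u_m+\varrho\sqrt{2/a}(v_m-1)\to\frac12$. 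Consequently $\Sigma_{\rm BS}(0;a)=2\sqrt{\mathcal{J}_X(0;a,\varrho)/a}$, and the whole statement reduces to expanding $\mathcal{J}_X(0;a,\varrho)$ in powers of $\sqrt a$.

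For part (i), I would expand the variational problem \eqref{JXinf} at $y=0$ around its minimizer, writing $\epsilon=\log u$, $\eta=\log v$. The term $\varrho(v-1)\sqrt{2/a}$ forces $v-1=O(\sqrt a)$, so $\eta=O(\sqrt a)$; and since, for fixed $\eta$, the quadratic part $12\epsilon^2-24\epsilon\eta+16\eta^2$ of $I(u,v)$ from Proposition~\ref{prop:Iuvexp} is minimized at $\epsilon=\eta$ with value $4\eta^2$ (note $12-24+16=4$), one also has $\epsilon=O(\sqrt a)$ with $\epsilon=\eta$ to leading order. Solving the two stationarity equations order by order in $\sqrt a$ gives $\log v_*=\frac{\varrho}{2\sqrt2}\sqrt a+O(a)$. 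A useful shortcut for the first orders is the envelope principle: since the objective is stationary at the minimizer, substituting the leading-order minimizer $\epsilon=\eta=\frac{\varrho}{2\sqrt2}\sqrt a$ reproduces $\mathcal{J}_X(0;a,\varrho)$ with an error $O(a^2)$; along $\epsilon=\eta$ the cubic terms of $I$ cancel (the combination $-\frac{12}5+\frac{36}5-\frac{56}5+\frac{32}5=0$), and one reads off $\mathcal{J}_X(0;a,\varrho)=\frac a4+\frac{\varrho}{8\sqrt2}a^{3/2}+O(a^2)$. To reach the $O(a)$ term of $\Sigma_{\rm BS}(0;a)$ one must push the minimizer to order $a$ (using the quartic part of $I$ and the $\eta^3$ term inside $\mathcal{P}=\frac12 e^{\epsilon}-\varrho\sqrt{2/a}(e^{\eta}-1)$), extract the $a^2$ coefficient of $\mathcal{J}_X(0;a,\varrho)$, substitute into $\Sigma_{\rm BS}(0;a)=2\sqrt{\mathcal{J}_X(0;a,\varrho)/a}$ and expand the square root; rewriting $a=2\sigma_0^2\omega^2T^2$, so $\sqrt a=\sqrt2\,\sigma_0\omega T$, gives the $T$-form.

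For part (ii) I would use that the problem decouples when $\varrho=0$: there $v$ enters only through $I(u,v)$, and by the projection \eqref{Iconstv}, $\inf_v I(u,v)=2\mathcal{J}_{\rm BS}(u)$, so $\mathcal{J}_X(0;a,0)=\inf_{u>0}\{\mathcal{J}_{\rm BS}(u)+\frac14 a u\}$. The infimum is attained at an interior point near $u=1$ for small $a$ (since $\mathcal{J}_{\rm BS}(u)\to\infty$ as $u\to 0^+$ and $\frac14 a u\to\infty$ as $u\to\infty$), and $\mathcal{J}_{\rm BS}$ is real-analytic near $u=1$ with a nondegenerate minimum there ($\mathcal{J}_{\rm BS}(1)=0$, $\mathcal{J}_{\rm BS}''(1)=3$, consistent with \eqref{Iquad}). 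Hence the stationarity equation $\mathcal{J}_{\rm BS}'(u)=-\frac14 a$ defines $u=u(a)$ as a real-analytic function of $a$ near $a=0$ with $u(0)=1$ (implicit function theorem), so $\mathcal{J}_X(0;a,0)$ is real-analytic in $a$ with $\mathcal{J}_X(0;a,0)=\frac14 a+O(a^2)$. Therefore $\mathcal{J}_X(0;a,0)/a$ is real-analytic and equals $\frac14\neq0$ at $a=0$, so $\Sigma_{\rm BS}(0;a)=2\sqrt{\mathcal{J}_X(0;a,0)/a}$ is real-analytic in $a$; its Taylor series thus contains only integer powers of $a$, equivalently $\sigma_{\rm BS}(0,T)/\sigma_0$ has only even powers of $T$ (as $a\propto T^2$). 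The explicit coefficients $-\frac1{48},\frac{43}{23040},\dots$ then follow by Taylor-expanding $\mathcal{J}_{\rm BS}$ around $u=1$ to the needed order (available from Proposition~\ref{prop:JBS}, equivalently from the closed form of $\mathcal{J}_X(y;a,0)$ in Corollary~\ref{corr:JXrho0}) and carrying out the Legendre transform.

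The main obstacle is the bookkeeping in part (i): identifying the correct scaling $\epsilon=\eta=O(\sqrt a)$ at the minimum (so that $\frac12 I$ is only $O(a)$, thanks to $12-24+16=4$ and to the cancellation of the cubic terms along $\epsilon=\eta$), and then tracking every contribution that feeds $\mathcal{J}_X(0;a,\varrho)$ at order $a^2$ — the quartic part of $I$, the order-$a$ correction to the minimizer, and the higher-order terms of $\mathcal{P}$. This is routine calculus but error-prone; the envelope principle trims the work at orders $a$ and $a^{3/2}$, but the $a^2$ coefficient genuinely needs the subleading minimizer. In part (ii) the only delicate point is the clean justification of analyticity in $a$; once that is in hand the coefficients are mechanical.
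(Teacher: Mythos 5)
Your proposal is correct and, for part (i), follows essentially the paper's own route: set $y=0$ in Theorem~\ref{thm:main} so that $\Sigma_{\rm BS}(0;a)=2\sqrt{\mathcal{J}_X(0;a,\varrho)/a}$, expand the minimizer of the variational problem \eqref{JXinf} as $\log u_*=a_1\sqrt a+a_2a+\cdots$, $\log v_*=b_1\sqrt a+b_2a+\cdots$ using the quartic expansion of $I$ from Proposition~\ref{prop:Iuvexp}, and solve the stationarity equations order by order; your leading orders reproduce the paper's intermediate result $\mathcal{J}_X(0;a,\varrho)=\frac a4+\frac{\varrho}{8\sqrt2}a^{3/2}+\frac{1}{384}(-4+15\varrho^2)a^2+O(a^{5/2})$, and your observations that the quadratic form collapses to $4\eta^2$ along $\epsilon=\eta$ and that the cubic terms cancel there are correct and genuinely trim the computation. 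Your preliminary verification that $y_R\to\frac12$, so that $y=0$ lies on the middle branch of \eqref{SigBSdef} for small $a$, is a step the paper leaves implicit and is worth making explicit. For part (ii) you take a mildly different route: the paper expands the closed parametric form $\mathcal{J}_X(0;a,0)=2\lambda(2\tan\lambda-\lambda)$ with $\lambda^2/\cos^2\lambda=a/8$ from Corollary~\ref{corr:JXrho0} in powers of $\sqrt a$, whereas you work from the one-dimensional reduction $\inf_{u>0}\{\mathcal{J}_{\rm BS}(u)+\frac14 au\}$ and invoke the implicit function theorem to get analyticity of $\mathcal{J}_X(0;a,0)/a$ in $a$ a priori. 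The two are equivalent (the paper's parametric form is precisely the solution of your one-dimensional problem), but your argument gives a cleaner structural explanation of why only integer powers of $a$ — equivalently only even powers of $T$ — can occur, while the paper's form is the more economical way to actually generate the coefficients $-\frac1{48},\frac{43}{23040},\dots$. The one loose end on your side is the asserted real-analyticity of $\mathcal{J}_{\rm BS}$ across $u=1$: this needs the (easy but not stated) observation that both branches in \eqref{JBSdef} are even analytic functions of the parameter $\xi$ (respectively $2\lambda$), and that $u=\sinh\xi/\xi$ is an even analytic function of $\xi$ with nonvanishing derivative in $\xi^2$ at $\xi=0$, so that $\mathcal{J}_{\rm BS}$ is analytic in $u$ near $u=1$.
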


\begin{proof}
(i) Taking $x=0$ in Theorem~\ref{thm:main} gives the ATM asymptotic implied
volatility $\Sigma_{\rm BS}(0;a) = 2\sqrt{\frac{1}{a} \mathcal{J}_X(0;a,\varrho)}$.
The rate function $\mathcal{J}_X(0;a,\varrho)$ is given by the extremal problem
(\ref{JXinf}). This can be expanded as $a\to 0$ by expanding the minimizers
in this problem around $u=v=1$. Denoting the minimizers $u_*,v_*$, this
expansion reads
\begin{equation}
\epsilon= \log u_*  = a_1 \sqrt{a} + a_2 a + O(a^{3/2})\,,\quad
\eta = \log v_*  = b_1 \sqrt{a} + b_2 a + O(a^{3/2})\,.
\end{equation}
Using the expansion (\ref{Iquart}) for the rate function $I(u,v)$,
the coefficients $a_i,b_i$ can be determined recursively. Substituting
into (\ref{JXinf}) gives the expansion of the rate function 
\begin{equation}
\mathcal{J}_X(0;a,\varrho)  =\frac14 a + \frac{\varrho}{8\sqrt2} a^{3/2}
+ \frac{1}{384} (-4+15\varrho^2) a^2 + O(a^{5/2})\,.
\end{equation}
Finally, substituting into 
$\Sigma_{\rm BS}(0;a) = 2\sqrt{\frac{1}{a} \mathcal{J}_X(0;a,\varrho)}$
gives the expansion (\ref{SigATMT2}).

(ii) The rate function $\mathcal{J}_X(y;a,0)$ is given in closed form 
by Eq.~(\ref{c:JX2})
\begin{equation}\label{JX0lam}
\mathcal{J}_X(0;a,0) = 2\lambda(\tan\lambda - \lambda) 
+ a \frac{\sin 2\lambda}{2\lambda} = 
2\lambda(2\tan\lambda  -\lambda)\,,
\end{equation}
where $\lambda$ is the solution of the equation
\begin{equation}\label{eq:a}
\frac{\lambda^2}{\cos^2\lambda} = \frac{a}{8}\,.
\end{equation}
The last equality in (\ref{JX0lam}) follows by substituting $a$ from 
(\ref{eq:a}).

Expanding the solution for $\lambda$ in powers of $\sqrt{a}$ and substituting
into (\ref{JX0lam}) gives
\begin{equation}
\mathcal{J}_X(0;a,0) =\frac14 a - \frac{1}{96} a^2 + \frac{1}{960} a^3 - 
\frac{23}{161280} a^4 + O(a^5)\,.
\end{equation}
This can be translated as before into an expansion for the asymptotic ATM
implied volatility.

\end{proof}

We can compare these expansions with the results in the literature. 
The $O(T)$ term in (\ref{SigATMT2}) coincides with the first $O(T)$ term in the
short maturity expansion of the implied volatility (\ref{Hagan}).
This expansion has been extended to $O(T^2)$ by Paulot \cite{Paulot},
where the $O(T^2)$ term was evaluated partially numerically. 
A closed form result for the ATM implied volatility expansion in the 
log-normal ($\gamma=1$) SABR model to $O(T^2)$ has been communicated to us 
by Alan Lewis \cite{LewisUnpublished}
\begin{eqnarray}\label{ALfull}
&& \frac{1}{\sigma_0} \sigma_{\rm BS}(0,T) =
1 + \frac{1}{24} \sigma_0 \omega T
\left[6 \varrho  + \frac{\omega}{\sigma_0} (2- 3 \varrho^2) \right]  \\
&& + \frac{1}{1920} \omega^2 \sigma_0^2 T^2
\left[
(-80 + 240 \varrho^2 ) +
\frac{\omega}{\sigma_0} \varrho (240 - 180 \varrho^2) +
\frac{\omega^2}{\sigma_0^2} (-12 + 60 \varrho^2 - 45 \varrho^4) 
\right] +  O(T^3) \,.\nonumber
\end{eqnarray}

The short maturity expansion of the implied volatility in 
a wide class of stochastic volatility models called MAP-like 
(Markov Additive Processes)
which include $\gamma=1$ SABR, is known (\cite{Lewis2017} page 505) 
to admit a double series expansion in $(x=\log(K/S_0),T)$.

Recalling that the asymptotic limit considered
in our paper corresponds to $\omega^2 T  \ll 1, \sigma_0 \omega=O(1)$,
it is easy to see that (\ref{SigATMT2}) is reproduced by taking this
limit in (\ref{ALfull}).

The existence of the limit $\omega^2 T \to 0, \sigma_0 \omega_0 = O(1)$
constrains the form of the higher order terms in the ATM implied volatility
\begin{equation}
\frac{\sigma_{BS}(0,T)}{\sigma_0} = 
1 + \left(c_1^{(1)} \omega\sigma_0 + c_1^{(0)} \omega^2\right) T + 
\sum_{k=2}^\infty \sigma_k(\omega,\sigma_0,\varrho) T^k \,.
\end{equation}
Assume that $\sigma_k(\omega,\sigma_0,\varrho) T^k$ is a polynomial in 
$(\omega \sqrt{T}), (\sigma_0 \sqrt{T})$ of order $2k$, this must have the form
\begin{equation}
\sigma_k(\omega,\sigma_0,\varrho) = \sum_{j=0}^{k} c_k^{(j)}(\varrho) 
\omega^{2k-j} \sigma_0^{j}\,.
\end{equation}
For example, a $O(T^2)$ term of the form $\omega \sigma_0^3 T^2$ 
is not allowed, as it diverges in the limit considered.

In the limit considered,
only the terms proportional to $c_k^{(k)}$ contribute. The expansion of
Proposition~\ref{prop:JX0} determines these coefficients.
For example we get $c_{4}^{(4)}(\varrho)= \frac{1}{5760}(43-375\varrho^2+315\varrho^4)$ 
for the coefficient of $(\omega \sigma_0 T)^4$, which has been confirmed 
by explicit computation \cite{LewisUnpublished}.

%%%%%%%%%%%%%%%%%%%%%%%%%%%%%%%%%%%%%%%%%%%%%%%%%%%%%%%%%%%

\subsection{Extreme strikes asymptotics}
\label{sec:6.4}

We study here asymptotics in the extreme strikes region $|x| \to \infty$
for the uncorrelated case $\varrho=0$.
Since the implied volatility is symmetric in $x$ for this case, 
see Remark \ref{remark:symm}, 
it will be sufficient to study asymptotics for large strike $x\to \infty$.

\begin{proposition}
In the large log-strike region $x\to \infty$, the asymptotic
volatility (\ref{SigBSdef}) in  the uncorrelated log-normal SABR model
$\varrho=0$ has the expansion
\begin{align}\label{SigExp}
\Sigma_{\rm BS}(x;a,0) 
&= \sqrt{2x} - \frac{1}{\sqrt{2a}} \log (2x) - \frac{1}{2a}
\log(2 \log(2x)) + \frac{1}{\sqrt{2a}}
\\
& \quad + \frac{1}{4a\sqrt{2x}} \log^2 (2x) + \frac{1}{2a\sqrt{2x}} 
\log(2x) \log(2\log(2x)) + 
O\left(\frac{\log(2\log(2x))}{\log(2x)}\right)
\nonumber \,.
\end{align}
\end{proposition}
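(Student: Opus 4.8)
The plan is to reduce \eqref{SigBSdef} to an expression involving only the rate function at deep out-of-the-money strikes, and then to perform an asymptotic inversion. For $x\to\infty$ we are in the first branch of \eqref{SigBSdef} (valid for $y>y_R$), so $\Sigma_{\rm BS}(x;a,0)=\bigl|\sqrt{\tfrac1a\mathcal{J}_X(x;a,0)-2x}-\sqrt{\tfrac1a\mathcal{J}_X(x;a,0)}\bigr|$. I would first simplify this using the symmetry relation \eqref{JXsymm2}: setting $P(x):=\tfrac1a\mathcal{J}_X(-x;a,0)\ge 0$, \eqref{JXsymm2} gives $\tfrac1a\mathcal{J}_X(x;a,0)-2x=P(x)$, hence
\begin{equation*}
\Sigma_{\rm BS}(x;a,0)=\sqrt{2x+P(x)}-\sqrt{P(x)}=\frac{2x}{\sqrt{2x+P(x)}+\sqrt{P(x)}}\,,
\end{equation*}
so that the whole problem reduces to the $x\to\infty$ asymptotics of $P(x)$.

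The second and main step is to expand $P(x)$. I would use the one-variable reduction of \eqref{JXinf} at $\varrho=0$, namely $\mathcal{J}_X(-x;a,0)=\inf_{u>0}\bigl\{\mathcal{J}_{BS}(u)+\tfrac au\bigl(\tfrac12 u-x\bigr)^2\bigr\}$, where $\inf_v\tfrac12 I(u,v)=\mathcal{J}_{BS}(u)$ by the projection formulas following \eqref{Iconstv}, together with the large-argument asymptotics of $\mathcal{J}_{BS}$; the latter follows from the representation \eqref{FHW} and the near-origin behaviour of the Hartman--Watson function $F$ studied in \cite{HWexp}. Alternatively, the same asymptotics can be read directly off the explicit parametric formula for $\mathcal{J}_X(y;a,0)$ in Corollary~\ref{corr:JXrho0} (the $\cosh/\sinh$ branch, since $-x<-\tfrac12$). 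Either way, the minimiser is $u_*=2x+O(\log x)$, the rate function $\mathcal{J}_{BS}$ grows only poly-logarithmically ($\mathcal{J}_{BS}(u)\sim\tfrac12\log^2 u$), and $P(x)$ has an expansion $P(x)=\tfrac1{2a}\log^2(2x)+\cdots$ in the hierarchy $\log(2x)\gg\log\log(2x)\gg 1\gg\log\log(2x)/\log(2x)$. The subleading terms come from inverting a transcendental equation of the form $e^{\xi}\sim\xi x$, whose solution involves nested logarithms, $\xi=\log(2x)+\log\log(2x)+\cdots$; the $O(1/x)$-type corrections to $\mathcal{J}_X(-x;a,0)$ produced by the quadratic penalty $\tfrac au(\tfrac12u-x)^2$ are negligible at the required precision.

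The last step is bookkeeping. Since $P(x)/x\to 0$, Taylor expansion gives $\sqrt{2x+P}-\sqrt P=\sqrt{2x}-\sqrt P+\tfrac{P}{2\sqrt{2x}}+O\!\bigl(P^{3/2}/x\bigr)$; re-expanding $-\sqrt P$ (itself the square root of a poly-logarithmic quantity) produces the $\sqrt{2x}$, $\log(2x)$, $\log(2\log(2x))$ and constant terms of \eqref{SigExp}, while $\tfrac{P}{2\sqrt{2x}}$ produces the $\tfrac1{4a\sqrt{2x}}\log^2(2x)$ and $\tfrac1{2a\sqrt{2x}}\log(2x)\log(2\log(2x))$ terms, everything else being absorbed into the error $O\!\bigl(\log(2\log(2x))/\log(2x)\bigr)$. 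I expect the main obstacle to be the asymptotic inversion in the second step: one must track an infinite tower of nested logarithms in the style of the Lambert-$W$ asymptotics and keep enough of them, and at the same time account correctly for the several cross-terms that arise — and partially cancel — when the two square roots are re-expanded, all while ordering the competing scales $\log(2x)$, $\log\log(2x)$, $1$, $\log\log(2x)/\log(2x)$ consistently.
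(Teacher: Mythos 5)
Your proposal is correct and follows essentially the same route as the paper: the paper likewise uses the symmetry relation to rewrite \eqref{SigBSdef} as $\sqrt{2x}\bigl(\sqrt{1+r}-\sqrt{r}\bigr)$ with $2x\,r(x,a)$ playing the role of your $P(x)$, obtains the required rate-function asymptotics by asymptotic inversion of the $\sinh\xi/\xi$ equation from the explicit parametric form (Proposition~\ref{prop:JXasympt}), and finishes with the same square-root expansion. The only cosmetic difference is that the paper cites the already-proved Proposition~\ref{prop:JXasympt} instead of re-deriving the nested-logarithm inversion in place.
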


\begin{proof}
We use the result of Proposition~\ref{prop:JXasympt} for the large argument
limit of the rate function to obtain the asymptotics of $\Sigma_{\rm BS}(x;a,0)$
for $x \to \infty$
\begin{align}
\Sigma_{\rm BS}(x;a,0) 
&= \Sigma_{\rm BS}(-x;a,0) = 
\sqrt{\frac{1}{a} J_X(x;a,0)} - \sqrt{-2x + \frac{1}{a} J_X(x;a,0)}
 \\
&= \sqrt{2x}\left( \sqrt{1 + r(x,a)} - \sqrt{r(x,a)} \right)
\nonumber
\\
&= \sqrt{2x} \left( 1 - \sqrt{r(x,a)} + \frac12 r(x,a)
+ O(r^2(x,a)) \right)\,, \nonumber
\end{align}
where
\begin{equation}
r(x,a) = \frac{1}{4ax} \log^2(2x) + \frac{1}{2ax} \log(2x) \log(2 \log(2x))
- \frac{1}{2ax} \log(2x) + O\left(\frac{\log\log x}{2ax}\right)\,.
\end{equation}
Expanding the result gives (\ref{SigExp}).
\end{proof}

\begin{corollary}
The large strike asymptotics of the implied variance for $x \to \infty$
at fixed $T$ is given by
\begin{equation}
\sigma_{\rm BS}^2(x,T) T= 
\sigma_0^2 \Sigma_{\rm BS}^2
\left( \frac{x}{\sigma_0^2 T},a,0\right) T =
\frac{2x}{\left(\sqrt{1+r\left(\frac{x}{\sigma_0^2 T},a\right)} 
 + \sqrt{r\left(\frac{x}{\sigma_0^2 T},a\right)}\right)^2} \,,
\end{equation}
with $a = 2(\omega^2 T)(\sigma_0^2 T)$ and 
\begin{equation}
r\left(\frac{x}{\sigma_0^2 T},a\right) = \frac{1}{4\omega^2 T x}
\left\{ \frac12 \log^2 \left( \frac{2x}{\sigma_0^2 T} \right) + 
\log\left( \frac{2x}{\sigma_0^2 T} \right)
\log\left( 2\log \left(\frac{2x}{\sigma_0^2 T} \right)\right) -
\log\left( \frac{2x}{\sigma_0^2 T} \right) \right\}\,.
\end{equation}
Using the expansion $(\sqrt{1+r} + \sqrt{r})^{-2} = 1 - 2\sqrt{r} + 2r + O(r^{3/2})$
we get, keeping only the $O(\sqrt{r})$ term,
\begin{align}\label{impvol2}
\sigma_{\rm BS}^2(x,T)T &= 2x \left( 1 - \frac{1}{\sqrt{2\omega^2 T x}}
\sqrt{L^2 + 2 L \log(2L) - 2L} + \cdots \right) \\
&= 2x - \frac{\sqrt{2x}}{\sqrt{\omega^2 T}} 
\sqrt{L^2 + 2 L \log(2L) - 2L} + \cdots\,, \nonumber
\end{align}
where we denoted $L=\log\left( \frac{2x}{\sigma_0^2 T}\right)$.
\end{corollary}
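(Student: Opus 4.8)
The plan is to obtain this Corollary as a transcription of the preceding Proposition on the $x\to\infty$ expansion of $\Sigma_{\rm BS}(x;a,0)$, combined with the scaling relation of Theorem~\ref{thm:main}. First I would specialize Theorem~\ref{thm:main} to $\varrho=0$ and write $\sigma_{\rm BS}^2(x,T)\,T=\sigma_0^2\,\Sigma_{\rm BS}^2\!\bigl(\tfrac{x}{\sigma_0^2 T};a,0\bigr)\,T$ with $a=2(\omega^2 T)(\sigma_0^2 T)$; since $y:=x/(\sigma_0^2 T)\to\infty$, only the $y>y_R$ branch of \eqref{SigBSdef} is relevant, giving $\Sigma_{\rm BS}(y;a,0)=\sqrt{\tfrac1a\mathcal{J}_X(y;a,0)}-\sqrt{\tfrac1a\mathcal{J}_X(y;a,0)-2y}$.

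Next I would set $\tfrac1a\mathcal{J}_X(y;a,0)=2y\bigl(1+r(y,a)\bigr)$, so that $\tfrac1a\mathcal{J}_X(y;a,0)-2y=2y\,r(y,a)$ and $\Sigma_{\rm BS}(y;a,0)=\sqrt{2y}\bigl(\sqrt{1+r}-\sqrt r\bigr)$. The elementary identity $\bigl(\sqrt{1+r}-\sqrt r\bigr)\bigl(\sqrt{1+r}+\sqrt r\bigr)=1$ then turns this into the compact form $\Sigma_{\rm BS}^2(y;a,0)=2y/\bigl(\sqrt{1+r}+\sqrt r\bigr)^2$, and cancelling the factor $\sigma_0^2 T$ produces $\sigma_{\rm BS}^2(x,T)T=2x/\bigl(\sqrt{1+r}+\sqrt r\bigr)^2$. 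The explicit shape of $r$ comes from the large-argument expansion of $\mathcal{J}_X(\,\cdot\,;a,0)$ in Proposition~\ref{prop:JXasympt} (already used in the proof of the preceding Proposition): substituting $ay=2\omega^2 Tx$ and $2y=2x/(\sigma_0^2 T)$ rewrites the $\log(2y)$ and $\log\!\bigl(2\log(2y)\bigr)$ contributions in terms of $L=\log\!\bigl(2x/(\sigma_0^2 T)\bigr)$, yielding the displayed formula for $r$.

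Finally, because $r=O\!\bigl(L^2/(\omega^2 Tx)\bigr)\to 0$ as $x\to\infty$, I would expand $\bigl(\sqrt{1+r}+\sqrt r\bigr)^{-2}$: from $\bigl(\sqrt{1+r}+\sqrt r\bigr)^2=1+2\sqrt r+2r+O(r^{3/2})$ one gets $\bigl(\sqrt{1+r}+\sqrt r\bigr)^{-2}=1-2\sqrt r+2r+O(r^{3/2})$. Multiplying by $2x$ and using $2x\cdot 2\sqrt r=\tfrac{\sqrt{2x}}{\sqrt{\omega^2 T}}\sqrt{L^2+2L\log(2L)-2L}$ gives the claimed expansion \eqref{impvol2}, the stated remainder being the $O(r)$ term (equivalently, the doubly-logarithmic remainder inherited from Proposition~\ref{prop:JXasympt}).

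The step requiring the most care is not algebraic but the asymptotic bookkeeping inside $r(y,a)$: one must check that the terms discarded from Proposition~\ref{prop:JXasympt} are genuinely of strictly lower order in the hierarchy $x\gg\log x\gg\log\log x$, and then propagate those error estimates consistently through the square-root and reciprocal expansions so that the remainders advertised in the Corollary are honest. Everything else is a change of variables plus a two-term binomial expansion.
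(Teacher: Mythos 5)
Your proposal is correct and follows essentially the same route as the paper: the Corollary is obtained by specializing Theorem~\ref{thm:main} to $\varrho=0$ on the $y>y_R$ branch, writing $\tfrac1a\mathcal{J}_X(y;a,0)=2y(1+r)$ via the large-argument expansion of Proposition~\ref{prop:JXasympt}, using the identity $(\sqrt{1+r}-\sqrt r)(\sqrt{1+r}+\sqrt r)=1$ to pass to the reciprocal form, and then expanding in $\sqrt r$ after the change of variables $ay=2\omega^2Tx$, $2y=2x/(\sigma_0^2T)$. Your algebra (including the expansion $(\sqrt{1+r}+\sqrt r)^{-2}=1-2\sqrt r+2r+O(r^{3/2})$ and the identification of $2x\cdot 2\sqrt r$ with the stated correction term) checks out against the paper's derivation.
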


The leading term in (\ref{impvol2}) agrees with the result expected from Lee's moment 
formula. Recall that under the Log-Euler-log-Euler scheme, all moments $\mathbb{E}[(S_n)^{1+\varepsilon}]$ with
$\varepsilon>0$ are infinite \cite{PZStochVol}. The Lee moment formula 
\cite{Lee} predicts that the large strike asymptotics of the 
implied variance is $\lim_{x\to \infty} \sigma_{\rm BS}^2(x,T) T = 2x$. 

The short maturity SABR formula (\ref{Hagan}) gives an implied volatility which 
grows faster than
the behavior expected from the Lee's moment formula. Therefore its applicability
is limited to a region of log-strikes sufficiently close to the at-the-money
region.

The result (\ref{impvol2}) agrees with the subleading correction
derived by Gulisashvili and Stein in the uncorrelated Hull-White model 
\cite{GSHW}. In Theorem 3.1 and Corollary 3.1 of this paper, the following
asymptotic result is proved in this model (assuming $S_0=1$)
\begin{equation}
\sigma_{\rm BS}^2(K,T) T = 2\log K - \frac{1}{\omega\sqrt{T}} \sqrt{2\log K}
( \log\log K + \log\log\log K) + \cdots\,.
\end{equation}
The leading correction term to the Lee's moment formula $\sim \log\log K$ 
agrees with (\ref{impvol2}).

%%%%%%%%%%%%%%%%%%%%%%%%%%%%%%%%%%%%%%%%%%%%%%%%%%%%%%%%%%%%%%%%%%%
\section{Numerical benchmarks}
\label{sec:7}

We compare in this section numerical benchmarks for implied volatility in the
$\gamma=1$ SABR model, with the asymptotic results of this paper.

\begin{table}
\caption{\label{Tab:corr} 
Scenarios for the numerical testing, from Table 8.6 in \cite{Lewis2017}.
The model parameters are $\sigma_0=0.2,\omega=1.0, \varrho=-0.75$.
The table shows the parameter $a=2(\omega^2 T)(\sigma_0^2T)$ 
of the asymptotic expansion, the point $(u_m,v_m)$ determined by Corollary
\ref{lemma:2} and the right switch point $y_R$ given by (\ref{yRdef}).}
\begin{center}
\begin{tabular}{|c|c|c|c|c|}
\hline
$T$ & $a$ & $\sigma_0^2 T$ & $(u_m,v_m)$ & $y_R$ \\
\hline \hline
0.25 & 0.005 & 0.01 & (0.9636, 0.9638) & 0.4821 \\
1.0  & 0.08  & 0.04 & (0.8664, 0.8692) & 0.4362 \\
2.0  & 0.32  & 0.08 & (0.7589, 0.7676) & 0.3882 \\
5.0  & 2.0   & 0.20 & (0.5360, 0.5636) & 0.2938 \\
\hline 
\end{tabular}
\end{center}
\end{table}

The benchmark option prices are taken from Table~8.6 in \cite{Lewis2017}.
They were obtained using the transform method of
\cite{Lewis,Lewis2017} with the model parameters $\sigma_0=0.2,\omega=1.0,
\varrho=-0.75$ and several option maturities $T=0.25,1,2,5$. 

The asymptotic result $\sigma_{\rm BS}(x,T)/\sigma_0$ of Theorem~\ref{thm:main}
(blue/red curve) is compared against the benchmark values in 
Figure~\ref{Fig:AL} (black dots). The three regions of Theorem~\ref{thm:main}
are shown in different colors (red for the central region $-\frac12\sigma_0^2 T
\leq x \leq y_R \sigma_0^2 T$).
The figures show also the leading order $O(T^0)$ short-maturity 
asymptotics in the SABR model (\ref{Hagan}) as the dashed black curve.

From these results we note the following observations:

(i) For short maturities the agreement of the asymptotic result
with the SABR asymptotic formula (\ref{Hagan}), and with the numerical benchmark
results is very good. The central region of log-strikes of 
Theorem~\ref{thm:main} is very small, and it expands as the maturity increases.

(ii) At larger maturities the short-maturity approximation
(\ref{Hagan}) overestimates the actual implied volatility. While the asymptotic
result reproduces the decreasing trend of the numerical result, 
it is an overestimate for longer maturities. 

As explained in the previous section, the 
asymptotic result holds in the limit $\sigma_0/\omega \gg 1$. The numerical
benchmarks considered have $\sigma_0/\omega = 0.2$ which is not particularly
large. The agreement is expected to become better if this
ratio is large, corresponding to a small vol-of-vol scenario.
This is confirmed by the results in Table~\ref{tab:3} where the 
asymptotic result for the ATM implied volatility is compared with numerical 
benchmarks for a scenario with $\sigma_0=1.0,
\omega=0.1$. The agreement improves in the latter case, as expected.

(iii) From Table~\ref{tab:3} one observes that in the uncorrelated
case $\varrho=0$, the actual ATM implied volatility 
has a non-monotonic dependence on maturity: starts at $\sigma_0$ as $T\to 0$, 
first increases with maturity, 
and then decreases as $T\to \infty$. On the other hand, the asymptotic result 
has a monotonously decreasing trend. 

The discrepancy between the two results at
short maturity can be traced back to the absence of a $O(T)$ term
in the asymptotic expansion for the uncorrelated case, which is responsible 
for the increasing trend of the numerical results for small maturity. Using the
full $O(T^2)$ expansion for the ATM implied volatility (\ref{ALfull}),
which includes this term, 
reproduces well the benchmark results, as shown in Table~\ref{tab:3}.

\begin{figure}
    \centering
   \includegraphics[width=2.5in]{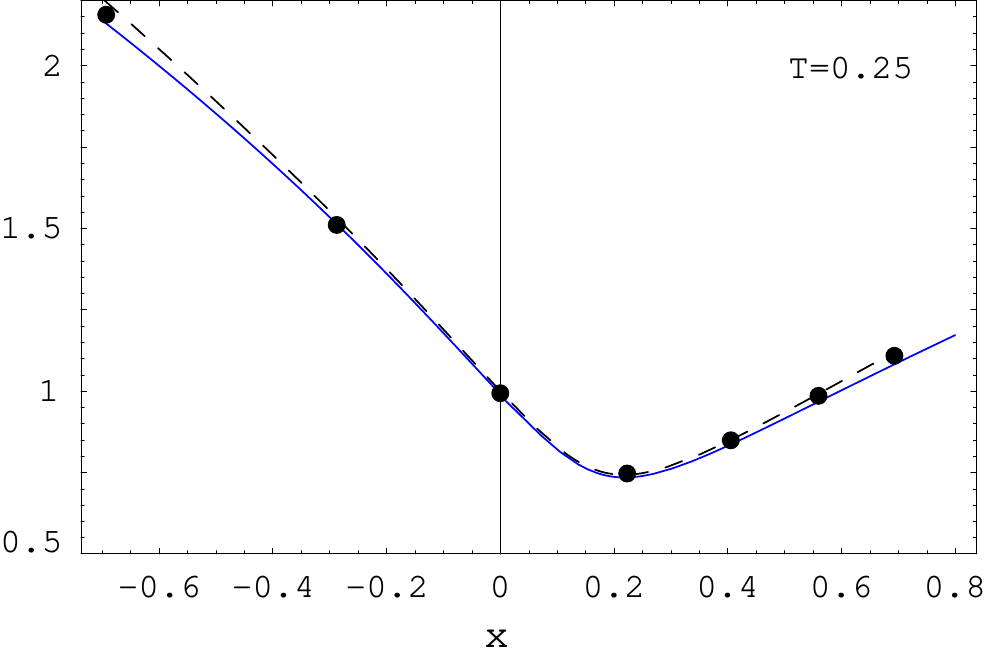}
   \includegraphics[width=2.5in]{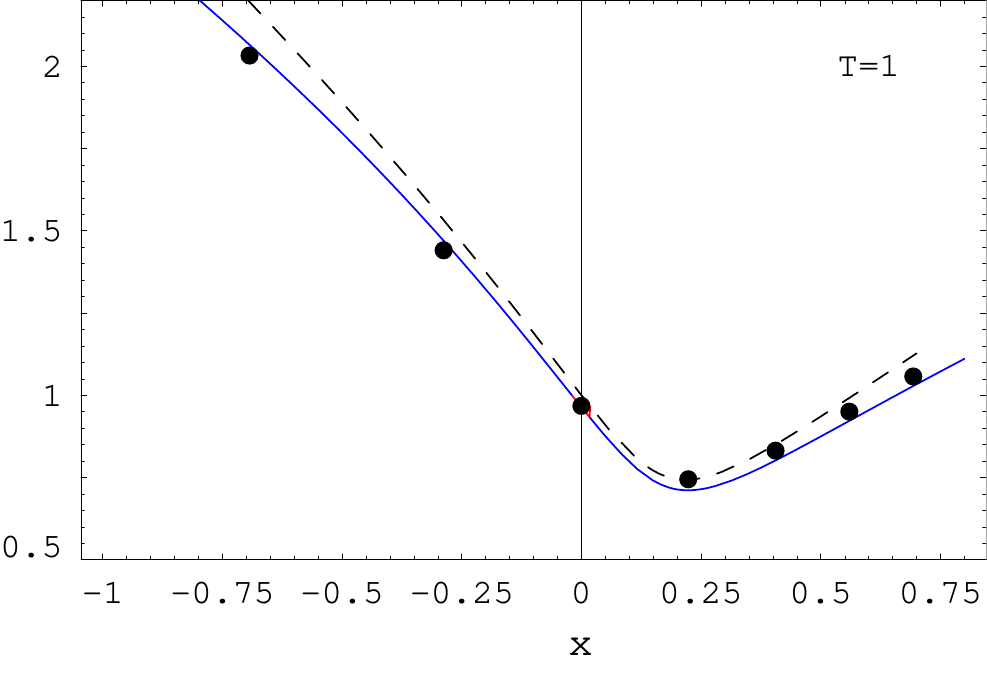}
   \includegraphics[width=2.5in]{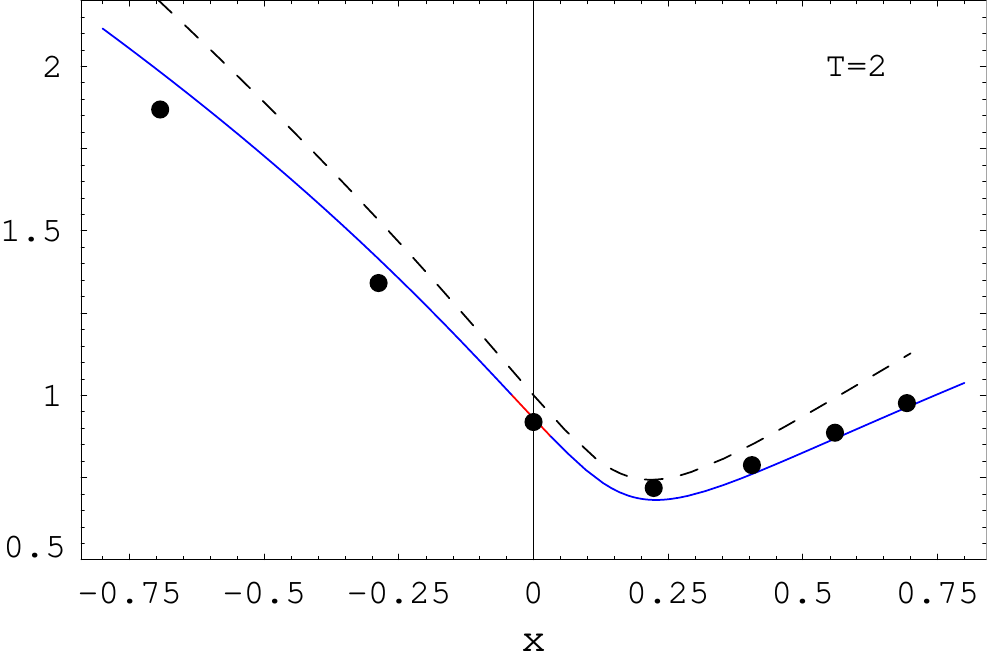}
   \includegraphics[width=2.5in]{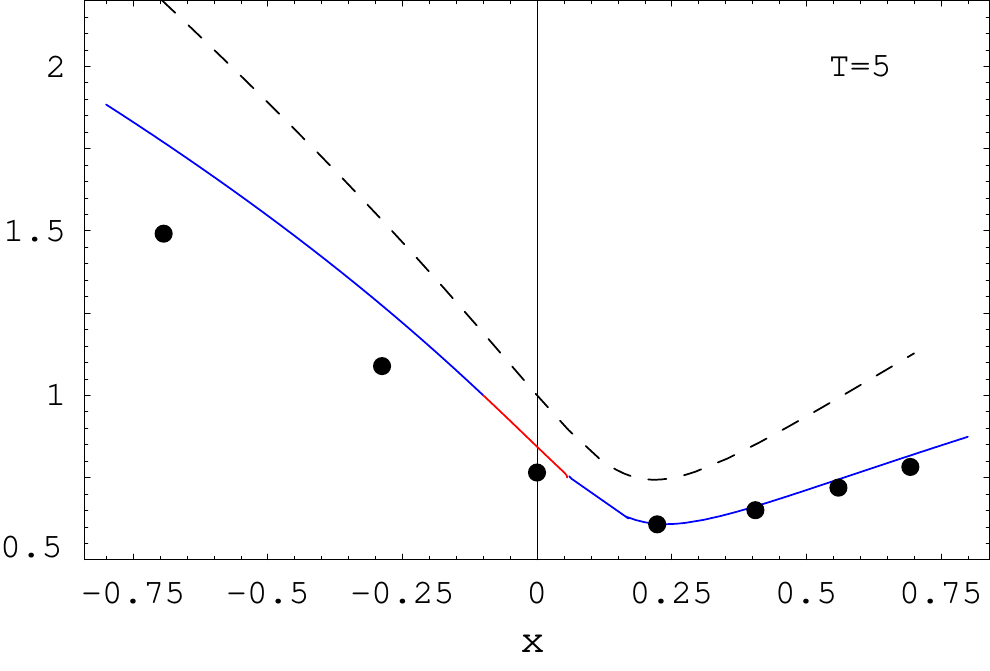}
    \caption{Plots of the asymptotic implied volatilities 
$\sigma_{\rm BS}(x,T)/\sigma_0$ vs $x=\log(K/S_0)$ (colored curves) 
for the scenarios in Table~\ref{Tab:corr}, taken from Table~8.6 of
\cite{Lewis2017}. The black dots show the benchmark values from \cite{Lewis2017},
and the dashed black curve shows the short maturity SABR implied volatility.
The different regions in Theorem~\ref{thm:main} are shown in different colors
(blue/red).}
\label{Fig:AL}
 \end{figure}

\begin{table}
\caption{\label{tab:3} 
Benchmark numerical values computed using the transform method (AL) 
\cite{Lewis2017}
and the second order short maturity expansion $O(T^2)$ from (\ref{ALfull}) 
for maturities $T=0.25, 1, 2, 5, 50$, comparing with the asymptotic results
from Theorem~\ref{thm:main}.
The model parameters are $(\sigma_0,\omega) = (0.2,1.0), (1.0,0.1)$ and 
$\varrho=0$. }
\begin{center}
\begin{tabular}{|c|ccc|ccc|}
\hline
  & \multicolumn{3}{|c|}{$\sigma_0=0.2,\omega=1.0$} 
  & \multicolumn{3}{|c|}{$\sigma_0=1.0,\omega=0.1$} \\
\hline
$T$ & AL & $O(T^2)$ expansion & asymptotic 
    & AL & $O(T^2)$ expansion & asymptotic \\
\hline
\hline
0.25 & 0.20407  & 0.204068 & 0.19998
     & 1.00018  & 1.00018  & 0.99997 \\
1.0  & 0.21460  & 0.215083 & 0.19967
     & 1.00041  &  1.00042 & 0.99958 \\
2.0  & 0.22123 & 0.227 & 0.19870
     & 0.999974 & 0.999998 & 0.99835 \\
5.0  & 0.20451 & 0.24375 & 0.19286
     & 0.993662 & 0.993734 & 0.99002 \\
50.0 & 0.07822 & -2.925 & 0.11275
     & 0.719669 & -0.0015625 & 0.72071 \\
\hline 
\end{tabular}
\end{center}
\end{table}

%%%%%%%%%%%%%%%%%%%%%%%%%%%%%%%%%%%%%%
\appendix

\section{The zero correlation case}
\label{sec:app}

The rate function $\mathcal{I}_X(x;\varrho)$ simplifies in the 
uncorrelated limit $\varrho=0$, and can be expressed
in closed form. This result can be used to derive the asymptotics of the rate 
function in various limits of small/large arguments.
We give in this Appendix these results and their proofs.

\subsection{The rate function $J_{\rm BS}(x)$}

The rate function appearing in the LDP for $\mathbb{P}(\frac{1}{n}V_n\in\cdot)$ 
can be extracted from Proposition 6 in \cite{PZDiscreteAsian}. 
A simpler form is given in Corollary 13 of the same
paper, in terms of the function $J_{\rm BS}(x)$. 
We reproduce here this result for ease of reference.

\begin{proposition}\label{prop:JBS}
Define $A_n := \sum_{i=0}^{n-1} e^{sZ_i + (m-\frac12 s^2) t_i}$, with $Z_i$ a 
standard
Brownian motion sampled on uniformly distributed times $t_i = \tau i$. 
Consider the $n\to \infty$
limit at fixed $b=\frac12 s^2 \tau n^2$ and $r = m\tau n$. In this limit 
$\mathbb{P}(\frac{1}{n} A_n \in \cdot) $ satisfies a LDP with rate function 
$I_{\rm BS} (\cdot ) = \frac{1}{2b} J_{\rm BS}(\cdot)$.
%%%%%%%%%%%%%%%%%%%%%
For $r=0$, 
the rate function $\mathcal{J}_{\rm BS}(x)$ is given by
\footnote{The case $r=0$ covers the case considered in this paper. See Proposition 6
in \cite{PZDiscreteAsian} for the general $r\neq 0$ case.}
\begin{equation}\label{JBSdef}
\mathcal{J}_{\rm BS}(x) = 
\begin{cases}
\frac12\xi^2 - \xi \tanh(\xi/2) &
\text{for $x \geq 1$}, \\
2\lambda(\tan \lambda - \lambda) &
\text{for $0 <x \leq 1$}, \\
0 & \text{for $x=1$},
\end{cases}
\end{equation}
where $\xi>0$ is the unique solution of the equation
\begin{equation}\label{eq:xi}
\frac{1}{\xi} \sinh\xi = x\,,
\end{equation}
and $\lambda\in (0,\frac{\pi}{2})$ is the unique solution of the equation
\begin{equation}\label{eq:lambda}
\frac{1}{2\lambda} \sin(2\lambda) = x\,.
\end{equation}
\end{proposition}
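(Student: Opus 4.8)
The plan is to reduce the statement to a one--dimensional calculus of variations problem via Mogulskii's theorem and the contraction principle, exactly as in the proof of Proposition~\ref{prop:I}, and then to solve that problem explicitly. Writing $Z_i=\sqrt{\tau}\sum_{j=1}^{i}U_j$ with $U_j$ i.i.d.\ $N(0,1)$, one has $sZ_i=\frac{\sqrt{2b}}{n}\sum_{j=1}^{i}U_j$, while (recall $r=m\tau n=0$, so $m=0$) the factor $e^{(m-\frac12 s^2)t_i}=e^{-\frac12 s^2 t_i}$ satisfies $e^{-b/n}\le e^{-\frac12 s^2 t_i}\le 1$ uniformly in $i\le n-1$ and so can be dropped up to exponential equivalence. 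With $g_n(t):=\frac1n\sum_{j=1}^{[tn]}U_j$ this gives $\frac1n A_n=(1+O(1/n))\int_0^1 e^{\sqrt{2b}\,g_n(t)}\,dt$, the error factor being a deterministic quantity tending to one. By Mogulskii's theorem $\mathbb P(g_n\in\cdot)$ satisfies an LDP on $L^\infty[0,1]$ with good rate function $I(g)=\frac12\int_0^1 (g'(t))^2\,dt$ on $AC[0,1]$ with $g(0)=0$ (and $+\infty$ otherwise); since $g\mapsto\int_0^1 e^{\sqrt{2b}\,g(t)}\,dt$ is continuous on $L^\infty[0,1]$, the contraction principle yields that $\mathbb P(\frac1n A_n\in\cdot)$ satisfies an LDP with rate function $I_{\rm BS}(x)=\frac{1}{2b}\mathcal J_{\rm BS}(x)$, where, after the substitution $h=\sqrt{2b}\,g$,
\begin{equation}
\mathcal J_{\rm BS}(x)=\inf\Bigl\{\tfrac12\!\int_0^1 (h'(t))^2\,dt\ :\ h\in AC[0,1],\ h(0)=0,\ \int_0^1 e^{h(t)}\,dt=x\Bigr\}.
\end{equation}

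Next I would solve this variational problem. Introducing a Lagrange multiplier $\mu$ for the integral constraint, the Euler--Lagrange equation is $h''=-\frac{\mu}{2}e^{h}$ with $h(0)=0$ and the natural boundary condition $h'(1)=0$; multiplying by $h'$ and integrating gives the first integral $(h')^2=\mu\bigl(e^{h(1)}-e^{h}\bigr)$. A sign analysis -- the constraint forces $e^{h}\le e^{h(1)}$ throughout when $\mu>0$, which is compatible with $x>1$, and $e^{h}\ge e^{h(1)}$ when $\mu<0$, compatible with $x<1$ -- leads to the explicit solutions $e^{h(t)}=\cosh^2\omega/\cosh^2(\omega(1-t))$ for $x>1$ and $e^{h(t)}=\cos^2\omega/\cos^2(\omega(1-t))$ for $x<1$, with $\omega>0$ (resp.\ $\omega\in(0,\frac\pi2)$). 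Evaluating the constraint $\int_0^1 e^{h}\,dt=x$ reduces it to $\frac{\sinh(2\omega)}{2\omega}=x$ (resp.\ $\frac{\sin(2\omega)}{2\omega}=x$), which has a unique solution by monotonicity; setting $\xi=2\omega$ (resp.\ $\lambda=\omega$) recovers \eqref{eq:xi} and \eqref{eq:lambda}. Finally, evaluating $\tfrac12\int_0^1(h'(t))^2\,dt$ at the minimizer, with $h'(t)=-2\omega\tanh(\omega(1-t))$ resp.\ $h'(t)=2\omega\tan(\omega(1-t))$, yields $\tfrac12\xi^2-\xi\tanh(\xi/2)$ for $x\ge1$ and $2\lambda(\tan\lambda-\lambda)$ for $0<x\le1$, while $x=1$ is the degenerate case $h\equiv0$ with rate $0$.

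The main obstacle I expect lies in the rigorous treatment of the variational problem rather than in the explicit integration. One must (i) establish existence of a minimizer -- this follows from goodness of the Mogulskii rate function together with continuity of the constraint functional -- and enough regularity to justify the Euler--Lagrange equation and the natural boundary condition; (ii) verify that the explicit critical path found in each regime is in fact the global minimum, e.g.\ by convexity of $h\mapsto\tfrac12\int_0^1(h')^2$ on the constraint set or by showing the critical point is unique in each regime, and check the dichotomy $\mu>0 \Leftrightarrow x>1$ versus $\mu<0 \Leftrightarrow x<1$; and (iii) carry out the exponential-equivalence estimate used to discard $e^{-\frac12 s^2 t_i}$, the same device already used in the proof of Proposition~\ref{prop:I}. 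This argument is carried out in detail in Proposition~6 and Corollary~13 of \cite{PZDiscreteAsian}.
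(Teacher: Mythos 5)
Your proposal is correct and follows essentially the route the paper itself takes: the paper does not prove this proposition but imports it from Proposition~6 and Corollary~13 of \cite{PZDiscreteAsian}, and the Mogulskii-plus-contraction reduction you describe is exactly the device used in the proof of Proposition~\ref{prop:I}, while your explicit solution of the constrained variational problem (Euler--Lagrange equation with the natural boundary condition $h'(1)=0$, the first integral, the $\cosh$/$\cos$ profiles, and the resulting equations for $\xi$ and $\lambda$) reproduces \eqref{JBSdef} correctly. The only blemish is a harmless sign inconsistency: for $e^{h(t)}=\cosh^2\omega/\cosh^2(\omega(1-t))$ one has $h'(t)=+2\omega\tanh(\omega(1-t))$ (and $h'(t)=-2\omega\tan(\omega(1-t))$ in the trigonometric case), the opposite of what you wrote, but since only $(h')^2$ enters the functional, the constraint evaluations and the final values $\frac12\xi^2-\xi\tanh(\xi/2)$ and $2\lambda(\tan\lambda-\lambda)$ are unaffected.
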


The sum $V_n= \sum_{i=0}^{n-1} \sigma_i^2 \tau$ is obtained by identifying 
$\frac{V_n}{\sigma_0^2\tau}
\to A_n$ with the substitutions $s \to 2\omega, m \to \omega^2$. In the $n\to \infty$ limit at fixed
$b = \frac12s^2 \tau n^2 = 2 \omega^2 \tau n^2$, it is clear that $r = m n\tau = \omega n \tau \to 0$. 
This justifies the $r=0$ limit used in (\ref{JBSdef}).

We will require also the derivative of the rate function 
$\mathcal{J}_{\rm BS}(x)$.
This can be obtained in closed form and is given by the following result.

\begin{corollary}
The derivative of the rate function $\mathcal{J}_{\rm BS}(x)$ is given by
\begin{equation}
\mathcal{J}'_{\rm BS}(x) = 
\begin{cases}
\frac12 \frac{\xi^2}{\cosh^2(\xi/2)} & \text{for $x > 1$}, \\
- 2 \frac{\lambda^2}{\cos^2\lambda} & \text{for $0 < x < 1$},
\end{cases}
\end{equation}
where $\xi$ is the solution of the equation (\ref{eq:xi}) and $\lambda$ is the solution of the
equation (\ref{eq:lambda}).
\end{corollary}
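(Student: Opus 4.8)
The plan is to obtain $\mathcal{J}'_{\rm BS}(x)$ by differentiating the closed-form parametric representation of Proposition~\ref{prop:JBS} via the chain rule. Consider first the branch $x>1$, where $\mathcal{J}_{\rm BS}(x)=\tfrac12\xi^2-\xi\tanh(\xi/2)$ with $\xi=\xi(x)$ the unique positive root of $g(\xi):=\sinh\xi/\xi=x$. Since $g'(\xi)=(\xi\cosh\xi-\sinh\xi)/\xi^2>0$ for all $\xi>0$ — this strict monotonicity is exactly what underlies the uniqueness statement in Proposition~\ref{prop:JBS} — the map $x\mapsto\xi(x)$ is differentiable with $\xi'(x)=1/g'(\xi)$, and hence
\begin{equation}
\mathcal{J}'_{\rm BS}(x)=\frac{1}{g'(\xi)}\,\frac{d}{d\xi}\!\left(\tfrac12\xi^2-\xi\tanh(\xi/2)\right)=\frac{\xi^2\left(\xi-\tanh(\xi/2)-\dfrac{\xi}{2\cosh^2(\xi/2)}\right)}{\xi\cosh\xi-\sinh\xi}.
\end{equation}

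The remaining step is a routine simplification of the numerator using the half-angle identities $2\cosh^2(\xi/2)=\cosh\xi+1$ and $2\sinh(\xi/2)\cosh(\xi/2)=\sinh\xi$: multiplying the bracketed factor by $2\cosh^2(\xi/2)$ collapses it to $\xi(\cosh\xi+1)-\sinh\xi-\xi=\xi\cosh\xi-\sinh\xi$, which cancels the denominator and leaves $\mathcal{J}'_{\rm BS}(x)=\xi^2/(2\cosh^2(\xi/2))$, as claimed. For the branch $0<x<1$ one may repeat the identical computation with $\lambda$ solving $\sin(2\lambda)/(2\lambda)=x$ and $\mathcal{J}_{\rm BS}(x)=2\lambda(\tan\lambda-\lambda)$, using the trigonometric half-angle identities in place of the hyperbolic ones; alternatively, and more economically, the substitution $\xi=2i\lambda$ carries the defining equation, the rate-function formula, and the derivative formula of the $x>1$ branch into their $0<x<1$ counterparts (in particular $\xi^2/(2\cosh^2(\xi/2))\mapsto -2\lambda^2/\cos^2\lambda$), and since every quantity in sight is real-analytic in a neighbourhood of the relevant parameter the identity extends by analytic continuation.

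I do not anticipate a genuine obstacle here: the statement is a direct consequence of the chain rule applied to an explicit formula. The only points that deserve an explicit word are (i) the non-vanishing $g'(\xi)>0$ (resp. the analogue for $\lambda$), which licenses the inverse-function step, and (ii) the hyperbolic and trigonometric half-angle identities used in the cancellation. For completeness one could also record that $\mathcal{J}'_{\rm BS}(x)\to 0$ as $x\to 1^{\pm}$ (since $\xi,\lambda\to 0$), so that $\mathcal{J}_{\rm BS}$ is in fact $C^1$ across $x=1$ with a minimum there, consistent with $\mathcal{J}_{\rm BS}(1)=0$; this is not needed for the corollary as stated but is reassuring.
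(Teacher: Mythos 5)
Your proof is correct: the paper states this corollary without giving any proof, and the implicit-differentiation/chain-rule computation you carry out (using $g'(\xi)=(\xi\cosh\xi-\sinh\xi)/\xi^2>0$ and the half-angle identities $2\cosh^2(\xi/2)=\cosh\xi+1$, $2\sinh(\xi/2)\cosh(\xi/2)=\sinh\xi$, with the analogous or analytically continued argument via $\xi=2i\lambda$ for the trigonometric branch) is exactly the evident intended derivation from Proposition~\ref{prop:JBS}. Both branches check out, and the signs are consistent with $\mathcal{J}_{\rm BS}$ having its minimum at $x=1$.
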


\subsection{Closed form result for the rate function $\mathcal{I}_X(x;\varrho=0)$}
\label{sec:a1}

The rate function  $\mathcal{I}_X(x;\varrho)$ giving the large deviations for 
$\mathbb{P}(\frac{1}{n}
\log S_n\in \cdot)$ is given by the solution of the extremal problem
in Proposition~\ref{prop:LDlogS} in the main text. %(Prop.~9). 
In the zero correlation limit $\varrho=0$ this extremal problem simplifies 
to a one-dimensional extremal problem. Using the one dimensional projection 
relation $\inf_v I(u,v) = 2\mathcal{J}_{\rm BS}(u)$, the extremal problem 
(\ref{JXinf}) simplifies as
\begin{align}\label{1dimJX}
\mathcal{J}_X(y; a,0) &=\inf_{u,v} \left\{ \frac12 I(u,v) + \frac{a}{u}
\left( y + \frac12 u \right)^2 \right\} \\
&= \inf_{u} \left\{ \mathcal{J}_{\rm BS}(u) + \frac{a}{u}
\left( y + \frac12 u \right)^2 \right\}
= \mathcal{J}_{\rm BS}(u_*) + \frac{a}{u_*}
\left( y + \frac12 u_* \right)^2\,, \nonumber
\end{align}
where we denoted in the last line the 
optimal value of $u$ in the extremal problem as $u_*(y)$. 

\begin{lemma}\label{lemma:1}
The extremal value $u_*(y)$ has the following properties:

(i) $u_*(y) > 1$ for $|y| > \frac12$;

(ii) $0 < u_*(y) < 1$ for $|y| < \frac12$.
\end{lemma}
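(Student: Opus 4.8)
\emph{Proof plan.} The plan is to reduce Lemma~\ref{lemma:1} to a one-dimensional sign analysis. I would write the objective from (\ref{1dimJX}) as
\[
\Phi(u) := \mathcal{J}_{\rm BS}(u) + \frac{a}{u}\Bigl(y+\tfrac12 u\Bigr)^2
= \mathcal{J}_{\rm BS}(u) + a\frac{y^2}{u} + ay + \frac{a}{4}\,u , \qquad u>0,
\]
so that $u_*(y)$ is a global minimizer of $\Phi$ over $(0,\infty)$, and then locate the critical points of $\Phi$ relative to $u=1$ via the sign of $\Phi'$, which comes down to a single elementary evaluation at $u=1$.

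First I would record the properties of $\mathcal{J}_{\rm BS}$ that enter, all immediate from Proposition~\ref{prop:JBS} and the accompanying formula for $\mathcal{J}'_{\rm BS}$. Tracking the defining relations $\tfrac{1}{\xi}\sinh\xi=u$ (for $u>1$) and $\tfrac{1}{2\lambda}\sin 2\lambda=u$ (for $0<u<1$) gives $\xi\downarrow 0$ and $\lambda\downarrow 0$ as $u\to 1$; since also $\mathcal{J}_{\rm BS}(1)=0$ and both branch formulas in (\ref{JBSdef}) tend to $0$ there, $\mathcal{J}_{\rm BS}$ is $C^1$ on $(0,\infty)$ with $\mathcal{J}'_{\rm BS}(1)=0$, while $\mathcal{J}'_{\rm BS}(u)=\frac{\xi^2}{2\cosh^2(\xi/2)}>0$ for $u>1$ and $\mathcal{J}'_{\rm BS}(u)=-\frac{2\lambda^2}{\cos^2\lambda}<0$ for $0<u<1$. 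Also $\lambda\uparrow\frac{\pi}{2}$ as $u\downarrow 0$, so $\mathcal{J}_{\rm BS}(0^+)=+\infty$; together with $\mathcal{J}_{\rm BS}\ge 0$ and $\frac{a}{4}u\to\infty$ as $u\to\infty$ this forces $\Phi(u)\to\infty$ at both ends of $(0,\infty)$, so a global minimizer $u_*(y)$ exists and, being interior, satisfies $\Phi'(u_*(y))=0$.

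The crux is that the constant $ay$ drops from the derivative, so
\[
\Phi'(u)=\mathcal{J}'_{\rm BS}(u)-\frac{a y^2}{u^2}+\frac{a}{4},
\qquad\text{whence}\qquad
\Phi'(1)=a\Bigl(\tfrac14-y^2\Bigr).
\]
If $|y|<\tfrac12$, then for every $u\ge 1$ one has $\frac{ay^2}{u^2}\le ay^2<\frac{a}{4}$, so $\Phi'(u)\ge \mathcal{J}'_{\rm BS}(u)+a\bigl(\tfrac14-y^2\bigr)>0$ throughout $[1,\infty)$; hence $\Phi$ has no critical point in $[1,\infty)$, and therefore $u_*(y)\in(0,1)$, which is (ii). Symmetrically, if $|y|>\tfrac12$, then for every $u\in(0,1]$ one has $\frac{ay^2}{u^2}\ge ay^2>\frac{a}{4}$, so $\Phi'(u)\le \mathcal{J}'_{\rm BS}(u)+a\bigl(\tfrac14-y^2\bigr)<0$ throughout $(0,1]$; hence $\Phi$ has no critical point in $(0,1]$, and therefore $u_*(y)\in(1,\infty)$, which is (i).

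The main thing to be careful about is the behaviour of $\mathcal{J}_{\rm BS}$ at the gluing value $u=1$ — where the two branches in (\ref{JBSdef}) must match to first order so that $\mathcal{J}'_{\rm BS}(1)=0$ — and the blow-up $\mathcal{J}_{\rm BS}(0^+)=+\infty$ needed for coercivity; both follow from the explicit formulas by tracking the limits of $\xi$ and $\lambda$, so they are routine rather than genuine obstacles. I would also stress that the argument needs neither convexity of $\mathcal{J}_{\rm BS}$ (which in fact fails, since $u\mapsto \mathcal{J}'_{\rm BS}(u)$ is not monotone on $(1,\infty)$) nor uniqueness of the minimizer: it shows that every global minimizer of $\Phi$ lies in the asserted interval.
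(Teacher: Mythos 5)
Your proof is correct, and it is a tightened version of the paper's argument rather than a literal match. Both proofs start from the same first-order condition $\mathcal{J}'_{\rm BS}(u)=a\bigl(\tfrac{y^2}{u^2}-\tfrac14\bigr)$ and decide the position of $u_*$ relative to $u=1$ by a sign comparison there. The paper, however, argues by intersecting a ``decreasing'' right-hand side with an ``increasing'' left-hand side, i.e.\ it implicitly invokes monotonicity of $\mathcal{J}'_{\rm BS}$ on $(1,\infty)$ --- and, as you correctly observe, that monotonicity fails: from the closed form $\mathcal{J}'_{\rm BS}(u)=\tfrac12\xi^2/\cosh^2(\xi/2)$ with $\sinh\xi/\xi=u$, the derivative vanishes at $u=1$, is positive for $u>1$, and tends to $0$ again as $u\to\infty$, so it cannot be increasing throughout. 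Your version uses only the sign of $\mathcal{J}'_{\rm BS}$ on each side of $u=1$ (which is all that is actually needed) and shows $\Phi'$ has a fixed sign on the whole relevant half-line, so no critical point --- hence no minimizer --- can lie there. You also supply two steps the paper leaves tacit: coercivity of $\Phi$ at $u\to 0^+$ and $u\to\infty$, which guarantees that a global minimizer exists and is an interior critical point, and the $C^1$ matching of the two branches of $\mathcal{J}_{\rm BS}$ at $u=1$. What your approach buys is robustness: it is valid for every global minimizer without assuming uniqueness or convexity, and it repairs the (inessential but real) gap in the published argument. The only cosmetic remark is that your sentence ``no critical point in $[1,\infty)$, therefore $u_*(y)\in(0,1)$'' leans on the coercivity you established earlier; it would be worth making that dependence explicit in a final write-up.
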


\begin{proof}
The optimal value $u_*(y)$ is given by the solution of the equation
\begin{equation}\label{eq:ust}
\mathcal{J}_{\rm BS}'(u) + \frac{a}{u^2} \left( \frac14 u^2 - y^2\right) = 0\,.
\end{equation}
This equation can be written equivalently as
\begin{equation}
\mathcal{J}_{\rm BS}'(u) = a \left( \frac{y^2}{u^2} - \frac14 \right).
\end{equation}
The function on the right side is decreasing in $u$ and is positive at $y=1$ for $|y| > \frac12$,
and negative for $|y| < \frac12$. The function on the left side is increasing and vanishes at $y=1$.
This implies that the two sides will become equal at a point $u_*(y)$ which is larger than $1$ in the 
first case, and lower than $1$ in the second case. This proves the claim.
\end{proof}

We give next a closed form result for the rate function, which is useful
for numerical evaluations and deriving asymptotic expansions.

\begin{corollary}\label{corr:JXrho0}
In the zero correlation limit $\varrho = 0$, 
the rate function $\mathcal{J}_X(y;a,0)$ has the following explicit form.

Case 1. $|y|>\frac12 $.
\begin{equation}\label{c:JX1}
\mathcal{J}_X(y;a,0) = \frac12 \xi^2 - \xi\tanh(\xi/2) + a 
\frac{\xi}{\sinh \xi} \left(y + \frac{1}{2\xi}\sinh\xi\right)^2\,,
\end{equation}
where $\xi\in(0,\infty)$ satisfies the equation
\begin{equation}\label{c:eq:xi}
\frac{1}{4a} \frac{\xi^2}{\cosh^2(\xi/2)} + \frac18 - 
\frac12 y^2 \frac{\xi^2}{\sinh^2\xi} = 0\,.
\end{equation}

Case 2.  $|y|<\frac12 $. 
\begin{equation}\label{c:JX2}
\mathcal{J}_X(y;a,0) = 2\lambda (\tan\lambda - \lambda) + a 
\frac{2\lambda}{\sin 2\lambda} 
\left(y + \frac{1}{2} \frac{\sin 2\lambda}{2\lambda}\right)^2\,,
\end{equation}
where $\lambda\in(0,\pi/2)$ satisfies the equation
\begin{equation}\label{c:eq:lam}
-\frac{1}{a} \frac{\lambda^2}{\cos^2\lambda} + \frac18 - 
\frac12 y^2 \frac{(2\lambda)^2}{\sin^2(2\lambda)} = 0\,.
\end{equation}

Case 3. $|y|=\frac{1}{2}$.
\begin{equation}
\mathcal{J}_{X}\left(-\frac{1}{2};a,0\right)=0,
\qquad
\mathcal{J}_{X}\left(\frac{1}{2};a,0\right)=a.
\end{equation}
\end{corollary}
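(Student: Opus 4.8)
The plan is to evaluate explicitly the one–dimensional infimum already recorded in Eq.~(\ref{1dimJX}),
\[
\mathcal{J}_X(y;a,0) = \inf_{u>0}\Big\{ \mathcal{J}_{\rm BS}(u) + \frac{a}{u}\big(y+\tfrac12 u\big)^2 \Big\},
\]
and to read off the three cases from the position of the minimizer. First I would establish that the objective is strictly convex and coercive on $(0,\infty)$: writing $\frac{1}{u}(y+\tfrac12 u)^2 = \frac{y^2}{u}+y+\frac{u}{4}$ shows the second term is convex in $u$, while $\mathcal{J}_{\rm BS}$ is convex because its derivative, given in closed form in the corollary following Proposition~\ref{prop:JBS}, is monotone increasing in $u$; moreover the objective tends to $+\infty$ as $u\to 0^+$ (via the $ay^2/u$ term, for $y\neq 0$) and as $u\to\infty$ (via $\mathcal{J}_{\rm BS}(u)$). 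Hence there is a unique minimizer $u_*(y)\in(0,\infty)$, characterized by the first–order condition (\ref{eq:ust}).

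Next I would invoke Lemma~\ref{lemma:1}, which locates $u_*$: $u_*>1$ for $|y|>\tfrac12$ and $0<u_*<1$ for $|y|<\tfrac12$. In the case $|y|>\tfrac12$, substituting the parametrization $u_*=\sinh\xi/\xi$ ($\xi>0$) together with $\mathcal{J}_{\rm BS}(u_*)=\tfrac12\xi^2-\xi\tanh(\xi/2)$ and $\mathcal{J}_{\rm BS}'(u_*)=\tfrac12\xi^2/\cosh^2(\xi/2)$ into (\ref{eq:ust}) and dividing through by $2a$ yields Eq.~(\ref{c:eq:xi}), and substituting the same expressions into the objective yields Eq.~(\ref{c:JX1}). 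The case $|y|<\tfrac12$ is handled identically with $u_*=\sin 2\lambda/(2\lambda)$ ($\lambda\in(0,\pi/2)$), $\mathcal{J}_{\rm BS}(u_*)=2\lambda(\tan\lambda-\lambda)$, $\mathcal{J}_{\rm BS}'(u_*)=-2\lambda^2/\cos^2\lambda$, producing Eqs.~(\ref{c:eq:lam}) and (\ref{c:JX2}). Uniqueness of $\xi$ (resp.\ $\lambda$) in the stated interval follows from uniqueness of $u_*$ and the strict monotonicity of $\xi\mapsto\sinh\xi/\xi$ on $(0,\infty)$ (resp.\ $\lambda\mapsto\sin 2\lambda/(2\lambda)$ on $(0,\pi/2)$).

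For the boundary case $|y|=\tfrac12$ I would argue directly: both one–sided limits of $\mathcal{J}_{\rm BS}'$ at $u=1$ vanish (as $\xi,\lambda\to 0$), so $\mathcal{J}_{\rm BS}'(1)=0$ and the first–order condition (\ref{eq:ust}) at $u=1$ reduces to $\tfrac14-y^2=0$; thus $u_*=1$, and since $\mathcal{J}_{\rm BS}(1)=0$ the infimum equals $a(y+\tfrac12)^2$, which is $0$ at $y=-\tfrac12$ and $a$ at $y=\tfrac12$. (Equivalently, $\mathcal{J}_X(-\tfrac12;a,0)=0$ restates Proposition~\ref{prop:Iprop} and then $\mathcal{J}_X(\tfrac12;a,0)=a$ follows from the symmetry relation (\ref{JXsymm2}).) The only genuinely non–routine step is the convexity/coercivity argument at the outset; once the minimizer is known to be unique and interior, the rest is the bookkeeping of plugging the parametrizations of Proposition~\ref{prop:JBS} and its corollary into (\ref{eq:ust}) and into the objective, so I would take care to spell out the monotonicity of $\mathcal{J}_{\rm BS}'$ precisely.
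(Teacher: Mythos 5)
Your overall route is the same as the paper's: reduce to the one--dimensional problem (\ref{1dimJX}), locate the minimizer with Lemma~\ref{lemma:1}, and substitute the parametrizations $u_*=\sinh\xi/\xi$ and $u_*=\sin(2\lambda)/(2\lambda)$ into the stationarity condition (\ref{eq:ust}) and into the objective. That bookkeeping is all correct, and your explicit treatment of $|y|=\tfrac12$ (via $u_*=1$, or equivalently via Proposition~\ref{prop:Iprop} together with the symmetry (\ref{JXsymm2})) is in fact more complete than the printed proof, which does not discuss Case~3.

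The genuine problem is the step you yourself single out as the non--routine one: the claim that $\mathcal{J}_{\rm BS}$ is convex because $\mathcal{J}_{\rm BS}'$ is monotone increasing. This is false. On the branch $u>1$ one has $\mathcal{J}_{\rm BS}'(u)=\tfrac12\,\xi^2/\cosh^2(\xi/2)$ with $u=\sinh\xi/\xi$, and this tends to $0$ as $u\to\infty$ (consistent with the growth $\mathcal{J}_{\rm BS}(u)\sim\tfrac12\log^2(2u)$ underlying Proposition~\ref{prop:JXasympt}); so $\mathcal{J}_{\rm BS}'$ rises from $0$ at $u=1$, peaks, and decays back to $0$, i.e.\ $\mathcal{J}_{\rm BS}$ is eventually concave. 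The full objective in (\ref{1dimJX}) is then also non-convex for large $u$, since the convex piece $ay^2/u+ay+au/4$ contributes only $2ay^2/u^3$ to the second derivative, which is eventually dominated by the negative $\mathcal{J}_{\rm BS}''$. Hence convexity cannot be used to get uniqueness of the critical point, nor to identify a stationary point with the global minimizer, nor to guarantee that (\ref{c:eq:xi}) and (\ref{c:eq:lam}) have a single admissible root as the corollary implicitly asserts. Coercivity still gives an interior minimizer satisfying (\ref{eq:ust}), and Lemma~\ref{lemma:1} still places it on the correct branch, so the substitution step survives; but you need to replace the convexity argument by a genuine uniqueness argument for the root of the first--order condition (and note that the ``left side increasing'' comparison must itself be handled with care, for exactly the reason above). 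The paper's own proof simply invokes Lemma~\ref{lemma:1} and stationarity without any convexity claim; you should either do likewise or supply a correct root-uniqueness argument --- the convexity statement would not survive being ``spelled out precisely'' as you propose.
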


\begin{proof}
Case 1. $|y| > \frac12$. For this case $u_*>1$ so we use the corresponding branch of $\mathcal{J}_{\rm BS}(u)$, with $u_* = \frac{\sinh\xi}{\xi}$. The value of $\xi$ is determined by expressing (\ref{eq:ust})
as an equation for $\xi$. This reproduces (\ref{c:eq:xi}).
The rate function is obtained by substituting $u_*$ into (\ref{1dimJX}). This reproduces (\ref{c:JX1}).

Case 2. $|y| < \frac12$. For this case $0 < u_* < 1$ and we use the second branch of 
$\mathcal{J}_{\rm BS}(u)$, with $u_* = \frac{\sin(2\lambda)}{(2\lambda)}$. The equation for $\lambda$
is obtained by substituting the expression for $u_*$ into (\ref{eq:ust}). This reproduces (\ref{c:eq:lam}).
Proceeding in a similar way we get the rate function by substituting $u_*$ into (\ref{1dimJX}),
which reproduces (\ref{c:JX2}).
\end{proof}

\begin{proposition}\label{prop:Isymm}
The rate function in the uncorrelated case satisfies the symmetry relation
%(\ref{Isymm})
\begin{equation}
J_X(y;a,0) - J_X(-y;a,0) = 2ay\,.
\end{equation}
\end{proposition}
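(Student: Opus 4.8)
The plan is to exploit the one-dimensional variational representation of the rate function in the uncorrelated case, Eq.~(\ref{1dimJX}), namely
\begin{equation}
\mathcal{J}_X(y;a,0) = \inf_{u>0}\left\{\mathcal{J}_{\rm BS}(u) + \frac{a}{u}\left(y + \tfrac12 u\right)^2\right\}.
\end{equation}
The key algebraic observation is that expanding the square gives
\begin{equation}
\frac{a}{u}\left(y + \tfrac12 u\right)^2 = \frac{a y^2}{u} + a y + \frac{a u}{4},
\end{equation}
so the only part of the $y$-dependence that is neither even in $y$ nor coupled to $u$ is the linear term $ay$, which factors out of the infimum.

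Concretely, I would introduce the auxiliary quantity
\begin{equation}
G(t) := \inf_{u>0}\left\{\mathcal{J}_{\rm BS}(u) + \frac{a t}{u} + \frac{a u}{4}\right\}, \qquad t \geq 0,
\end{equation}
which depends on $y$ only through $t = y^2$. Substituting the expansion above into (\ref{1dimJX}) and pulling the $u$-independent term $ay$ out of the infimum yields $\mathcal{J}_X(y;a,0) = G(y^2) + ay$, and in the same way $\mathcal{J}_X(-y;a,0) = G(y^2) - ay$. Subtracting the two identities gives $\mathcal{J}_X(y;a,0) - \mathcal{J}_X(-y;a,0) = 2ay$, which is the claim. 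As a sanity check at the boundary value $|y| = \tfrac12$, Corollary~\ref{corr:JXrho0} gives $\mathcal{J}_X(\tfrac12;a,0) - \mathcal{J}_X(-\tfrac12;a,0) = a - 0 = 2a\cdot\tfrac12$, consistent with the formula.

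There is essentially no analytic obstacle here; the only point requiring a word of care is that the infimum in (\ref{1dimJX}) is well-defined and attained, so that the extraction of the $u$-independent term is legitimate — this is guaranteed by Lemma~\ref{lemma:1} (or directly by the explicit solution in Corollary~\ref{corr:JXrho0}). As an independent verification one could also argue from the closed forms (\ref{c:JX1})–(\ref{c:JX2}): the defining equations (\ref{c:eq:xi}) and (\ref{c:eq:lam}) for $\xi$ and $\lambda$ involve $y$ only through $y^2$, so the optimal $\xi$ and $\lambda$ are unchanged under $y \mapsto -y$; expanding the squared brackets in (\ref{c:JX1})–(\ref{c:JX2}) then shows that the rate functions differ precisely by the cross term $\frac{a\xi}{\sinh\xi}\cdot y\cdot\frac{\sinh\xi}{\xi} = ay$ in Case~1 (respectively $\frac{2a\lambda}{\sin 2\lambda}\cdot y\cdot\frac{\sin 2\lambda}{2\lambda} = ay$ in Case~2), so once more $\mathcal{J}_X(y;a,0) - \mathcal{J}_X(-y;a,0) = 2ay$.
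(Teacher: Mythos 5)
Your proof is correct and rests on the same observation as the paper's: the $y$-dependence of the variational problem $\inf_{u>0}\{\mathcal{J}_{\rm BS}(u)+\frac{a}{u}(y+\frac12 u)^2\}$ is even in $y$ apart from the additive term $ay$, which is exactly why the paper's optimizer $u_*(y)$ is even in $y$ and the difference of the two quadratic terms at the common $u_*$ equals $2ay$. Your variant of pulling the $u$-independent term $ay$ out of the infimum before optimizing is a marginally cleaner packaging (it does not even require the infimum to be attained), but it is not a different route.
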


\begin{proof}
This follows by noting that the optimizer $u_*(y)$, the solution of the equation (\ref{eq:ust}) is a symmetric function in $y$.  If $u_*(y)$ is a solution of this equation
for a certain $y$, it will be also a solution for $-y$. Thus we have
\begin{equation}
J_X(y;a,0) - J_X(-y;a,0) = \frac{a}{u_*} \left(y+\frac12 u_*\right)^2 - \frac{a}{u_*}\left(-y + \frac12 u_*\right)^2 = 2ay\,.
\end{equation}
\end{proof}

%%%%%%%%%%%%%%%%%%%%%%%%%%%%%%%%%%%%%%%%%%%%%%%%%%%%%%%%%%%%%%%%%%

\subsection{Asymptotic expansion for $\mathcal{J}_X(x;a,0)$}

We derive here the asymptotic expansion for the rate function
$\mathcal{J}_X(x;a,0)$ for very large argument $|x| \to \infty$.

\begin{proposition}\label{prop:JXasympt}
As $x\to \infty$ the rate function $\mathcal{J}_X(x;a,0)$
has the expansion
\begin{equation}\label{Jlargex}
\mathcal{J}_X(x;a,0) = 2a x + \frac12 \log^2(2x) 
+ \log(2x)\log(2\log(2x)) 
- \log(2x) + O(\log\log x).
\end{equation}
\end{proposition}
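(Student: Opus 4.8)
The plan is to start from the closed-form representation of the rate function in Corollary~\ref{corr:JXrho0}. Since $x\to\infty$ we are certainly in the regime $x>\tfrac12$, so Case~1 applies and, in the notation there,
\[
\mathcal{J}_X(x;a,0) = \mathcal{J}_{\rm BS}(u_*) + \frac{a}{u_*}\Bigl(x+\tfrac12 u_*\Bigr)^2,\qquad u_* = \frac{\sinh\xi}{\xi},
\]
where $u_*=u_*(x)$ is the minimizer of the one-dimensional problem (\ref{1dimJX}), equivalently the solution of $\mathcal{J}_{\rm BS}'(u) = a\bigl(\tfrac{x^2}{u^2}-\tfrac14\bigr)$, see (\ref{eq:ust}). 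The first step is to pin down the growth of $u_*$: by Lemma~\ref{lemma:1} $u_*\to\infty$, and since $\mathcal{J}_{\rm BS}'(u)=\tfrac12\xi^2/\cosh^2(\xi/2)\to 0$ as $u\to\infty$, the defining equation forces $\tfrac{x^2}{u_*^2}-\tfrac14\to 0$, hence $u_*=2x(1+o(1))$.

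Next I would sharpen this. Using $\cosh^2(\xi/2)\sim\tfrac12\sinh\xi=\tfrac12 u_*\xi$ for large $\xi$ gives $\mathcal{J}_{\rm BS}'(u_*)\sim \xi/u_*$; substituting into the defining equation and solving the resulting quadratic in $u_*$ yields $u_*=2x-\tfrac{2\xi}{a}+O\bigl(x^{-1}(\log x)^2\bigr)$. Feeding this into the quadratic term, $\frac{a}{u_*}(x+\tfrac12u_*)^2=\frac{ax^2}{u_*}+ax+\frac{au_*}{4}$, the two $\pm\tfrac{\xi}{2}$ contributions cancel and one is left with $2ax+o(1)$, which is the leading term of (\ref{Jlargex}).

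It then remains to expand $\mathcal{J}_{\rm BS}(u_*)=\tfrac12\xi^2-\xi\tanh(\xi/2)$. Since $\tanh(\xi/2)=1-2e^{-\xi}+\cdots$ and $\xi e^{-\xi}=o(1)$, this equals $\tfrac12\xi^2-\xi+o(1)$, so everything reduces to the asymptotics of $\xi$. The relation $u_*=\sinh\xi/\xi$ with $u_*\sim 2x$ gives $e^{\xi}\sim 4x\xi$, i.e.\ $\xi=\log(4x\xi)+\log\bigl(1-\tfrac{\xi}{ax}\bigr)$; iterating this fixed point (equivalently, reading off the $W_{-1}$-branch asymptotics of the Lambert $W$ function from $-\xi e^{-\xi}=-\tfrac{1}{4x}(1+o(1))$) gives $\xi=\log(4x)+\log\log(4x)+O\!\bigl(\tfrac{\log\log x}{\log x}\bigr)$. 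Substituting into $\tfrac12\xi^2-\xi$, collecting the terms of size $\log^2 x$, $\log x\,\log\log x$ and $\log x$, and re-expressing $\log(4x)$ and $\log\log(4x)$ in terms of $\log(2x)$ and $\log(2\log(2x))$ (absorbing into the error every product of two ``double-log'' factors) reproduces $\tfrac12\log^2(2x)+\log(2x)\log(2\log(2x))-\log(2x)$ up to the stated remainder.

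The main obstacle is the error bookkeeping in the implicit asymptotics of $\xi(x)$: the iteration must be carried far enough to capture every term of size comparable to $\log(2x)$ — this is needed both for the $-\log(2x)$ term and for the passage from $\log\log(4x)$ to $\log(2\log(2x))$ — while one must also verify that the next-order correction to $u_*$ and the $\tfrac12(\log\log x)^2$ piece of $\tfrac12\xi^2$ really are of lower order; this is the delicate point that governs the precise size of the remainder in (\ref{Jlargex}).
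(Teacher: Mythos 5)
Your proposal is correct and follows essentially the same route as the paper: both start from the closed form in Corollary~\ref{corr:JXrho0} (equivalently the one-dimensional problem (\ref{1dimJX})), identify $u_*=\sinh\xi/\xi\sim 2x$ so that the quadratic term contributes $2ax+o(1)$, and obtain the logarithmic terms by asymptotically inverting $e^{\xi}\sim 4x\xi$ and substituting into $\tfrac12\xi^2-\xi\tanh(\xi/2)$ (the paper outsources this last inversion to Prop.~13 of \cite{PZSIFIN} with $K/S_0\to 2x$, while you redo it via the fixed-point iteration and track the $-2\xi/a$ correction to $u_*$ explicitly). Your closing concern about bookkeeping is apt but affects the paper equally: the expansion of $\mathcal{J}_{\rm BS}$ produces a $3\log^2(2\log(2x))$ term, so the remainder is really $O((\log\log x)^2)$ rather than $O(\log\log x)$ as displayed.
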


\begin{proof}
We use the explicit form for the rate function $\mathcal{J}_X(x;a,0)$ given in
Corollary~\ref{corr:JXrho0}. This has two branches, for $|x|>\frac12$ and
$|x|<\frac12$. We are interested in $x\gg 1$ so we give below only
the result for $x>\frac12$ for ease of reference.
\begin{equation}\label{JXsol1}
\mathcal{J}_X(x;a,0) = \frac12\xi^2 - \xi \tanh(\xi/2) + a \frac{\xi}{\sinh\xi}
\left( x + \frac{1}{2\xi} \sinh\xi \right)^2\,,
\end{equation}
where $\xi$ is the solution of the equation
\begin{equation}\label{eq:xi1}
\frac{1}{4a} \frac{\xi^2}{\cosh^2(\xi/2)} + \frac18 - \frac12 x^2 \frac{\xi^2}
{\sinh^2 \xi} = 0\,.
\end{equation}

The first two terms in (\ref{JXsol1}) correspond to $\mathcal{J}_{\rm BS}(y)$ in
(\ref{JXdef}). The asymptotics of $\mathcal{J}_{\rm BS}(y)$ for $y\to \infty$
was obtained in Proposition 13 in \cite{PZSIFIN}. We will follow a similar
approach to obtain the asymptotics of $\mathcal{J}_{X}(x;a,0)$ for $x\to \infty$.

The strategy will be to invert the equation (\ref{eq:xi1}) for $x \gg 1$ and
use the resulting expansion for $\xi(x)$ into (\ref{JXsol1}).
First we write the equation (\ref{eq:xi1}) for $\xi$ as
\begin{eqnarray}\label{eq:xi2}
&& \frac{1}{2a} \frac{\sinh^2\xi}{\cosh^2(\xi/2)} + \frac14 \frac{\sinh^2\xi}{\xi^2}
= x^2\,, \\
\label{eq:xi3}
&& \frac{2}{a} \sinh^2(\xi/2) + \frac14 \frac{\sinh^2\xi}{\xi^2}
= x^2\,, \\
\label{eq:xi4}
&& \sinh^2(\xi/2)\left( \frac{2}{a} + \frac{\cosh^2\xi}{\xi^2} \right)
= x^2 \,.
\end{eqnarray}

We solve this equation for $\xi$ as $x\to \infty$ using asymptotic inversion,
see for example Sec.1.5 in \cite{Olver}. 
We first write the equation (\ref{eq:xi4}) for $\xi$ as 
\begin{equation}\label{eq:xi5}
e^\xi \left(1 - e^{-\xi}\right)^2 \left( \frac{1}{2a} + \frac{1}{16\xi^2}
e^\xi \left(1 + e^{-\xi}\right)^2 \right) = x^2\,.
\end{equation}
Take logs of both sides
\begin{equation*}
\xi + 2 \log(1-e^{-\xi}) + \xi + 2 \log(1+e^{-\xi}) - 2 \log(4\xi)
+ \log\left( 1 + \frac{8\xi^2 e^{-\xi}}{a(1+e^{-\xi})^2} \right) = \log x^2\,,
\nonumber
\end{equation*}
or equivalently
\begin{equation*}
2\xi =  \log x^2 - 2 \log\left(1-e^{-2\xi}\right) + 2 \log(4\xi) -
\log\left( 1 + \frac{8\xi^2 e^{-\xi}}{a(1+e^{-\xi})^2}\right)\,.
\end{equation*}

As $x\to \infty$, this is approximated as $2\xi=\log x^2 + O(\log \xi)$.  
This estimate can be improved by iteration, starting with this first order
approximation and solving for $\xi^{(i+1)}$ by inserting the previous 
iteration on the right-hand side. The first two iterations are
\begin{eqnarray}
&& 2\xi^{(1)} = \log x^2 + O(\log\log x^2)\,, \\
&& 2\xi^{(2)} = \log x^2 + 2 \log(2 \log x^2) + 
O\left(\frac{\log\log x^2}{\log x^2}\right)\,.
\end{eqnarray}
To this order, the dependence on $a$ is of higher order. This means that 
we can approximate the equation (\ref{eq:xi4}) for $\xi$ 
with $\frac{\sinh \xi}{\xi}=2x$ (by neglecting the $a$ term)
and we can read off the solution from the Prop.~13 in \cite{PZSIFIN}
by replacing $K/S_0 \to 2x$.

Thus we get the 
expansion of the first 2 terms in the rate function (\ref{JXsol1})
(given by $\mathcal{J}_{\rm BS}(y_*)$) from Prop.~13 in \cite{PZSIFIN}
\begin{align}
\mathcal{J}_{\rm BS}(y_*) &=
\frac12 \log^2 (2x) + \log(2x) \log(2\log(2x)) - \log(2x)\\
& 
\qquad + 
3 \log^2(2\log(2x)) - 2 \log(2\log(2x)) + O(\log^{-1}(2x))\,.
\nonumber
\end{align}
The second term in (\ref{JXsol1}) is
\begin{equation}
a \frac{\xi}{\sinh \xi} \left(x + \frac{1}{2\xi} \sinh\xi\right)^2 = 
\frac{a}{2x} \left(x +  \frac12 2x\right)^2 = 2ax\,.
\end{equation}
Adding them gives
\begin{equation*}
\mathcal{J}_X(x;a,0) = 2ax + \frac12 \log^2 (2x) + 
\log(2x) \log(2\log(2x))
- \log(2x) + O(\log\log x) \,.\nonumber
\end{equation*}
This completes the proof of Eq.~(\ref{Jlargex}).
\end{proof}

%%%%%%%%%%%%%%%%%%%%%%%%%%%%%%%%%%%%%%%%%%
\section*{Acknowledgements}
We are grateful to Alan Lewis for communicating unpublished results, 
and for kindly providing benchmark numerical evaluations
in the SABR model which were used for comparison with the asymptotic 
results. 
Lingjiong Zhu is grateful to the partial support from NSF Grant DMS-1613164.


\begin{thebibliography}{99}

\bibitem{SABRbook}
Antonov, A., M.~Konikov and M.~Spector, 
Modern SABR analytics,
Springer, New York 2019

\bibitem{AntonovWings}
Antonov, A., M.~Konikov and M.~Spector, 
SABR spreads its wings.
\textit{Risk}, August 2013.

\bibitem{BBF} 
Berestycki, H., J.~Busca and I.~Florent,
Computing the implied volatility in stochastic volatility models,
\textit{Commun.~Pure~Appl.~Math.} 
\textbf{57}, 1352-1373 (2004).

\bibitem{BCM} 
Bernard, C., Z.~Cui and D.~McLeish,
On the martingale property in stochastic volatility models based on 
time-homogeneous diffusions,
\textit{Math.~Finance} 
\textbf{27}(1), 194-223 (2017)

\bibitem{CaiSongChen}
Cai, N., Y.~Song and N.~Chen,
Exact simulation of the SABR model,
\textit{Operations Research}
\textbf{65}(4): 931-951 (2017).

\bibitem{CCMN}
Constantinescu, R., N.~Costanzino, A.L.~Mazzucato and V.~Nistor,
Approximate solutions to second order parabolic equations:
I. Analytical estimates,
\textit{J.~Math.~Phys.} \textbf{51}, 103502 (2010).

\bibitem{Dembo} 
Dembo, A. and O.~Zeitouni. 
\emph{Large Deviations Techniques and Applications}. 
2nd Edition, Springer, New York, 1998.

\bibitem{Deuschel}
Deuschel, J.~D., P.K.~Friz, A.~Jacquier and S.~Violante,
Marginal density expansions for diffusions and stochastic volatility,
part I: Theoretical Foundations,
\textit{Comm.~Pure Appl.~Math.} 
\textbf{67}, 40-82 (2014).

\bibitem{DonskerVaradhan}
Donsker, M.D. and S. R. S. Varadhan. (1975).
Asymptotic evaluation of certain Markov process expectations for large time. I.
\textit{Comm. Pure Appl. Math.}
\textbf{28}, 1-47.

\bibitem{FJ0} 
Forde, M. and A.~Jacquier, 
Small-time asymptotics for implied volatility under the Heston model, 
\textit{Int.~J.~Theor.~Appl.~Finance} 
\textbf{12}, 861-876 (2009).

\bibitem{FJ1} 
Forde, M. and A.~Jacquier, 
The large-maturity smile for the Heston model, 
\textit{Finance and Stochastics} 
\textbf{15}, 755-780 (2011).

\bibitem{FJ2} 
Forde, M., A.~Jacquier and R.~Lee, 
The small-time smile and term structure of implied volatility under the
Heston model.
\textit{SIAM J.~Finan.~Math.}
\textbf{3}, 690-708 (2012).

\bibitem{Forde1} 
Forde, M. and A.~Pogudin, 
The large-maturity smile for the SABR and CEV-Heston models, 
\textit{Int.~J.~Th.~Appl.~Finance}
\textbf{16}(8), 1350047 (2013).


\bibitem{FordeLargeTime} 
Forde, M., 
Large-time asymptotics for an uncorrelated stochastic volatility model.
\textit{Statistics and Probability Letters}
\textbf{81} 1230-1232 (2011).

\bibitem{FordeKumar} 
Forde, M. and R.~Kumar, 
Large-time option pricing using the Donsker-Varadhan LDP - correlated
stochastic volatility with stochastic interest rates and jumps,
\textit{Annals  of Applied Probability}
\textbf{6} 3699-3726 (2016).

\bibitem{GaoLee} 
Gao, K. and R.~Lee, 
Asymptotics of implied volatility to arbitrary order, 
\textit{Finance and Stochastics}
\textbf{18}(2), 349-392 (2014).

\bibitem{GHN} 
Grischenko, O., X.~Han and V.~Nistor,
A volatility-of-volatility expansion of the option prices in the SABR 
stochastic volatility model, 2019.

\bibitem{GS1} 
Gulisashvili, A. and E.~M.~Stein, 
Asymptotic behavior of the distribution of the stock price in models with
stochastic volatility: The Hull-White model, 
\textit{C.~R.~Acad.~Sci. Paris, Ser.~I} 
\textbf{343}, 519-523 (2006).

\bibitem{GSHW} 
Gulisashvili, A. and E.~M.~Stein, 
Implied volatility in the Hull-White model, 
\textit{Math.~Finance} 
\textbf{19}, 303-327 (2009).

\bibitem{GS2} 
Gulisashvili, A. and E.~M.~Stein, 
Asymptotic behavior of the distribution of the stock price in models with
stochastic volatility, I,
\textit{Math.~Finance} 
\textbf{20}, 447 (2010).

\bibitem{Gulbook} 
Gulisashvili, A.  
Analytically Tractable Stochastic Stock Price Models.
Springer Finance, Springer, New York, 2014.

\bibitem{Guyon} 
Guyon, J.
Euler scheme and tempered distributions, 
\textit{Stoch. Proc. and their Applications}
\textbf{116} 877-904 (2006).

\bibitem{SABR} 
Hagan, P.~S., Kumar, D., Lesniewski, A.~S. and Woodward, D.~E.,
Managing smile risk, 
\textit{Wilmott Magazine} Sept.~2002.

\bibitem{HLbook}
Henry-Labordere, P.
Analysis, Geometry and Modeling in Finance: Advanced Methods in Option Pricing.
Chapman \& Hall, 2009.

\bibitem{HW1987}
Hull, J. and A.~White, 
Pricing of options on assets with stochastic volatilities,
\textit{J.~Finance} 
\textbf{42}, 281-300 (1987).

\bibitem{Islah}
Islah, O.
Solving SABR in exact form and unifying it with the LIBOR market model,
preprint 2009.

\bibitem{Jourdain} 
Jourdain, B.,
Loss of martingality in asset price models with 
lognormal stochastic volatility models, 2004.

\bibitem{Lee}
Lee, R.,
The moment formula for implied volatility at extreme strikes,
\textit{Math.~Finance} 14, 469-480 (2004).

\bibitem{LeeReview}
Lee, R.,
Implied and local volatilities under stochastic volatility,
\textit{Int.~J.~Theor.~Appl.~Finance} \textbf{4}, 45-89 (2001).

\bibitem{Leitao}
Leitao, A., L.A.~Grzelak and C.W.~Oosterlee.
On an efficient multiple time step Monte Carlo simulation of the SABR model,
\textit{Quantitative Finance} \textbf{8}, 1-17 (2017).

\bibitem{Lewis}
Lewis, A.,
Option valuation under stochastic volatility,
Finance Press, 2000.

\bibitem{Lewis2002}
Lewis, A.,
The mixing approach to stochastic volatility and jump models,
\textit{Wilmott Magazine} March (2002), 24-45.

\bibitem{Lewis2017}
Lewis, A.,
Option valuation under stochastic volatility, vol.~2.
Finance Press, 2016.

\bibitem{LewisUnpublished}
Alan Lewis, unpublished results (personal communication).

\bibitem{LM} 
Lions, P.~L. and Musiela, M.,
Correlations and bounds for stochastic volatility models,
\textit{Annales de l'Institut Henri Poincar\'e (C) Non Linear Analysis}
\textit{24}, 1-16 (2007).

\bibitem{Olver}
Olver, F.W.J.
Introduction to Asymptotics and Special Functions,
Academic Press, New York 1974.

\bibitem{PaulotSABR} 
Paulot, L.,
Asymptotic Implied Volatility at the Second Order with Application
to the SABR model, 
arXiv:0906.0658[q-fin.PR].

\bibitem{Paulot} 
Paulot, L.,
Arbitrage-free pricing before and beyond probabilities, 
arXiv:1310.1102[q-fin].

\bibitem{IME} 
Pirjol, D. and L.~Zhu,
Discrete sums of geometric Brownian motions, annuities and Asian options,
\textit{Insurance: Mathematics and Economics}
\textbf{70}, 19-37 (2016).


\bibitem{PZSIFIN} 
Pirjol, D. and L.~Zhu,
Short maturity Asian options in local volatility models,
\textit{SIAM J.~Finan.~Math. }
\textbf{7}(1), 947-992 (2016).

\bibitem{PZStochVol}
Pirjol, D. and L.~Zhu,
Asymptotics for the Euler-discretized Hull-White stochastic volatility model. 
\textit{Methodology and Computing in Applied Probability}. 
\textbf{20}, 289-331 (2018).

\bibitem{PZDiscreteAsian}
Pirjol, D. and L.~Zhu,
Asymptotics for the average of the geometric Brownian motion and Asian options. 
\textit{Advances in Applied Probability}, 
\textbf{49}, 446-480 (2017).

\bibitem{HWexp}
Pirjol, D.,
Asymptotic expansion for the Hartman-Watson distribution,
arXiv:2001.09579[math.PR]

\bibitem{Touzi}
Renault, E. and N.~Touzi, 
Option hedging and implied volatilities in a stochastic volatility model,
\textit{Math.~Finance} 
\textbf{6}, 279-302 (1996).

\bibitem{Rogers}
Rogers, L.~C. and M.~Tehranchi, 
Can the implied volatility surface move by parallel shifts?
\textit{Finance and Stochastics} 
\textbf{14}, 235-248 (2010).

\bibitem{WLW}
Wang, T.~H., P.~Laurence and S.~L.~Wang,
Generalized uncorrelated SABR models with a high degree of symmetry.
\textit{Quant.~Finance} 
\textbf{10}, 663-679 (2010).

\bibitem{Yor}
Yor, M.,
On some exponential functionals of the Brownian motion,
\textit{J.~Appl.~Prob.}
\textbf{24}(3), 509-531.

\end{thebibliography}
\end{document}